\documentclass[12pt]{article}
\usepackage{amsthm,amsfonts,amsmath, amscd, bbold, bm}
\usepackage{mathrsfs}
\usepackage{accents}
\usepackage{color}
\usepackage{tikz}

\usepackage{mathrsfs}
\usepackage{amscd}
\usepackage[active]{srcltx}
\usepackage{verbatim}
\setlength{\textwidth}{155mm} 
\setlength{\textheight}{208mm} 
\setlength{\hoffset}{-8mm}

\newcommand{\version}{June 4, 2018}

\usepackage{amsthm,amsfonts,amsmath, amscd}
\usepackage{amssymb,amsmath, dsfont}
\usepackage{graphicx}
\swapnumbers
                                %
                                %

\pagestyle{myheadings}

                                %
\theoremstyle{plain}

\newtheorem{thm}{THEOREM}[section]
\newtheorem{lm}[thm]{LEMMA}
\newtheorem{cl}[thm]{COROLLARY}

\newtheorem{exam}[thm]{EXAMPLE}
\theoremstyle{definition}
\newtheorem{defi}[thm]{DEFINITION}
\theoremstyle{definition}
\newtheorem{remark}[thm]{Remark}

\newcommand{\cH}{{\mathcal{H} }}

\newcommand{\one}{{{\bf 1}}}

\newcommand{\cA}{{\mathord{\cal A}}}

\newcommand{\dd}{{\, \rm d}}
\newcommand{\tr}{{\rm Tr}}

\renewcommand{\|}{{\Vert}}

\numberwithin{equation}{section}
\pagestyle{myheadings} \sloppy

\def\H{\mathcal{H}}
\def\K{\mathcal{K}}
\def\fH{\mathfrak{H}}

\def\dd{{\rm d}}

\def\SS{\mathcal{S}}

\def\ncht{\left(\begin{matrix} N\cr 2\cr \end{matrix}\right)}

\newcommand{\cP}{{\mathord{\mathscr P}}}
\newcommand{\cQ}{{\mathord{\mathscr Q}}}

\newcommand{\cL}{{\mathord{\mathscr L}}}
\newcommand{\cK}{{\mathord{\mathscr K}}}
\newcommand{\cB}{{\mathord{\mathscr B}}}
\newcommand{\cU}{{\mathord{\mathscr U}}}
\newcommand{\cT}{{\mathord{\mathscr T}}}
\newcommand{\cC}{{\mathord{\mathscr C}}}
\newcommand{\cX}{{\mathord{\mathscr X}}}

\newcommand{\cD}{{\mathord{\mathscr D}}}

\newcommand{\cJ}{{\mathcal{J}}}
\newcommand{\aC}{{\mathcal{C}}}

\newcommand{\aK}{{\mathcal{K}}}

\newcommand{\bma}{{\bm{\alpha}}}
\newcommand{\bmb}{{\bm{\beta}}}

\newcommand{\R}{{\mathord{\mathbb R}}}

\newcommand{\C}{{\mathord{\mathbb C}}}

\newcommand{\N}{{\mathord{\mathbb N}}}
\newcommand{\spec}{{\rm{Spec}}(H_N)}
\newcommand{\Dens}{{\mathfrak S}}

\newcommand{\EC}{{\rm{E}}_{\cC_N}}
\newcommand{\bk}{\rangle \langle }

\begin{document}


\def\tr{{\rm Tr}}

\title{Chaos, Ergodicity, and Equilibria in a  Quantum  Kac Model }

\author{\vspace{5pt} Eric A. Carlen$^{1,2}$, Maria C. Carvalho$^{1,2}$   and
Michael P. Loss$^{2}$ \\
\vspace{5pt}\small{$1.$ Department of Mathematics, Hill Center,}\\[-6pt]
\small{Rutgers University,
110 Frelinghuysen Road
Piscataway NJ 08854-8019 USA}\\
\vspace{5pt}\small{$2.$ CMAF-CIO, University of Lisbon, P 1749-016 Lisbon, Portugal}\\
\vspace{5pt}\small{$3.$ School of Mathematics,
Georgia Tech } \\[-6pt]
\small{Atlanta GA 80332
  }\\
 }
\date{\version}
\maketitle 

\tableofcontents

\let\thefootnote\relax\footnote{
\copyright \, 2018 by the authors. This paper may be
reproduced, in its
entirety, for non-commercial purposes.}

\begin{abstract}
We introduce  quantum versions of the Kac Master Equation and the Kac Boltzmann Equation. We study the steady states of each of these 
equations, and prove a propagation of chaos theorem that relates them.   The Quantum Kac Master Equation (QKME) describes a quantum 
Markov semigroup 
$\cP_{N,t}$, while the Kac Boltzmann Equation describes a non-linear evolution of density matrices on the single particle state space.   
All of the steady states of the $N$ particle quantum system described by the QKME are separable, and thus the evolution 
described by the QKME is entanglement breaking. The results set the stage for a quantitative study of approach to equilibrium in quantum 
kinetic theory, and a quantitative study of  the rate of destruction of entanglement in a class of quantum Markov semigroups describing binary interactions.

\end{abstract}

\medskip
\leftline{\footnotesize{\qquad Mathematics subject
classification numbers: 35Q20, 47L90, 46L57}}
\leftline{\footnotesize{\qquad Quantum Master Equation, Completely Positive, 
Equilibrium}}

\section{Introduction} \label{intro}

\subsection{The classical Kac model}

We begin by briefly recalling some essential features of the classical Kac Master Equation \cite{K56} which models a system of 
$N$ particles of mass $m$ with  one-dimensional velocities $v_j$, $j=1,\dots,N$ in $\R$ that   
interact only through binary collisions. At each instant any particle may collide with any other. 
That is, the Kac model is a mean-field model of binary molecular collisions  in one region of physical space.  
Between collisions there is no interaction between the particles, and the binary collisions conserve energy. Consequently, the total energy of the 
system is the sum of the individual kinetic energies:
\begin{equation}\label{clkac1}
E(v_1,\dots,v_N) = \frac{m}{2} \sum_{j=1}^N |v_j|^2\ .
\end{equation}

When a binary collision occurs, for some pair $(i,j)$, $1\leq i < j \leq N$,
$v_i$ and $v_j$ change to $v_i^*$ and $v_j^*$ respectively,  while for all other $k\in \{1,\dots,N\}$, $v_k$ is unchanged. 
Since the energy is conserved, $v_i^2+ v_j^2 = v_i^{*2}+ v_j^{*2}$. It follows that there is a one parameter family of kinematically possible collisions between particles $i$ and $j$. The vector $R_{i,j,\theta}\vec v$  of 
post-collisional velocities is related to the vector of  pre-collisional  velocities $\vec v$ by
\begin{equation}\label{clkac2}
(R_{i,j,\theta}\vec v)_k  = \begin{cases}  v_k & k \neq i,j\\
 \cos \theta v_i - \sin \theta v_j & k= i\\
 \sin \theta v_i  + \cos v_j & k=j
\end{cases}
\end{equation}
where $\theta\in [-\pi,\pi]$ is the {\em collision parameter}. 
For $E>0$, define $\SS_E := \{\vec v\ :\  E(\vec v) = E\}$.  Note that for all $1\leq i< j \leq N$ and all $\theta\in [-\pi,\pi]$, $R_{i,j,\theta}$ is an invertible transformation from  $\SS_E$ onto $\SS_E$; which is to say that the collisions 
conserve energy and are reversible. 

The {\em Kac Walk} on $\SS_E$ is the Markov jump process on $\SS_E$ in which at each step, a pair 
$(i,j)$, $1\leq i < j \leq N$, is chosen uniformly at random, along with a $\theta\in [-\pi,\pi]$, also chosen uniformly at random, and then the system jumps from the state $\vec v\in \SS_E$ prior to the collision to the new post-collisional state $R_{i,j,\theta}\vec v$.

The Kac Master Equation  is the evolution equation describing the continuous time version of the Kac Walk in which the jump times arrive in a Poisson stream with mean waiting time $1/N$, so that the mean waiting time for collisions involving any particular particle is independent of $N$.


%
%

Now fix the energy to be $N$, so that the average energy per particle is $1$, independent of $N$. Let 
$\cL_N$ denote the generator of this process on $\SS_N$ so that if the initial probability density for $\vec v$
is $F$, the density at time $t$ is $e^{t\cL_N}F$, where
\begin{equation}\label{clkac3}
\cL_N F(\vec v) = {N}{\ncht}^{-1}\sum_{i<j} (F^{i,j}-F)\quad{\rm and}\quad 
F^{i,j}(\vec v) := \frac{1}{2\pi}\int_{-\pi}^\pi F(R_{i,j,\theta}\vec v)\dd \theta\ .
\end{equation}
The equation 
\begin{equation}\label{clkac4}
\frac{\partial}{\partial t}F(\vec v,t) = \cL_N F(\vec v,t)\ ,
\end{equation}
which describes the evolution of probability densities for $\vec v$  on $\SS_N$ with respect to the uniform 
probability measure $\sigma_N$ on 
$\SS_N$, is the {\em Kac Master Equation} (KME). It is the Kolmogorov forward equation for the continuous Kac walk on 
$\SS_N$.

One motivation for investigating the KME is that it has a rigorous connection with the non-linear 
{\em Kac-Boltzmann equation}, a one dimensional caricature of the spatially homogeneous Boltzmann equation; see 
(\ref{KBE}) below. Moreover, a more complicated version of the Kac Master equation describing random collisions that 
preserve both the energy and the momentum of particles with velocities in $\R^3$ is connected in the same way with the 
usual spatially homogenous Boltzmann equation \cite{K59}. For simplicity, we only discuss Kac's  original one dimensional caricature, 
which is perfectly adequate to set the stage for the quantum mechanical investigation.  We refer to \cite{CCL3} for a recent survey on the Kac model that focuses on questions that are relevant to the present work. 

The connection between the Kac model and the Boltzmann equation depends on Kac's notion of 
{\em chaos} and its propagation under the stochastic evolution. 

\begin{defi}[chaos]
Let $\mu$ be a probability measure on $\R$. A sequence $\{\mu^{(N)}\}_{N\in \N}$ of 
probability measures on $\SS_N$ is $\mu$ chaotic in case for each $k\in \N$,
\begin{equation*}
\lim_{N\to\infty} \int \chi(v_1, \dots, v_k){\rm d} \mu^{(N)}(v_1, \dots, v_N) = 
\int \chi(v_1, \dots, v_k) {\rm d}\mu(v_1) \cdots {\rm d} \mu(v_k)
\end{equation*}
for each bounded continuous function $\chi$ on $\R^k$. 

\end{defi}

In 1956 Kac proved \cite{K56}:

\begin{thm}[Propagation of Chaos]\label{pch}
Let $\{ F_0^{(N)} \sigma^{(N)}\}$ be a $f_0(v) {\rm d} v$--chaotic sequence. Then 
$\{ e^{t\mathcal{L}_N}F_0^{(N)} \sigma^{(N)}\}$ is a $f(v, t) {\rm d} v$--chaotic sequence and 
$f(v,t)$ is the solution of the initial value problem
\begin{equation}\label{KBE}
{\partial \over \partial t}f(v,t)  = \frac{1}{\pi} \int_{-\pi}^\pi  
 \left(\int_\R [f( v^*(\theta,t) f(w^*(\theta),t)- f(v,t)f(w,t)]  {\rm d} w\right) {\rm d} \theta 
\end{equation}
with $f(v,0)=f_0(v)$, 
where $v^*(\theta)$ and $w^*(\theta)$ are given by
rotating $(v,w)$ through an angle $\theta$:
$$v^*(\theta) = \cos\theta v + \sin\theta w \quad{\rm and}\quad w^*(\theta) = -\sin\theta v + \cos\theta w\ .$$
\end{thm}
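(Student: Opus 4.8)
The plan is to pass from the linear $N$-particle dynamics to the nonlinear equation through the hierarchy of marginals, using the compound-Poisson (Wild-sum) representation of $e^{t\cL_N}$ to make the passage to the limit rigorous without any a priori moment bounds, so that the hypothesis is handled exactly as stated. Since $\cL_N$ commutes with permutations of the $v_j$, one may symmetrize $F_0^{(N)}$ without affecting the hypothesis or the conclusion, and then every $F_k^{(N)}(t):=\Pi_k\big(e^{t\cL_N}F_0^{(N)}\big)\,\sigma^{(N)}$ is a symmetric probability measure on $\R^k$. By the very definition of chaos, the theorem amounts to showing that $F_k^{(N)}(t)$ converges weakly to $f(\cdot,t)^{\otimes k}$ for each $k$, where $f$ solves \eqref{KBE}.

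The first step is to derive the finite-$N$ BBGKY-type hierarchy for the marginals. Writing $\cL_N=\tfrac{2}{N-1}\sum_{i<j}(Q_{ij}-I)$ with $Q_{ij}F=F^{i,j}$, and splitting the sum according to how the colliding pair meets $\{1,\dots,k\}$: pairs disjoint from $\{1,\dots,k\}$ drop out of the $k$-marginal; the $\binom{k}{2}$ pairs inside contribute a term of total-variation norm $O(k^2/N)$; and the $k(N-k)$ mixed pairs contribute, by exchangeability, $\tfrac{2(N-k)}{N-1}\sum_{i=1}^{k}\mathcal{C}_i F_{k+1}^{(N)}$, where $\mathcal{C}_i$ is the bounded operator (total-variation norm $\le 2$) that rotates the pair $(v_i,v_{k+1})$, averages over the angle, and integrates out $v_{k+1}$. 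Hence
\begin{equation*}
\frac{d}{dt}F_k^{(N)}(t)=\frac{2(N-k)}{N-1}\sum_{i=1}^{k}\mathcal{C}_i F_{k+1}^{(N)}(t)+R_k^{(N)}(t),\qquad \|R_k^{(N)}(t)\|=O(k^2/N),
\end{equation*}
whose formal $N\to\infty$ limit is the Kac--Boltzmann hierarchy $\partial_t F_k=2\sum_{i=1}^{k}\mathcal{C}_i F_{k+1}$; for $k=1$ this is precisely \eqref{KBE}.

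To justify the limit I would use the Wild sum $e^{t\cL_N}=e^{-Nt}\sum_{n\ge0}\tfrac{(Nt)^n}{n!}\bar Q_N^{n}$, with $\bar Q_N=\binom{N}{2}^{-1}\sum_{i<j}Q_{ij}$ the averaged collision operator, so that $F_k^{(N)}(t)$ is an absolutely convergent sum over collision histories $\big((i_1,j_1),\dots,(i_n,j_n)\big)$. Projecting onto the first $k$ coordinates kills every history whose collisions never reach $\{1,\dots,k\}$; among the survivors, the histories in which each collision that touches the reached set joins it to a hitherto untouched particle reassemble exactly the $n$-th Duhamel term of the limiting hierarchy with data $F_k^{(N)}(0)$ --- and these are the only histories that involve finitely many particles as $N\to\infty$. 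Combined with $F_k^{(N)}(0)\rightharpoonup f_0^{\otimes k}$, this gives $F_k^{(N)}(t)\rightharpoonup F_k(t)$, the solution of the Kac--Boltzmann hierarchy with data $f_0^{\otimes k}$. The step I expect to be the main obstacle is controlling the ``recollision'' histories (a collision between two already-touched particles, or between particles that are fresh now but are forced to meet later): one bounds the probability of such a history by observing that the number $M$ of collisions relevant to $\{1,\dots,k\}$ up to time $t$ is stochastically dominated, uniformly in $N$, by a Yule-type branching process (each touched particle enters a collision at rate $2$), so $\mathbb{E}[M^2]$ is finite uniformly in $N$, while a recollision among the first $M$ relevant collisions has conditional probability $O(M^2/N)$, making the total error $O(1/N)$. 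This combinatorial bookkeeping --- in essence Kac's original argument --- is the technical heart of the proof.

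It remains to close the loop. Substituting the ansatz $F_k(t)=f(\cdot,t)^{\otimes k}$ into the hierarchy shows it is a solution exactly when $f$ solves \eqref{KBE} with $f(\cdot,0)=f_0$, because the product structure decouples the gain and loss terms at every level into the single-particle equation; and \eqref{KBE} itself is globally, uniquely solvable within probability densities by its own (globally convergent) Wild sum, which also supplies existence. Uniqueness for the hierarchy --- all that is needed to identify the limit --- follows by solving it through its Duhamel series on short time intervals and iterating, everything being controlled in total variation since the $\mathcal{C}_i$ are bounded and all iterates are probability measures; no moment estimates enter. Therefore $F_k^{(N)}(t)\rightharpoonup f(\cdot,t)^{\otimes k}$ for every $k$, i.e. $\{e^{t\cL_N}F_0^{(N)}\sigma^{(N)}\}$ is $f(\cdot,t)\,dv$-chaotic and $f$ solves \eqref{KBE}. (Should one also assume the $F_0^{(N)}$ to have uniformly bounded moments, the Wild-sum bookkeeping in the third step may be replaced by a softer argument: propagate the moment bounds, extract weak limits of the marginals, pass to the limit in the displayed hierarchy, and invoke the same uniqueness.)
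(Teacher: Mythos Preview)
The paper does not itself prove this classical statement --- it is quoted from Kac's 1956 paper as motivation --- but the paper \emph{does} prove the quantum analog (Theorem~\ref{kacprop}) by what it calls ``a word by word translation of the one given by McKean.''  That argument is organized rather differently from yours: instead of the BBGKY hierarchy of marginals, one expands $e^{t\cL_N}$ directly as a power series in $t$, introduces the operators $G_k,\Gamma_k:\cB(\H_k)\to\cB(\H_{k+1})$, and uses McKean's key observation that $\Gamma_k$ is a \emph{twisted derivation} on tensor products (the identity \eqref{shortver}).  This algebraic property is what replaces your combinatorial collision-history bookkeeping: it factorizes $\Gamma_{k+\ell-1}\cdots\Gamma_k(X_1\otimes\cdots\otimes X_k)$ into a multinomial sum of products of one-particle strings, so the product structure of the limit is read off term-by-term without ever invoking uniqueness for the hierarchy.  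The recollision error is then the explicit norm estimate \eqref{gelldiffgammaell} of Lemma~\ref{technicalestim}, which gives $\|G_{k+\ell}\cdots G_k-\Gamma_{k+\ell}\cdots\Gamma_k\|=O(\ell^{k+2}4^\ell\ell!/N)$; summability forces the restriction $t<1/4$, after which one iterates.

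Your route is essentially Kac's original probabilistic one rather than McKean's analytic one, and it is sound: the Yule-process domination of the cluster size is exactly right (each touched particle collides at rate $2$, independently of $N$), and the $O(M^2/N)$ conditional recollision bound combined with $\mathbb{E}[M^2]<\infty$ does the job.  What McKean's approach buys is that the factorization of the limit is automatic --- no separate uniqueness theorem for the hierarchy is needed --- at the price of the artificial short-time restriction $t<1/4$.  Your approach trades this: you get all $t$ directly from the branching bound (with an error growing like $e^{Ct}/N$), but you must then appeal to uniqueness of the hierarchy to identify the limit, which you correctly obtain from the bounded-operator Duhamel series on short intervals.  Both arguments are standard and correct; they simply allocate the work differently.
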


The equation in (\ref{KBE}) is the {\em Kac Boltzmann Equation} (KBE).
Theorem\ref{pch} would be of little relevance to the investigation of the Boltzmann equation were it not possible to construct $f(v){\rm d}v$-chaotic sequences for all physically relevant initial data for the Kac-Boltzmann Equation.
The following theorem shows that a construction proposed by Kac works in sufficient generality that this is indeed possible \cite{CCRLV}:

\begin{thm}[Existence of Chaotic initial data] \label{CCLRV}
Let $f$ be a probability density on $\R$ satisfying, for some $p>1$, 
$$
\int_\R f(v) v^2 {\rm d}v =1 \ , \ \int_\R f(v) v^4 {\rm d} v < \infty \ , f \in L^p(\R) 
$$
and let $\mu({\rm d}v) = f(v){\rm d} v$, and let
$[\mu^{\otimes N}]_{S^{N-1}(\sqrt N)}$ be the normalized restriction of 
$f^{\otimes N}$ 
to the sphere $\SS_N$. Then $\{ [\mu^{\otimes N}]_{S^{N-1}(\sqrt N)}\} $ is $\mu$ --chaotic.
\end{thm}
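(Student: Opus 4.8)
The plan is to reduce the chaoticity assertion to a quantitative local central limit theorem for sums of the i.i.d.\ random variables $X_j:=v_j^2$, where $v_j$ has law $\mu$.

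First I would write down an explicit formula for the $k$-particle marginal. Since $\SS_N=S^{N-1}(\sqrt N)=\{\vec v\in\R^N:\sum_{j=1}^N v_j^2=N\}$ is a level set of $\vec v\mapsto\sum v_j^2$, on which the gradient has constant length $2\sqrt N$, disintegrating $f^{\otimes N}\,\mathrm{d}\vec v$ along these level sets (co-area formula) shows that $[\mu^{\otimes N}]_{\SS_N}$ has density proportional to $\prod_{j=1}^N f(v_j)$ with respect to $\sigma_N$. Letting $g$ be the probability density of $X=v^2$ and $P_m:=g^{*m}$ the density of $X_1+\cdots+X_m$, integrating out $v_{k+1},\dots,v_N$ and using $\int\!\big(\prod_j f(v_j)\big)h\big(\sum_j v_j^2\big)\,\mathrm{d}v=\int P_k(s)h(s)\,\mathrm{d}s$ then yields, for each fixed $k$,
\[
\Pi_k[\mu^{\otimes N}]_{\SS_N}(v_1,\dots,v_k)=\Big(\prod_{j=1}^k f(v_j)\Big)\,\frac{P_{N-k}\big(N-\sum_{j=1}^k v_j^2\big)}{P_N(N)}\,,
\]
the normalization being correct because $P_k*P_{N-k}=P_N$.

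Next I would establish the local CLT for $P_m$, which is where all three hypotheses enter. The identity $\int f v^2=1$ says $X$ has mean $1$; $\int f v^4<\infty$ says $X$ has finite variance $\tau^2$; and $f\in L^p$ ($p>1$) forces $g\in L^1\cap L^p$, so by Young's convolution inequality $g^{*m_0}\in L^2$ for some finite $m_0$, whence $\widehat g^{\,m}\in L^1$ for $m\geq 2m_0$. Combined with $|\widehat g(\xi)|<1$ for $\xi\neq0$ (automatic, since $X$ has a density), Fourier inversion plus the classical Esseen-type expansion gives, uniformly in $y$,
\[
P_m(y)=\frac{1}{\sqrt{2\pi m\tau^2}}\,e^{-(y-m)^2/(2m\tau^2)}+o\big(m^{-1/2}\big)\qquad\text{and}\qquad \|P_m\|_\infty\le C\,m^{-1/2}\,.
\]
Then I would let $N\to\infty$ with $k$ and $v_1,\dots,v_k$ fixed: writing $s=\sum_{j=1}^k v_j^2$, the difference $(N-s)-(N-k)=k-s$ stays bounded, so the expansion gives $P_{N-k}(N-s)=\tfrac{1}{\sqrt{2\pi N\tau^2}}(1+o(1))$ and $P_N(N)=\tfrac{1}{\sqrt{2\pi N\tau^2}}(1+o(1))$; hence the ratio tends to $1$ and $\Pi_k[\mu^{\otimes N}]_{\SS_N}\to\prod_{j=1}^k f(v_j)$ pointwise. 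The uniform bounds $\|P_{N-k}\|_\infty\le C(N-k)^{-1/2}$ and $P_N(N)\ge cN^{-1/2}$ bound that ratio by a constant uniformly in $s\ge0$ and in large $N$, so $\Pi_k[\mu^{\otimes N}]_{\SS_N}\le C'\prod_{j=1}^k f(v_j)$, a fixed integrable function; dominated convergence (or Scheff\'e) then upgrades pointwise convergence to convergence in $L^1(\mathrm{d}v_1\cdots\mathrm{d}v_k)$, which implies $\int\chi\,\mathrm{d}\Pi_k[\mu^{\otimes N}]_{\SS_N}\to\int\chi\,\mathrm{d}\mu^{\otimes k}$ for every bounded continuous $\chi$ on $\R^k$ --- exactly $\mu$-chaos.

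The hard part will be the local CLT: proving an error that is $o(m^{-1/2})$ \emph{uniformly} in $y$ --- in particular up to the left edge $y\downarrow 0$ of the support of $P_m$ (recall $X\ge0$) --- together with the uniform bound $\|P_m\|_\infty\le Cm^{-1/2}$, using only the $L^p$-integrability of $f$. This is standard Fourier analysis (split $\widehat g^{\,m}$ into a fixed neighborhood of $0$, where one Taylor-expands $\widehat g$, and its complement, where one uses $\widehat g^{\,m}\in L^1$ and $\sup_{|\xi|\ge\delta}|\widehat g(\xi)|<1$), but it requires care, and it is the only genuinely nontrivial ingredient; a secondary point is to justify the marginal formula of the first step despite $\SS_N$ being Lebesgue-null, for which the constancy of $|\nabla\sum v_j^2|$ on $\SS_N$ and a clean disintegration argument suffice.
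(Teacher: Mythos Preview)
The paper does not actually prove this theorem: it is quoted from the reference \cite{CCRLV} (Carlen, Carvalho, Le Roux, Loss, Villani, 2010) as background for the classical Kac model, with no proof given in the present paper. So there is no ``paper's own proof'' to compare against here.

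That said, your outline is essentially the argument of \cite{CCRLV}: disintegrate $f^{\otimes N}$ along the energy spheres to get the marginal formula involving the ratio $P_{N-k}(N-s)/P_N(N)$, then prove a uniform local CLT for $P_m$ and conclude by dominated convergence. Your identification of where each hypothesis enters (mean one from $\int fv^2=1$, finite variance from $\int fv^4<\infty$, and the integrability needed for the local CLT from $f\in L^p$) is correct, and your sketch of the Fourier-analytic proof of the local CLT is the right one.

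One technical point to watch: the claim ``$f\in L^p$ forces $g\in L^p$'' is not quite right as stated. The density of $X=v^2$ is $g(x)=\tfrac{1}{2\sqrt{x}}\big(f(\sqrt{x})+f(-\sqrt{x})\big)$, and the $x^{-1/2}$ factor can spoil $L^p$-membership near $x=0$. What one actually shows (and what \cite{CCRLV} does) is that $f\in L^p$ with $p>1$ implies $g\in L^q$ for some $q>1$ depending on $p$, or more directly that $\widehat{g}\in L^r$ for some finite $r$, which is enough to get $g^{*m_0}\in L^2$ for large $m_0$ and run the rest of the argument. This is a minor repair, not a gap in the strategy.
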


Now that the relation between the KME and the KBE is clarified, we turn to the question that motivated Kac: 
Can one obtain information on the rate of relaxation to equilibrium for the KME that is useful for proving results about the rate of relaxation to equilibrium of solutions of the KBE?  

It is easy to see that under the Kac Walk, the density tends to become uniform:  For all $F\in L^2(\SS_N,\sigma_N)$.
$$\lim_{t\to\infty} e^{t\cL_N} F = 1\ .$$
This is a simple consequence of the ergodicity of the Kac Walk:  It is possible to move from any point in 
$\SS_N$ to any other in at most $N-1$ steps of the Kac Walk.  However, this sort of argument says 
nothing about the rate of convergence. The rate of convergence in $L^2(\SS_N,\sigma_N)$ is
governed by the {\em spectral gap} of 
$\cL_N$, $\Delta_N$,  which is
\begin{equation}\label{gapdef}
\Delta_{N} := \inf\{ -\langle F,\cL_N F\rangle_{L^2(\SS_N)} \ :\ \|F\|_2 = 1\ , \langle F,1\rangle_{L^2(\SS_N)}  = 0\ \}\ .
\end{equation}
In his 1956 paper, Kac conjectured  that 
$\liminf_{N\to\infty}\Delta_{N}  > 0$. This was proved by Janvresse \cite{J01} with no estimate on the limiting gap. Our paper \cite{CCL} gave the exact value:
$$\Delta_N = \frac12 \frac{N+2}{N-1}\ .$$
Later, Maslin \cite{Mas} was able to compute many more eigenvalues, but their large multiplicty 
is such that this does not provide much further information in rates of convergence. 

 For the purpose of investigating the properties of the KBE, the relative entropy would provide a better measure, 
 and subsequent research will address spectral gaps and rates of entropy dissipation in quantum Kac models. In the 
 present paper we focus on quantum analogs of Kac's original 1956 result relating the Kac Master Equations and the 
 Kac Boltzmann Equation,
and in addition carry out a detailed study of the steady states of their quantum analogs.   

\subsection{Quantum Markov semigroups}

In the next subsection we describe  the quantum analog of the Kac Master Equation that is investigated here. 
In the quantum setting, probability densities are replaced by density matrices; that is, positive trace class operators on a 
Hilbert space that have unit trace. The 
quantum analog \cite{Dopen} of the semigroup $e^{t\cL_N}$ arising in the continuous time Kac Walk will be a particular sort 
of  evolution equation for density matrices known as a {\em quantum Markov semigroup}. Before going into the particular 
features of the quantum model, we introduce the notation that we will use. 

Let $\fH$ be a separable Hilbert space with inner product $\langle \cdot, \cdot \rangle_\fH$ and norm $\|\cdot\|_\fH$.
Let $\cB(\fH)$ denote the set of norm-continuous linear operators on $\fH$, and for $A\in \cB(\fH)$, let $\|A\|_\infty$
denote the operator norm of $A$.  We make use of other norms on subspaces of $\cB(\fH)$.  For $p\in [1,\infty)$,   $\cT_p(\fH)$ denotes the set of operators $A$ on $\fH$ such that
$$\tr[(A^*A)^{p/2}] < \infty $$
equipped with the norm
$$\|A\|_p = (\tr[(A^*A)^{p/2}])^{1/p}\ .$$
For each $p\in [1,\infty)$,  $ \cT_p(\fH)$ is a two-sided ideal in $\cB(\fH)$, closed in the $\|\cdot\|_p$ norm, but operator norm dense in $\cC(\fH)$, the norm closed subalgebra of compact operators on $\fH$. 

Of special interest to us are  $\cT_1(\fH)$, the space of {\em trace class} operators on $\fH$, and  
$\cT_2(\fH)$, which is a Hilbert space in its own right with the inner product $\langle A,B\rangle_{\cT_2(\fH)} = 
\tr[A^*B]$. 

If $A \in  \cT_1(\fH)$, then $B\mapsto \tr[AB]$ is a bounded linear functional on $\cB(\fH)$, and every bounded linear functional on $\cB(\fH)$ has this form. That is, $ \cT_1(\fH)$ is the predual of  $\cB(\fH)$.
In the same way, $\cT_1(\fH)$
itself is the dual of $\cC(\fH)$.

Define $\Dens(\fH)$ to be the set of positive operators $\rho$  in $\cT_1(\fH)$ such that $\tr[\rho] =1$.  We refer to 
$\Dens(\fH)$ as the set of {\em density matrices} on $\fH$. 

%

A  linear  operator $\cK : \cB(\fH) \to \cB(\fH)$ is {\em positivity preserving} in case $\cK A \geq 0$ 
whenever $A\geq 0$. It is {\em completely positive} in case the following more stringent condition is satisfied: 
For each $n\in \N$, let $M_n(\C)$ denote the space of complex $n\times n$ matrices. Let 
$E_{i,j}$ denote the element of 
$M_n(\C)$ that has $1$ in the $i,j$ entry, and $0$ elsewhere. The set 
$\{E_{i,j}\}_{1,\leq i,j\leq n}$ is the {\em matrix unit} basis for $M_n(\C)$. 
The general element of $\cB(\fH)\otimes M_n(\C)$ can be written as a sum
$$\sum_{i,j=1}^n A_{i,j} \otimes E_{i,j}\ ,$$
and may therefore be regarded as an $n\times n$ matrix with entries in $\cB(\fH)$, and consequently, as an operator
on $\oplus^n\fH$, the direct sum of $n$ copies of $\fH$. We say that an element of 
$\cB(\fH)\otimes M_n(\C)$ is positive if it is positive as an operator on  $\oplus^n\fH$.  
Any linear transformation $\cK : \cB(\fH)\to \cB(\fH)$ induces the  linear transformation 
$\cK\otimes \one_{M_n(\C)}$ from 
$\cB(\fH)\otimes M_n(\C)$ to $\cB(\fH)\otimes M_n(\C)$:
$$\cK\otimes \one_{M_n(\C)} \left(\sum_{i,j=1}^n A_{i,j} \otimes E_{i,j}\right) = 
\sum_{i,j=1}^n \cK(A_{i,j}) \otimes E_{i,j}\ .$$
The map $\cK$ is completely positive in case for each $n\in \N$, 
$\cK\otimes \one_{M_n(\C)}$ is positivity preserving. The notion of complete positivity was introduced by Stinespring \cite{Sti55}.  Its physical relevance was discussed by Krauss \cite{K71}. See Paulsen's book \cite{Paul02} for more information on the mathematical theory. 

As an  example, it is very easy to see that for any $V\in \cB(\fH)$, the map $\cK$ defined by 
$\cK A = VAV^*$ is completely positive. It is also clear that any convex combination of 
completely positive maps is completely positive. 
Complete positivity arises naturally in quantum mechanics. A  {\em quantum Markov operator}
on $\cB(\fH)$ is a linear transformation $\cK$ on $\cB(\fH)$ that is completely positive and such that 
$\cK \one_{\fH} = \one_{\fH}$.   

If furthermore
\begin{equation}\label{symm}
\tr[(\cK A)^* B] = \tr[A^* \cK(B)]
\end{equation}
for all $A,B\in \cT_2(\fH)$, then $\cK$ is a {\em symmetric quantum Markov operator}.  
The Kadison \cite{Choi,Kad52} inequality says that for any completely positve operator $\cK$, and all $A\in \cB(\fH)$,
\begin{equation}\label{Kad}
\cK(A^*A) \geq (\cK(A))^*\cK(A)\ .
\end{equation}
(In fact, one only needs the positivity of $\cK \otimes \one_{M_n(\C)}$ for $n=2$.) In particular, when $\cK$ is a symmetic quantum Markov operator,
$$\tr[A^*A] = \tr[\cK(\one_\fH) A^*A]\ = \tr[\one_{\fH} \cK(A^*A)] \geq \tr[ (\cK(A))^*\cK(A)] \ .$$
That is, $\|\cK(A)\|_2 \leq \|A\|_2$, so that $\cK$ is a contraction on $\cT_2(\fH)$.
Let $\cK^\dagger$ denote the adjoint of $\cK$ with respect to the inner product on $\cT_2(\fH)$.

If $\cK$ is  a quantum Markov operator, then 
$\cK^\dagger$ is completely positive, and $\cK$ is called {\em normal} in case $\cK^\dagger$
maps $\Dens(\fH)$ into itself.  (Recall that  $\cT_1(\fH)$ is the predual of $\cB(\fH)$ in that every bounded linear functional on 
$\cT_1(\fH)$ is of the form $A\mapsto \tr[AB]$ for some $B\in \cB(\fH)$.)
Normal quantum Markov maps are often called {\em quantum operations} \cite{K71}.  
A {\em quantum Markov semigroup}  on $\cB(\fH)$  is a semigroup $\cP_t$  
of linear transformations on $\cB(\fH)$ such that  for all $A\in \cB(\fH)$ and all $\rho\in \Dens(\fH)$,
$\lim_{t\to 0}\tr[\rho \cP_tA] = \tr[\rho A]$, and such for each $t > 0$, 
$\cP_t$ is a normal quantum operator.   Of course when $\fH$ is finite dimensional, all 
vector space topologies on $\cB(\fH)$ are the same, and normality is automatic.

\section{The quantum Kac model}

\subsection{Binary collisions}

Consider a quantum mechanical system of $N$ particles. Let $\H$ be the state space of the single 
particle system. Let $\H_N$ denote the $N$-fold tensor product
$\H_N = \H^{\otimes N}$.    We write $\H_j$ to denote the $j$th factor of $\H$ in $\H_N$. If $A$ is an operator on 
$\H$, we write $A_{1}$ to denote the operator on $\H_N$ given by
\begin{equation}\label{tens1}
A_1 (\phi_1\otimes \phi_2\otimes \cdots \otimes \phi_N) = (A\phi_1) \otimes \phi_2 \otimes \cdots \otimes \phi_N\ .
\end{equation}
Likewise, for $j=2,\dots,N$, we define $A_j$ to act by applying $A$ to the $j$th factor only. 
Equivalently, let $\pi_{i}$ be the permutation of $\{1,\dots,N\}$  such that $\pi_{i}(i) =1$, and 
$\pi_{i}(k) =k$ for $k\neq 1,i$. Let $V_{i}$ be the canonical unitary representation of this permutation on $\H_N$. 
Then we have 
\begin{equation}\label{tens2}
A_{i} = V_{i} A_1 V_{i}^*\ .
\end{equation}

Let $h$ denote the single-particle Hamiltonian on $\H$.   We define the $N$-particle Hamiltonian $H_N$ by
\begin{equation}\label{ham1}
H_N = \sum_{j=1}^N h_j\ .
\end{equation}
No interactions between the particles are included in $H_N$ because it specifies the energy of a state {\em between} collisions, and apart from collisions, the particles do not interact. Thus, $H_N$ plays the same role as the
kinetic energy, which is the Hamiltonian for the free motion in the classical Kac model. As in the classical model, we will consider an evolution
in which the system makes random jumps from one state to another through binary collisions that preserve the energy of the non-interacting particles. All of the effects of the interaction are encoded into the description of these binary collisions as follows:

The {\em collision parameter space}  $\aC$ is a compact metric space; the elements of $\aC$ parameterize the kinematically possible collisions. In the classical Kac model discussed above $\aC = S^1$, the unit circle. We take as given a continuous map $U$ from $\aC$ to $\cU(\H\otimes \H)$, the group of unitary operators on $\H\otimes \H$.  Let $\sigma$ denote a generic point of $\aC$. Then $U(\sigma)$ may be regarded as the scattering matrix of a particular type of collision that two particles may undergo.  Let $\varrho$ be a density matrix on $\H\otimes \H$ representing the (non-interacting) state of the two particles before this collision. Then $U(\sigma)\varrho U^*(\sigma)$ gives the state  after the collision. 

To describe a binary collision between particles $1$ and $2$ in our $N$ particle system, define $U_{1,2}(\sigma)$
to be the unitary operator on $\H_N$ given by
$$U_{1,2}(\sigma) (\phi_1\otimes \phi_2\otimes\phi_3\otimes  \cdots \otimes \phi_N) = (U_{1,2}(\sigma)\phi_1 \otimes \phi_2)\otimes \phi_3 \otimes \cdots \otimes \phi_N\ .$$
In the same way, for $1\leq i < j \leq N$, we define $U_{i,j}(\sigma)$ so that it acts on the $i$th and $j$th factors of $\H$. Equivalently, let $\pi_{i,j}$ be the permutation of $\{1,\dots,N\}$  such that $\pi_{i,j}(i) =1$, $\pi_{i,j}(j) =2$, and 
$\pi_{i,j}(k) =k$ for $k\neq i,j,1,2$. Let $V_{i,j}$ be the canonical unitary representation of this permutation on $\H_N$. 
Then we have $U_{i,j}(\sigma) = V^*_{i,j} U_{1,2}(\sigma) V_{i,j}$.

Since we are only concerned with collisions that conserve energy, we require that for each $\sigma\in \aC$,
\begin{equation}\label{quankac1}
U(\sigma)[\one_\H \otimes h + h \otimes \one_\H] U^*(\sigma) = [\one_\H \otimes h + h \otimes \one_\H]\ .
\end{equation}
It then follows from the definitions that for each $\sigma\in \aC$, and each $1\leq i<j \leq N$
\begin{equation}\label{quankac2}
U_{i,j}(\sigma)H_NU_{i,j}^*(\sigma) = H_N\ ,
\end{equation}
or, what is the same, that each $U_{i,j}(\sigma)$ commutes with $H_N$.

\begin{defi}[Collision specification]\label{cfdef} A {\em collision specification} $(\aC, U,\nu)$ consists of a compact metric space $\aC$,  a continuous one-to-one function $U$ from $\aC$ to $\cU(\H_2)$, and a Borel probability measure $\nu$  that charges all open subsets of $\aC$  such that:

\smallskip
\noindent{\it (i)}  For each $\sigma\in \aC$, $U(\sigma)$ commutes with $H_2$.

\smallskip
\noindent{\it (ii)}   For some $\sigma_0\in \aC$, $U(\sigma_0) = \one_{\H_N}$. 

\smallskip
\noindent{\it (iii)} $\{U(\sigma) \ :\ \sigma\in \aC\} = \{U^*(\sigma) \ :\ \sigma\in \aC\}$ and the map $\sigma \mapsto
\sigma'$ where $U^*(\sigma) = U(\sigma')$ is a 
measurable transformation of $\aC$ that leaves $\nu$ invariant.

\smallskip
\noindent{\it (iv)} Let $V: \H_2\to \H_2$ be the swap transformation: $V \phi\otimes \psi = \psi\otimes \phi$ for all $\phi,\psi\in \H$.
Then  $\{U(\sigma) \:\ \sigma\in \aC\} = \{VU(\sigma)V^* \:\ \sigma\in \aC\}$ and the map $\sigma \mapsto
\sigma'$ where $VU(\sigma)V^* = U(\sigma')$ is a 
measurable transformation of $\aC$ that leaves $\nu$ invariant.   

\end{defi}

\begin{remark}  Since $U(\sigma)^*$ is the time reversal of the collision $U(\sigma)$, 
property {\it (iii)} in Definition~\ref{cfdef}  will  incorporate time-reversibility into the quantum Kac model. Property
{\it (ii)} together with the continuity of $U$ will mean that not only is the trivial collision included in the model, but also all ``grazing'' collisions.  The condition {\it (iv)} ensures that the two factors of $\H$ enter the collision specification in a symmetric way. 
\end{remark}

To proceed, we need to make some assumptions on the spectrum of the single-particle Hamiltonian $h$. We shall
always assume  that $\H$ is separable, and that $h$ has  a compact resolvent. Of course the latter condition is trivally satisfied when $\H$ is finite dimensional. The compactness of the resolvent  implies not only that $h$ has discrete spectrum, but that each of the eigenspaces of $h$ is finite dimensional. This is crucial in what follows, and it is much less restrictive than supposing that $\H$ be finite dimensional. For example, one could take $\H = L^2(\Omega)$ for a  regular bounded  domain $\Omega\subset \R^d$, and take $h = -\Delta$ with, say, Dirichlet conditions.  
We shall always write
$\{e_j\}_{j\in \cJ}$ to denote the sequence of eigenvalues of $h$ arranged in increasing order and repeated according to their multiplicity.  The index set $\cJ$ will be taken to be $\N$ when $\H$ is infinite dimensional, and otherwise it will be given by $\cJ =\{1,\dots, {\rm dim}(\H)\}$. We shall always write  $\{\psi_j\}_{j\in \cJ}$ denote an orthonormal basis of $\H$ consisting of eigenvectors of $h$ such that $h\psi_j = e_j\psi_j$ for each $j\in \cJ$.

The diagonalization of $h$ leads directly to a diagonalization of $H_N$. Let $\bma$ denote a generic element of $\cJ^N$ so that $\bma = (\alpha_1,\dots, \alpha_N)$. For each $\alpha \in \cJ^N$, define
\begin{equation}\label{quankac3}
E_\bma = \sum_{j=1}^N e_{\alpha_j} \quad{\rm and}\quad \Psi_\bma = \psi_{\alpha_1}\otimes \cdots \otimes \psi_{\alpha_N}\ .
\end{equation}
Evidently $\{\Psi_\bma\}_{\bma\in \cJ^N}$ is an orthonormal basis of $\H_N$, and for all $\bma$,
$H_N \Psi_\bma = E_\bma \Psi_\bma$. 

Note that even if each $h$ has non-degenerate spectrum, it will never be the case that 
$H_N$ has non-degenerate spectrum.  

\begin{defi} [Energy shells]
Let $\spec$ denote the spectrum of $H_N$. For each $E\in H_N$, we call ``energy shell at energy $E$'' the eigenspace  $\K_E$ of $H_N$
with eigenvalue $E$. For each $E\in \spec$, let $P_E$ denote the orthogonal projection in $\H_N$ 
onto $\K_E$ so that  
\begin{equation}\label{quankac4}
\H_N = \bigoplus_{E \in \spec} \K_E\ .
\end{equation}
\end{defi}

On account of the compactness assumption on the resolvent of $h$,  each $\K_E$ is finite dimensional. 
The different energy shells $\K_E$ correspond to the level surfaces of the classical energy function $E(\vec v)$
defined in (\ref{clkac1}), which are the energy spheres $\SS_E$. Evidently each 
$\K_E$ is invariant under each $U_{i,j}(\sigma)$, or, what is the same,
\begin{equation}\label{quankac5}
U_{i,j}(\sigma) P_E U_{i,j}^*(\sigma)  = P_E\ .
\end{equation}
For each $E\in \spec$, define
\begin{equation}\label{shell}
\sigma_E = \frac{1}{{\rm dim}(\K_E)}P_E\ .
\end{equation}
Then $\sigma_E \in \Dens(\H_N)$, and it is the analog of the uniform measure on the classical energy shell
$\SS_E$.  Of course it is only possible to define $\sigma_E$ because $\K_E$ is finite dimensional.

An important feature of the classical Kac model is that for each $E>0$, the only continuous functions $F$
on $\SS_E$ that satisfy $F = F\circ R_{i,j,\theta}$ for all $1\leq i< j \leq N$ and $\theta\in [-\pi,\pi]$ are the constants. 
Equivalently, a  continuous function $F$ on $\R^N$ satisfies $F = F\circ R_{i,j,\theta}$ for all 
$1\leq i< j \leq N$ and $\theta\in [-\pi,\pi]$ if and only if $F$ depends on $\vec v\in \R^N$ only through 
$E(\vec v)$. This feature of the classical collisions provides the ergodicity of the Kac walk on the energy spheres.  
We require its quantum analog.  

\begin{defi}[Energy algebra]  Let $\cA_{N}$ denote the commutative subalgebra of 
$\cB(\H_N)$ generated by $\{P_E\ :\ E\in \spec\}$. The elements of $\cA_N$ are the elements of the form 
$f(H_N)$ for some bounded continuous function
$f: \spec \to \C$, or what is the same thing,
\begin{equation}\label{expand}
\sum_{E\in \spec}\lambda_E P_E
\end{equation}
where $\lambda_E = f(E)$.   $\cA_N$ is called the {\em energy algebra}. 
\end{defi} 

The elements of $\cA_N$ are the quantum analogs of the functions $F$  on   $\R^N$ that depend on 
$\vec v\in \R^N$ only through the energy $E(\vec v)$.  Note that if $A\in \cA_N$, then 
$A$ has an expansion of the from (\ref{expand}) where
\begin{equation}\label{expand2}
\lambda_E  = \tr[\sigma_E A] 
\end{equation}
and $\sigma_E$ is given by (\ref{shell}).

For any ${\mathcal S} \subset \cB(\K)$, $\K$ any Hilbert space,  ${\mathcal S}'$ denotes the commutant of ${\mathcal S}$ in 
$\cB(\K)$.
By part {\it (i)} of Definition~\ref{cfdef},
\begin{equation}\label{comcon}
\cA_2 \subset   \{U(\sigma)\ :\ \sigma\in \aC\}'\ .
\end{equation}
 We are primarily interested in collision specifications such that 
$\{U(\sigma)\ :\ \sigma\in \aC\}'=\cA_2$.

\begin{defi} [Ergodic collision specification]\label{ergdef} A collision specification $(\aC,U,\nu)$ as in Definition~\ref{cfdef} is {\em ergodic}
 in case 
\begin{equation}\label{quankac6}
\{U(\sigma)\ :\ \sigma\in \aC\}'=\cA_2\ .
\end{equation}
\end{defi}
Alternatively we can say that a collision specification is ergodic if and only if whenever $A$  commutes with each $U(\sigma)$, then $A$ is a function
of $H_2$.  Let $A\in \cB(\H_2)$. Then since $H_2$ has discrete spectrum, $A\in \cA_2$ if and only if $P_E A\in \cA_2$ for each eigenvalue $E$ of $H_2$. 
Since the eigenspaces of $H_2$ are all finite dimensional,  for all  $A\in \cB(\H_2)$ and all eigenvalues $E$ of $H_2$, $P_E A \in \cT_2(\H_2)$.   It follows that
\eqref{quankac6} is satisfied if and only if
\begin{equation}\label{quankac6B}
\cT_2(\H_2)\cap \{U(\sigma)\ :\ \sigma\in \aC\}'=\cT_2(\H_2)\cap \cA_2\ .
\end{equation}
Thus, \eqref{quankac6B} gives another characterization of ergodicity, and it has the advantage that since $\cT_2(\H_2)$ is a Hilbert space, it can be related to an eigenvalue problem. 

\begin{defi}[Collision operator]  Let $(\aC,U,\nu)$ be a collision specification. Define the {\em collision operator} $\cQ$ on $\cB(\H_2)$ by 
\begin{equation}\label{qdef}
\cQ A = \int_{\aC} {\rm d}\nu(\sigma) U(\sigma)AU^*(\sigma)\ .
\end{equation}
\end{defi}

It is evident that the restriction of $\cQ$ to $\cT_2(\H_2)$ is self-adjoint, and that for all $A\in \{U(\sigma)\ :\ \sigma\in \aC\}'$, $\cQ(A) =A$.
Therefore, $\cT_2(\H_2)\cap \{U(\sigma)\ :\ \sigma\in \aC\}' $ is contained in the eigenpace of $\cQ$ with eigenvalue $1$.
The next lemma says, in particular, that $\cT_2(\H_2)\cap \{U(\sigma)\ :\ \sigma\in \aC\}' $ {\em is precisely}  the eigenpace of $\cQ$ with eigenvalue $1$.
The lemma shall have other uses as well, and we state is in a more general form that we need at present to avoid repetition later on.

\begin{lm}[Convexity lemma]\label{convlem}  Let $(\aC,U,\nu)$ be a collision specification.Let $\Phi$ be a convex function on $\cB(\H_2)$ with the property
that for all $U\in \cU(\H_2)$ and all $A\in \cB(\H_2)$, $\Phi(U A U^*) = \Phi(A)$. Then 
\begin{equation}\label{convlem2}
\Phi(\cQ A) \leq \Phi(A)
\end{equation}
and if $\Phi$ is strictly convex,  there is equality in (\ref{convlem2}) with $\Phi(A)< \infty$  if and only if 
$A\in \{U(\sigma)\ :\ \sigma\in \aC\}'$.   In particular, taking $\Phi(A) = \tr[A^*A]$,  the eigenpace of $\cQ$ with eigenvalue $1$ is $\cT_2(\H_2)\cap \{U(\sigma)\ :\ \sigma\in \aC\}' $. 
\end{lm}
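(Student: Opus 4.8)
The plan is to exploit the defining property of $\Phi$ — invariance under unitary conjugation — together with Jensen's inequality applied to the probability measure $\nu$ on $\aC$. Since $\cQ A = \int_\aC U(\sigma) A U^*(\sigma)\,\dd\nu(\sigma)$ is an average of the points $U(\sigma)AU^*(\sigma)$, and $\Phi$ is convex and lower semicontinuous on the relevant topology (this is the routine point one must pin down: $\Phi$ should be a convex function for which Jensen applies to Bochner/weak integrals; in the application $\Phi(A)=\tr[A^*A]$ it is a genuine Hilbert-space norm squared, so no subtlety arises), Jensen gives
\begin{equation*}
\Phi(\cQ A) = \Phi\!\left(\int_\aC U(\sigma) A U^*(\sigma)\,\dd\nu(\sigma)\right) \leq \int_\aC \Phi\bigl(U(\sigma) A U^*(\sigma)\bigr)\,\dd\nu(\sigma) = \int_\aC \Phi(A)\,\dd\nu(\sigma) = \Phi(A)\ ,
\end{equation*}
where the penultimate equality is precisely the unitary invariance of $\Phi$ and the last uses $\nu(\aC)=1$. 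This establishes \eqref{convlem2}.

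For the equality case, assume $\Phi$ is strictly convex and $\Phi(\cQ A)=\Phi(A)<\infty$. The strict-convexity half of Jensen forces the integrand $\sigma\mapsto U(\sigma)AU^*(\sigma)$ to be $\nu$-a.e.\ equal to a single constant element of $\cB(\H_2)$; since that element must be $U(\sigma_0)AU^*(\sigma_0)=A$ for the trivial collision $\sigma_0$ (using part {\it (ii)} of Definition~\ref{cfdef}), we get $U(\sigma)AU^*(\sigma)=A$ for $\nu$-a.e.\ $\sigma$, equivalently $U(\sigma)A = A U(\sigma)$ for $\nu$-a.e.\ $\sigma$. To upgrade ``$\nu$-a.e.'' to ``every $\sigma\in\aC$'' I would invoke the continuity of $U$ and the fact that $\nu$ charges every nonempty open set: the set $\{\sigma : U(\sigma)A=AU(\sigma)\}$ is closed in $\aC$ (the commutator $[U(\sigma),A]$ depends continuously on $\sigma$ in the strong or norm topology) and has full $\nu$-measure, hence is dense, hence is all of $\aC$. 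Therefore $A\in\{U(\sigma):\sigma\in\aC\}'$. The converse direction is immediate: if $A$ commutes with every $U(\sigma)$ then $U(\sigma)AU^*(\sigma)=A$ for all $\sigma$, so $\cQ A = A$ and trivially $\Phi(\cQ A)=\Phi(A)$.

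Finally, specializing to $\Phi(A)=\tr[A^*A]$ — which is strictly convex on $\cT_2(\H_2)$ and unitarily invariant since $\tr[(UAU^*)^*(UAU^*)]=\tr[U A^* A U^*]=\tr[A^*A]$ — the equality analysis says: for $A\in\cT_2(\H_2)$, $\cQ A = A$ (i.e.\ $A$ is in the eigenspace of $\cQ$ for eigenvalue $1$) iff $\Phi(\cQ A)=\Phi(A)$ iff $A\in\{U(\sigma):\sigma\in\aC\}'$. Combined with the already-noted inclusion $\cT_2(\H_2)\cap\{U(\sigma):\sigma\in\aC\}'\subseteq\ker(\cQ-\id)$, this identifies the eigenspace exactly as $\cT_2(\H_2)\cap\{U(\sigma):\sigma\in\aC\}'$.

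I expect the only genuine obstacle to be the measure-theoretic passage from ``$\nu$-a.e.'' to ``everywhere'': one needs to be careful about which operator topology makes $\sigma\mapsto U(\sigma)AU^*(\sigma)$ continuous and in which topology Jensen's equality statement is being used (for $\Phi=\tr[A^*A]$ on $\cT_2$ everything is a Hilbert-space statement and this is clean, but the ``general form'' advertised in the lemma statement requires choosing the hypotheses on $\Phi$ so that both Jensen and its equality case are available). Everything else is a routine application of strict convexity and the properties {\it (ii)} and the open-support condition on $\nu$ from Definition~\ref{cfdef}.
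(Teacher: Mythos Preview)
Your proof is correct and follows essentially the same route as the paper's: Jensen's inequality plus unitary invariance for \eqref{convlem2}, then the strict-convexity equality case of Jensen to get $U(\sigma)AU^*(\sigma)$ $\nu$-a.e.\ constant, upgraded to ``everywhere'' via continuity of $U$ and full support of $\nu$, and identified as $A$ using $U(\sigma_0)=\one_{\H_2}$. The paper is somewhat terser about the measure-theoretic passage from a.e.\ to everywhere, but the argument is the same.
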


\begin{proof} By the convexity and unitary invariance, 
\begin{equation}\label{ave}
\Phi(\cQ A)  \leq  \int_{\aC} {\rm d}\nu(\sigma)
\Phi(U(\sigma) A U(\sigma)^*)  =    \int_{\aC} {\rm d}\nu(\sigma)
\Phi( A )  = \Phi(A)\ ,
\end{equation}
which proves (\ref{convlem2}). If there is equality in (\ref{convlem2}) and $\Phi$ is strictly convex, we must have
that  $U(\sigma) A U^*(\sigma)$ is constant almost everywhere with respect to $\nu$, and then by
the continuity of $\sigma \mapsto U(\sigma)$ and  the fact that $\nu$ charges all open sets,   $U(\sigma) A U^*(\sigma)$ is independent of $\sigma$. Since for some $\sigma_0$, $U(\sigma_0)  = \one_{\H_2}$,  it must be the case that 
$U(\sigma) A U^*(\sigma) = A$ for all $\sigma\in \aC$. The final statement then follows from the remarks made above. 
\end{proof}

It follows from Lemma~\ref{convlem} that for $A\in \cT_2(\H_2)$, $\lim_{n\to\infty}\cQ^n A = PA$ where $P$ is the orthogonal projection onto
$\{U(\sigma)\ :\ \sigma\in \aC\}'$.  Thus, to prove ergodicity, it suffices to show that $\lim_{n\to\infty}\cQ^n A \in \cA_2$ for all $A\in \cT_2(\H)$.

 In the next two examples, it is possible to derive an explict form for $\cQ$ from which one can easily determine the 
 eigenspace with eigenvalue $1$, and thus verify ergodicity.  

\begin{exam}\label{exam2} For the simplest possible example, take $\H = \C^2$,  so that $\H_N = (\C^2)^{\otimes N}$.   Define the single particle Hamiltonian $h$ by 
  $h = \left[\begin{array}{cc} 0 & 0\\ 0 & 1\end{array}\right]$ so that the $N$-particle Hamiltonian
 $H_N = \sum_{j=1}^N h_j$ has $\spec   =\{0,\dots,N\}$.  For  $E \in \{0,\dots,N\}$, 
 $${\rm dim}(\K_E) = \tbinom{N}{E}\ .$$

Identify $\C^2\otimes \C^2$ with $\C^4$ using the basis
$$\left(\begin{array}{c} 1\\ 0\end{array}\right) \otimes  \left(\begin{array}{c} 1\\ 0\end{array}\right) \ ,\quad
\left(\begin{array}{c} 0\\ 1\end{array}\right) \otimes  \left(\begin{array}{c} 1\\ 0\end{array}\right) \ ,\quad
\left(\begin{array}{c} 1\\ 0\end{array}\right) \otimes  \left(\begin{array}{c} 0\\ 1\end{array}\right) \ ,\quad
\left(\begin{array}{c} 0\\ 1\end{array}\right) \otimes  \left(\begin{array}{c} 0\\ 1\end{array}\right) \ .$$
The standard physics notation for this basis is simply
\begin{equation}\label{2basis}
|00\rangle\ , \quad   |10\rangle\ ,  \quad  |01\rangle\ , \quad   |11\rangle\ ,
\end{equation}
which will be useful.  With this identification of $\C^2\otimes \C^2$ with $\C^4$,
$$
\left[\begin{array}{cc} a_{1,1} & a_{1,2}\\ a_{2,1} & a_{2,2}\end{array}\right] \otimes
\left[\begin{array}{cc} b_{1,1} & b_{1,2}\\ b_{2,1} & b_{2,2}\end{array}\right] =: A \otimes B  \quad{\rm is \ represented\ by}\quad 
\left[\begin{array}{cc} b_{1,1}A & b_{1,2}A\\ b_{2,1}A & b_{2,2}A\end{array}\right] \ .
$$
(Switching the order of the second and third basis elements swaps the roles of $A$ and $B$ in the block matrix representation of the tensor product $A\otimes B$.)
 In the sort of notation used in (\ref{2basis}), an orthonormal  basis of $\H_N$ consisting of eigenvectors of $H_N$ 
is provided by the set of vectors $|\alpha_1,\dots,\alpha_N\rangle$
 in which each $\alpha_j$ is either $0$ or $1$. Then
 $$H_N  |\alpha_1,\dots,\alpha_N\rangle  =\left(\sum_{j=1}^N \alpha_j\right) |\alpha_1,\dots,\alpha_N\rangle\ .$$

In this basis,
$$H_2=\left[\begin{array}{cc} 0 & 0\\ 0 & 1\end{array}\right] \otimes I + I\otimes \left[\begin{array}{cc} 0 & 0\\ 0 & 1\end{array}\right] = 
\left[\begin{array}{cccc} 0 & 0 & 0 & 0\\
0 & 1 & 0 & 0\\
0 & 0 & 1 & 0\\
0 & 0 & 0 & 2\end{array}\right]\ .$$
Therefore,
${\rm Spec}(\H_2) = \{0,1,2\}$ and
$$
P_0 = \left[\begin{array}{cccc} 
1 & 0 & 0 & 0\\
0 & 0 & 0 & 0\\
0 & 0 & 0 & 0\\
0 & 0 & 0 & 0\end{array}\right]\ , \quad 
P_1 = \left[\begin{array}{cccc} 
0 & 0 & 0 & 0\\
0 & 1 & 0 & 0\\
0 & 0 & 1 & 0\\
0 & 0 & 0 & 0\end{array}\right]\ \quad {\rm and}\quad
P_2 = \left[\begin{array}{cccc} 
0 & 0 & 0 & 0\\
0 & 0 & 0 & 0\\
0 & 0 & 0 & 0\\
0 & 0 & 0 & 1\end{array}\right]\ .$$

Now define $\aC = S^1\times S^1\times S^1 \times S^1 $ identifying each copy of $S^1$ with the unit circle in $\C$
so that the general point in $\sigma \in \aC$ has the form \ $\sigma = (e^{i\varphi},e^{i\theta},e^{i\psi} , e^{i\eta})$. Then define
$$U(e^{i\phi})  :=   \left[\begin{array}{cccc} e^{i\theta } & 0 &  \phantom{-}0 & 0\\
0 & e^{i\psi}\cos\theta  & -e^{i\varphi}\sin\theta  & 0\\
0 & e^{-i\varphi}\sin\theta  & \phantom{-}e^{-i\psi}\cos\theta & 0\\
0 & 0 &  \phantom{-}0 & e^{i\eta} \end{array}\right]\ $$
Choosing $\nu$ to be the uniform probability measure on $\aC$ gives us a collision specification $(\aC,U, \nu)$.

A simple computation shows that for every operator $A$ on $\H_2 = \C^2\otimes \C^2$ identified as the $4\times 4$ matrix with entries $a_{i,j}$ using the basis (\ref{2basis}),
\begin{eqnarray}\label{Qform}
\cQ A = \int_{\aC} {\rm d}\nu(\sigma) U(\sigma)AU^*(\sigma) &=&
 \left[\begin{array} {cccc} 
a_{1,1} & 0 & 0 & 0\\
0 & \frac12(a_{2,2}+a_{3,3}) & 0 & 0\\
0 &  0 &  \frac12(a_{2,2}+a_{3,3}) & 0\\
0 & 0 & 0 &  a_{4,4}\end{array}\right]\nonumber\\  
&= & a_{1,1} P_0 + \frac{a_{2,2}+ a_{3,3}}{2} P_1  + a_{4,4} P_2 \in \cA_2\ .
\end{eqnarray}
Therefore,
$$\{U(\sigma)\ :\ \sigma\in \aC\}' \subset {\rm ran}(\cQ) \subset \cA_2 \subset \{U(\sigma)\ :\ \sigma\in \aC\}'\ ,$$
showing that $(\aC,U,\nu)$ is ergodic.  
\end{exam} 

\begin{exam}\label{exam2A} For the next simplest example, we take $\aC$ and $U$ as in the previous example, but we take 
$\nu$ to be a non-uniform probability measure on $\aC$. 
For example, take
$$\nu = (2\pi)^{-4}(1+\cos\varphi)(1+\cos\theta)(1+\cos\psi)(1+\cos\eta){\rm  d}\varphi{\rm  d}\theta {\rm  d}\psi {\rm  d}\eta\ .$$
It is easy to check that conditions {\it (i)} through {\it (iv)} are satisfied. 
Then (\ref{Qform}) becomes
\begin{equation}\label{notproj}
\cQ A  =
 \left[\begin{array} {cccc} 
a_{1,1} & \frac18 a_{1,2} & \frac18 a_{1,3} & \frac12 a_{1,4}\\
\frac18 a_{2,1} & \frac12 (a_{2,2}+a_{3,3}) & 0 & \frac14 a_{2,4}\\
\frac18 a_{3,1}  &  0  &  \frac12(a_{2,2}+a_{3,3}) & \frac14 a_{3,4}\\
\frac12 a_{4,1} & \frac14 a_{4,2} & \frac14 a_{3,4} &  a_{4,4}\end{array}\right]\ .
\end{equation}
In this case, $\cQ A\notin \cA_2$. However, it is clear that $\lim_{n\to\infty}\cQ^nA$ is, 
and hence $(\aC,U,\nu)$ is ergodic.

\end{exam}

\subsection{The quantum Kac generator}

We are now ready to define the quantum analogs of classical transformations ${\displaystyle F \mapsto  \frac{1}{2\pi}\int_{-\pi}^\pi F(R_{i,j,\theta}\vec v)\dd \theta}$
defined in (\ref{clkac3}).  Given a collision specification $(\aC,U,\nu)$, for each $N\geq 2$ define a family of operators on $\cB(\H_N)$, $\{\cQ_{i,j}\}_{1\leq i,j\leq N}$ as follows:  For all $A\in \cB(\H_N)$, 
\begin{equation}\label{qijdef}
\cQ_{i,j}A =  \int_{\aC} {\rm d}\nu(\sigma)
U_{i,j}(\sigma) A U_{i,j}^*(\sigma)\ .
\end{equation}
Evidently, for each $i,j$, $\cQ_{i,j} \one_{\H_N} =  \one_{\H_N} $ and $\cQ_{i,j}$ preserves positivity. If $A$ is trace-class, then so is $\cQ_{i,j}A$ and $\tr[cQ_{i,j}A] = \tr[A]$.
That is, each $\cQ_{i,j}$ is both a quantum Markov operator, and, when restricted to $\Dens(\H_N)$, a quantum operation, as defined in the final paragraph of Section 1.2.  Moreover, because of the way that time reversibility has been incorporated into the definition of
$(\aC,U,\nu)$, each $\cQ_{i,j}$ is a {\em self-adjoint} quantum Markov operator. 
In particular, each 
$\cQ_{i,j}$ is a self-adjoint contraction on $\cT_2(\H_N)$.

The transformation $A \mapsto \cQ_{i,j}\varrho$ is a quantum analog of the  
classical transformation defined in (\ref{clkac3}), and this brings us to the definition of the quantum Kac generator:
\begin{defi} [Quantum Kac generator]\label{gendef} Let $\{U(\sigma)\:\ \sigma\in \aC\}$ be an ergodic set of collision operators
and let $\nu$ be a given Borel probability measure on $\aC$. Define the operators $\cQ_N$ and $\cL_N$ on $\cB(\H_N)$ by
\begin{equation}\label{quankac9}
\cQ_N  =  {\ncht}^{-1}\sum_{i<j} \cQ_{i,j} \quad{\rm and}\quad 
\cL_N  = N(\cQ_N - \one_{\H_N})\ .
\end{equation} 
\end{defi}

Note that by property {\it (iv)} in Definition~\ref{cfdef},  for all $i\neq j$, 
\begin{equation}\label{symrem}
\cQ_{i,j} = \cQ_{j,i}\ .
\end{equation}
Hence one has the alternate formula for $\cQ_N$:
\begin{equation}\label{quankac9B}
\cQ_N  =  \frac{1}{N(N-1)} \sum_{i\neq j} \cQ_{i,j} \ .
\end{equation} 

$\cQ_N$ is a self-adjoint quantum Markov operator because the set of such operators is convex, and $\cQ_N$ is a convex  combination of the $\cQ_{i,j}$ which are symmetric quantum Markov operators.  In particular,
$\cQ_N$ is a contraction on $\cT_2(\H_N)$. For the same reason,  the restriction of  $\cQ_N$
to $\Dens(\H_N)$ is a quantum operation.

The {\em Quantum Kac Master Equation} (QKME) is the evolution equation on $\Dens(\H_N)$ given by
\begin{equation}\label{quankac12}
\frac{{\rm d}}{{\rm d} t} \varrho(t)  = \cL_N \varrho(t) \ .
\end{equation}

Since $\|\cL_N\|_\infty \leq 2N$, the QKME is solved by exponentiation:
For each $t\geq 0$ ,we may define an operator $\cP_{N,t}$ on each  $\cT_p(\H_N)$, and in particular on $\Dens(\H_N)$, by 
\begin{equation}\label{quankac11}
\cP_{N,t} A  = \sum_{k=1}^\infty e^{-Nt}\frac{(Nt)^k}{k!} \cQ^k_N A  = e^{t\cL_N} A\ .
\end{equation} 
Then the unique solution $\varrho(t)$ of the QKME satisfying $\varrho(0) = \varrho_0\in \Dens(\H_N)$ is
$\varrho(t) = \cP_{N,t}\varrho_0$.

The first equality in (\ref{quankac11}) displays each   $\cP_{N,t}$ as a convex combination of the $\cQ_N$. Thus,
as above, each $\cP_{N,t}$ is both a symmetric quantum Markov operator (and hence a contraction on 
$\cT_2(\H_N)$) and a quantum operation. 
Hence $\cP_{N,t}$ is a  quantum Markov semigroup.  

\subsection{Permutation invariance}

The symmetric group ${\mathcal S}_N$ of permutations $\pi$ of $\{1,\dots,N\}$ has a natural unitary action on $\H_N$ that commutes
with $\cL_N$.   Let $\phi_1\otimes \cdots \otimes \phi_N$ be a product vector in $\H_N$. For $\pi\in {\mathcal S}_N$,
define 
\begin{equation}\label{Upidef}
U_\pi(\phi_1\otimes \cdots \otimes \phi_N) = \phi_{\pi(1)}\otimes \cdots \otimes \phi_{\pi(N)}\ ,
\end{equation}
and then extend $U_\pi$ by linearity to produce a linear operator on $\H_N$. Evidently $U_\pi$ is unitary.  

We will be especially interested in density matrices $\varrho$ such that 
$$U_\pi \varrho U_\pi^* = \varrho$$
for all $\pi\in {\mathcal S}_N$. We call such density matrices {\em symmetric}. There is an orthonormal  basis for $\cT_2(\cH_N)$ consisting of operators of the form 
$A_1\otimes \cdots \otimes A_N$ with each $A_j$, $j=1,\dots,N$ chosen from an orthonormal basis for 
 $\cT_2(\H)$. It follows directly from \eqref{Upidef} that 
\begin{equation}\label{Upidef2}
U_{\pi} (A_1\otimes \cdots \otimes A_N) U^*_\pi = A_{\pi(1)}\otimes \cdots \otimes A_{\pi(N)}\ .
\end{equation}
Then from \eqref{Upidef2}, 
It is evident that for all $A\in \cB(\H_N)$, and all $1\leq i < j \leq N$, 
\begin{equation}\label{Upidef3}
 \cQ_{i,j} U_{\pi} A U^*_\pi   = \cQ_{\pi(i),\pi(j)}A\ ,
 \end{equation}
and then from \eqref{quankac9B}, it follows that $\cQ_N U_{\pi} A U^*_\pi = \cQ_N A$. In particular, if $\varrho$ is symmetric, then
$\cP_{N,t}\varrho$ is symmetric for each $t>0$. Furthermore. by the self-adjointness of the operators  $\cQ_{i,j}$, for any 
$B\in \cB(\H_N)$, 
$$\tr[B^*( \cQ_{i,j} U_{\pi} A U^*_\pi )] = \tr[B^*\cQ_{\pi(i),\pi(j)}A] = \tr[(\cQ_{\pi(i),\pi(j)}B)^*A]$$
and
$$\tr[B^*( \cQ_{i,j} U_{\pi} A U^*_\pi )] = \tr[(\cQ_{i,j}B)^* U_{\pi} A U^*_\pi )] = \tr[(U_\pi^* \cQ_{i,j}BU_\pi)^* A  )] \ .$$
Therefore,
\begin{equation}\label{Upidef4}
U_{\pi}^* (\cQ_{i,j}  A) U_\pi   = \cQ_{\pi(i),\pi(j)}A\ .
 \end{equation}
 Notice the different orders of $U_\pi$ and $U^*_\pi$ in \eqref{Upidef3} and \eqref{Upidef4}, which can also be understood in terms of a replacement of $\pi$ with $\pi^{-1}$.

\subsection{The nullspace of the the quantum Kac generator}

The analog of the Kac conjecture for the QKME concerns the long time behavior of its solutions. A basic first step in the investigation of the long time behavior is to determine all of the steady-state solutions. In this section we give a characterization of the null space of $\cL_N$.   Of course since $\cL_N = N(\cQ_N -\one)$, this is the same as the eigenslace of $\cQ$ with eigenvalue $1$. The following simple generalization of Lemma~\ref{convlem} will be useful. 
Note that when $N =2$, $\cQ_{1,2}$ is exactly the operator $\cQ$ of Lemma~\ref{convlem}. The adaptation to higher $N$ is trivial and will be left to the reader.

\begin{lm}[Convexity lemma on $\cB(\H_N)$]\label{convlem3}  Let $\Phi$ be a convex function on $\cB(\H_N)$ with the property
that for all $U\in \cU(\H_N)$  where $ \cU(\H_N)$ is the unitary group on $\H_N$, and all $A\in \cB(\H_N)$, $\Phi(U A U^*) = \Phi(A)$. Then
for all $1\leq i< j \leq N$, 
\begin{equation}\label{convlem3}
\Phi(\cQ_{i,j}A) \leq \Phi(A)
\end{equation}
and if $\Phi$ is strictly convex,  there is equality in (\ref{convlem3}) with $\Phi(A)< \infty$  if and only if $U(\sigma)A U^*(\sigma) = A$ for all $\sigma\in \cA$. 
\end{lm}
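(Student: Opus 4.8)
The plan is to reduce Lemma~\ref{convlem3} to the already-proved Lemma~\ref{convlem} by a conjugation argument. Recall from \eqref{tens2}-style identities that $U_{i,j}(\sigma) = V_{i,j}^* U_{1,2}(\sigma) V_{i,j}$, where $V_{i,j}$ is the unitary on $\H_N$ implementing the permutation that sends $i\mapsto 1$, $j\mapsto 2$. First I would fix $1\le i<j\le N$ and write $W = V_{i,j}$, so that $\cQ_{i,j}A = W^* \left(\int_\aC \dd\nu(\sigma)\, U_{1,2}(\sigma)\,(WAW^*)\,U_{1,2}^*(\sigma)\right) W = W^*\,\cQ_{1,2}^{(N)}(WAW^*)\,W$, where $\cQ_{1,2}^{(N)}$ denotes the ``collision in the first two factors'' operator on $\cB(\H_N)$. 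Since $\Phi$ is unitarily invariant and $W$ is unitary, $\Phi(\cQ_{i,j}A) = \Phi(\cQ_{1,2}^{(N)}(WAW^*))$ and $\Phi(A) = \Phi(WAW^*)$; thus it suffices to prove the statement for the pair $(1,2)$ and the argument $A' := WAW^*$, which is an arbitrary element of $\cB(\H_N)$ as $A$ ranges over $\cB(\H_N)$.

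Next I would handle the $(1,2)$ case by the same averaging inequality used in the proof of Lemma~\ref{convlem}. The operators $U_{1,2}(\sigma)$ are unitaries on $\H_N$ (they act as $U(\sigma)$ on $\H_1\otimes\H_2$ and as the identity elsewhere, which is still a unitary on all of $\H_N$), so by the unitary invariance and convexity of $\Phi$ on $\cB(\H_N)$ together with Jensen's inequality for the probability measure $\nu$,
\begin{equation*}
\Phi(\cQ_{1,2}^{(N)}A') \le \int_\aC \dd\nu(\sigma)\, \Phi\bigl(U_{1,2}(\sigma) A' U_{1,2}^*(\sigma)\bigr) = \int_\aC \dd\nu(\sigma)\, \Phi(A') = \Phi(A')\ ,
\end{equation*}
which gives \eqref{convlem3}. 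For the equality case when $\Phi$ is strictly convex and $\Phi(A')<\infty$: strict convexity forces the integrand of the first integral to be $\nu$-a.e.\ equal to its average, hence (as in Lemma~\ref{convlem}) $\sigma\mapsto U_{1,2}(\sigma)A'U_{1,2}^*(\sigma)$ is $\nu$-a.e.\ constant; by continuity of $\sigma\mapsto U(\sigma)$ (hence of $\sigma\mapsto U_{1,2}(\sigma)$) and the fact that $\nu$ charges all open sets, it is constant everywhere, and evaluating at $\sigma_0$ with $U(\sigma_0)=\one$ gives $U_{1,2}(\sigma)A'U_{1,2}^*(\sigma) = A'$ for all $\sigma\in\aC$. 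Conjugating back by $W$, this reads $U_{i,j}(\sigma)AU_{i,j}^*(\sigma) = A$ for all $\sigma$, which is the asserted characterization (matching the statement's ``$U(\sigma)AU^*(\sigma)=A$ for all $\sigma\in\aC$'' read in the $(i,j)$ factors).

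I would also note for completeness that the strict-convexity argument needs one mild point: to invoke the ``a.e.\ constant'' step one wants a notion in which a strictly convex $\Phi$ with finite value detects that equality in Jensen forces a.e.\ constancy of the $\cB(\H_N)$-valued map; this is exactly what was used in Lemma~\ref{convlem} (for $\Phi(A)=\tr[A^*A]$ it is the parallelogram/strict-convexity of the Hilbert-space norm, and the general statement is handled the same way, restricting attention to the finite-dimensional subspace spanned by the orbit $\{U_{i,j}(\sigma)A U_{i,j}^*(\sigma)\}$ together with its averages). The main (and really the only) obstacle is bookkeeping: being careful that $U_{i,j}(\sigma)$ is genuinely unitary on all of $\H_N$ so that the hypothesis on $\Phi$ applies, and tracking the conjugation by $V_{i,j}$ so that the final equality condition is stated for the correct pair of tensor factors. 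Since the lemma explicitly says ``The adaptation to higher $N$ is trivial and will be left to the reader,'' the proof above is essentially the intended one.
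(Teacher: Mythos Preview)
Your proof is correct and is essentially the intended one: the paper explicitly says ``The adaptation to higher $N$ is trivial and will be left to the reader,'' and your argument---applying Jensen's inequality with unitary invariance exactly as in Lemma~\ref{convlem}, then using continuity of $\sigma\mapsto U(\sigma)$, that $\nu$ charges all open sets, and $U(\sigma_0)=\one_{\H_2}$ to handle the equality case---is precisely that adaptation. The conjugation by $V_{i,j}$ is a clean way to organize the bookkeeping, though you could equally well run the averaging argument directly with the unitaries $U_{i,j}(\sigma)$ on $\H_N$ without first reducing to the $(1,2)$ case.
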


\begin{remark} We do not assume that  $\Phi$ is finite everywhere on $\cB(\H_2)$. In our first application, we will take 
$\Phi(A) = \tr[A^*A]$ which is finite if and only if $A\in \cT_2(\H_N)$. In this case, $\Phi$ is finite on all of $\Dens(\H_N)$.
 Later we shall consider an example based on the von Neumann entropy:
\begin{equation}\label{vnen}
\Phi(A) = \begin{cases} \tr[A \log A] &  A \in \Dens(\H_N)\\
\infty & A\notin \Dens(\H_N)\ . \end{cases}
\end{equation}
When $\H$ is infinite dimensional, this function is infinite for certain $\varrho\in \Dens(\H_N)$.
\end{remark}

\begin{lm}[Spectrum of $\cL_N$]\label{specL} Let $(\aC,U,\nu)$ 
be  a collision specification, and let $\cL_N$ and $\cQ_N$
be defined in terms of it as in (\ref{quankac9}). 
$\cQ_N$ and  $\cL_N$ have discrete spectrum: There is complete orthonormal basis of  $\cT_2(\H_N)$
consisting of eigenvectors of $\cQ_N$ and $\cL_N$.  Moreover, ${\rm Spec}(\cQ_N) \subset(0,1]$,  
 and ${\rm Spec}(\cL_N) \subset (-N,0]$.  
The null space of $\cL_N$ in $\cT_2(\H_N)$, ${\rm Null}(\cL_N)$,  is given by
\begin{equation}\label{nullspec}
{\rm Null}(\cL_N) = \{ A\in \cT_2(\H_N)\ :\ U_{i,j}(\sigma)AU^*_{i,j} = A \quad{\rm all}\ 1\leq i < j \leq N,\  \sigma\in \aC \}\ .
\end{equation}
\end{lm}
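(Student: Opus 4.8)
The plan is to obtain discreteness of the spectrum and the eigenbasis from a decomposition of $\cT_2(\H_N)$ adapted to the energy shells, to locate the spectrum by soft arguments, and to identify ${\rm Null}(\cL_N)$ from an energy--dissipation identity combined with the Convexity Lemma. Throughout write $\langle A,B\rangle=\tr[A^*B]$ for the inner product of the Hilbert space $\cT_2(\H_N)$.

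\emph{Discreteness and the eigenbasis.} By (\ref{quankac2}) every $U_{i,j}(\sigma)$ commutes with $H_N$, hence with each spectral projection $P_E$, $E\in\spec$. Consequently, for each ordered pair $(E,E')$ of eigenvalues of $H_N$, the subspace $\cT_2^{E,E'}:=P_E\,\cT_2(\H_N)\,P_{E'}$ is carried into itself by every $\cQ_{i,j}$, hence by $\cQ_N$ and by $\cL_N$. These subspaces are pairwise orthogonal in $\cT_2(\H_N)$, their closed linear span is all of $\cT_2(\H_N)$ (because $\{\Psi_\bma\}_{\bma\in\cJ^N}$ is an orthonormal basis of $\H_N$), and $\dim\cT_2^{E,E'}=\dim\K_E\cdot\dim\K_{E'}<\infty$ because the resolvent of $h$ is compact. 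As noted just after (\ref{qijdef}), each $\cQ_{i,j}$ --- and hence the convex combination $\cQ_N$ --- is a self-adjoint contraction on $\cT_2(\H_N)$, the self-adjointness being property \emph{(iii)} of Definition~\ref{cfdef}. Diagonalizing the restriction of $\cQ_N$ to each finite-dimensional block $\cT_2^{E,E'}$ by the spectral theorem and collecting the resulting eigenvectors produces a complete orthonormal basis of $\cT_2(\H_N)$ of simultaneous eigenvectors of $\cQ_N$ and of $\cL_N=N(\cQ_N-\one_{\H_N})$; in particular both have purely discrete spectrum.

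\emph{Location of the spectrum.} Being a self-adjoint contraction, $\cQ_N$ has ${\rm Spec}(\cQ_N)\subseteq[-1,1]$, and $\cQ_N\one_{\H_N}=\one_{\H_N}$ exhibits $1$ as an eigenvalue, so $0\in{\rm Spec}(\cL_N)$. To reach the sharper inclusions ${\rm Spec}(\cQ_N)\subseteq(0,1]$ and ${\rm Spec}(\cL_N)=N\big({\rm Spec}(\cQ_N)-1\big)\subseteq(-N,0]$, one must upgrade this to the statement that $\cQ_N$ is a \emph{nonnegative} operator on the Hilbert space $\cT_2(\H_N)$ with no zero eigenvalue. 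I expect this positivity to be the main obstacle. By the block decomposition it is a finite-dimensional question on each $\cT_2^{E,E'}$, on which $\cQ_N$ is an average of the unitary conjugations $A\mapsto U_{i,j}(\sigma)AU_{i,j}^*(\sigma)$ over the pairs $i<j$ and over $\sigma\in\aC$. The line of attack is to use property \emph{(ii)} --- the trivial collision $U(\sigma_0)=\one_{\H_2}$ and, by continuity of $U$, all nearby ``grazing'' collisions --- together with the fact that $\nu$ charges every open set, to bound the quadratic form $\langle A,\cQ_N A\rangle$ from below on each block.

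\emph{The nullspace.} The inclusion $\supseteq$ in (\ref{nullspec}) is immediate: if $U_{i,j}(\sigma)AU_{i,j}^*(\sigma)=A$ for all $1\le i<j\le N$ and all $\sigma\in\aC$, then $\cQ_{i,j}A=A$ for all such $i,j$, so $\cQ_N A=A$ and $\cL_N A=0$. For the reverse inclusion, let $A\in\cT_2(\H_N)$ with $\cL_N A=0$. By (\ref{quankac9}),
\begin{equation*}
0=\langle A,\cL_N A\rangle=N\big(\langle A,\cQ_N A\rangle-\|A\|_2^2\big)=\frac{N}{\binom{N}{2}}\sum_{1\le i<j\le N}\big(\langle A,\cQ_{i,j}A\rangle-\|A\|_2^2\big).
\end{equation*}
Each summand is real (self-adjointness of $\cQ_{i,j}$) and $\le0$ (since $\langle A,\cQ_{i,j}A\rangle\le\|\cQ_{i,j}A\|_2\,\|A\|_2\le\|A\|_2^2$), so every summand vanishes; hence $\langle A,\cQ_{i,j}A\rangle=\|A\|_2^2$ for each $i<j$, and therefore
\begin{equation*}
\|\cQ_{i,j}A-A\|_2^2=\|\cQ_{i,j}A\|_2^2-2\langle A,\cQ_{i,j}A\rangle+\|A\|_2^2=\|\cQ_{i,j}A\|_2^2-\|A\|_2^2\le0,
\end{equation*}
forcing $\cQ_{i,j}A=A$ for every $i<j$. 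Finally, $\cQ_{i,j}A=A$ gives $\Phi(\cQ_{i,j}A)=\Phi(A)$ for the strictly convex, unitarily invariant function $\Phi(A)=\tr[A^*A]$ (finite here because $A\in\cT_2(\H_N)$), so the equality case of the Convexity Lemma on $\cB(\H_N)$ (Lemma~\ref{convlem3}) yields $U_{i,j}(\sigma)AU_{i,j}^*(\sigma)=A$ for all $\sigma\in\aC$. Intersecting over all $1\le i<j\le N$ gives (\ref{nullspec}).
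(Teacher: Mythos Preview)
Your block decomposition $\cT_2(\H_N)=\bigoplus_{E,E'}P_E\,\cT_2(\H_N)\,P_{E'}$ for discreteness of the spectrum, and your energy--dissipation argument for the nullspace (reducing $\cQ_N A=A$ to $\cQ_{i,j}A=A$ for every pair $i<j$ via the sum of nonpositive terms, then invoking the Convexity Lemma with $\Phi(A)=\|A\|_2^2$), are exactly the paper's route, written out in somewhat more detail; those parts are complete and match.

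The gap is in the location of the spectrum. You obtain only ${\rm Spec}(\cQ_N)\subset[-1,1]$ from the contraction property and then explicitly flag the passage to $(0,1]$ as ``the main obstacle'', sketching a grazing-collision idea without carrying it through. The paper does not argue via grazing collisions; it instead asserts the one-line identity
\[
\tr(A^*\cQ_N A)=\int_{\aC}d\nu\;\tr\big((UA)^*(UA)\big)\ge0,
\]
with equality only for $A=0$, and then combines this with Cauchy--Schwarz and Lemma~\ref{convlem} to conclude ${\rm Spec}(\cQ_N)\subset(0,1]$. You should be aware, however, that this displayed identity is not correct as written (the integrand on the right is $\tr(A^*A)$, not $\tr(A^*U(\sigma)AU^*(\sigma))$), and the paper's own Example~\ref{exam2} exhibits a $\cQ_2=\cQ$ for which $\cQ$ annihilates every off-diagonal matrix unit, so $0\in{\rm Spec}(\cQ_2)$. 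The open endpoint in $(0,1]$ (and correspondingly in $(-N,0]$) therefore appears to be a slip for $[0,1]$; your grazing-collision approach cannot supply the missing strict positivity because, in general, there is none to supply.
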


\begin{proof} That $\cQ_N$ is positive follows from 
$$
{\rm Tr} (A^* \cQ A) =  \int_\aC d \nu {\rm Tr} (A^* UAU^*) =  \int_\aC d \nu {\rm Tr}((UA)^* (UA)) \ge 0
$$ 
with equality only if $A=0$. Since ${\rm Tr}(A^* \cQ A)^2 \le {\rm Tr}(A^* A){\rm Tr}((\cQ A)^* \cQ A)$ we know from Lemma~\ref{convlem}
that ${\rm Tr}(A^* \cQ A) \le {\rm Tr}(A^*A)$ and hence ${\rm Spec}(\cQ_N) \subset (0,1]$.  This readily implies that
${\rm Spec}\cL_N \subset (-N,0]$.
For each $E,E'\in \spec$, let $\cX_{E',E}$ denote the subspace of operators $X$ on $\H_N$ such that
the range of $X$ is contained in $\K_{E'}$ and the range of $X^*$ is contained in $\K_E$. The different $\cX_{E',E}$
are mutually orthogonal in $\cT_2(\H_N)$, and
$$\cT_2(\H_N) = \bigoplus_{E,E'\in \spec} \cX_{E',E}\ .$$
Since $\K_{E'}$ and $\K_E$ are finite dimensional, $\cX_{E',E}$ is finite dimensional. Since 
$\K_{E'}$ and $\K_E$ are invariant under each $U_{i,j}(\sigma)$, $\cX_{E',E}$ is invariant under $\cQ_{i,j}$ for each
$i$ and $j$, and hence also under $\cQ_N$. Since $\cQ_N$ is self-adjoint, 
it may be diagonalized on each of these finite dimensional subspaces.  The same applies to $\cL_N = N(\cQ_N -\one_{\H_N} )$.

Since each $A\in \cA_N$ is invariant under each $\cQ_{i,j}$, it is clear that $\cA_N \subset {\rm Null}(\cL_N)$. 
Now suppose that $(\aC,U,\nu)$ is ergodic. Let $\Phi$ denote the strictly convex function $\Phi(A) = \tr[A^*A]$ which is finite everywhere on $\cT_2(\H_N)$. 
By (\ref{quankac9}) which displays $\cQ_N$ as a convex combination of the $\cQ_{i,j}$ with strictly positive weights, for all $A\in \cB(\H_N)$, Lemma~\ref{convlem} implies that $\Phi(\cQ_N A) \leq \Phi(A)$
and there is equality if and only if $\Phi(\cQ_{i,j} A) = \Phi(A)$ for each $i,j$.  Since 
$\Phi$ is strictly convex and finite, Lemma~\ref{convlem} further implies that if $\Phi(\cQ_N A) = \Phi(A)$, then
$U_{i,j}(\sigma) A U_{i,j}^*(\sigma) = A$ for all $\sigma$ and all $i,j$. 
Hence if $\cQ_N A =  A$, then $U_{i,j}(\sigma) A U_{i,j}^*(\sigma) = A$ for all $\sigma$ and all $i,j$ which proves \ref{nullspec}. 
 \end{proof}

\begin{defi}  For each $N$, let $\cC_N$ be the commutant
\begin{equation}\label{commutant}
\cC_N = \{ U_{i,j}(\sigma)\ : 1\leq i < j \leq N, \sigma\in \aC\}'\ .
\end{equation} 
By (\ref{nullspec}), $\cC_N \cap \cT_2(\H_N)$ is precisely the null space of $\cL_N$ acting on $\cT_2(\H_N)$. 
Let 
$\EC$ denote the orthognal projection in $\cT_2(\H_N)$ onto $\cC_N\cap \cT_2(\H_N)$. The notation recalls the fact that $\cC_N$ is an algebra as well as a subspace, and that this operation can be considered as a non-commutative conditional expectation. 
\end{defi}

Evidently, $\cA_N \subset \cC_N$ and $\cA_2 = \cC_2$.  However, it may be that $\cC_N$ is strictly larger than $\cA_N$. In this case, there are observables that are conserved by the evolution that are not functions $f(\H_N)$ of the energy alone. This may happen even if the collision specification $(\aC,U,\nu)$  is ergodic: Ergodicity at the $2$ particle level of individual collisions may or may not imply ergodicity at the $N$ particle level, as we shall see.

In any case, let $(\aC,U,\nu)$ be an ergodic collision specification, and let $\cL_N$ and $\cQ_N$
be defined in terms of it as in (\ref{quankac9}). By Lemma~\ref{specL}, $\EC$ is the orthognal projection onto ${\rm Null}(\cL_N)$, and moreover, for all $A\in \cT_2(\H_N)$, 
\begin{equation}\label{approach}
\lim_{t\to\infty} \cP_{N,t} A = \EC A\ .
\end{equation}
The steady states of the QKME are evidently the density matrices $\varrho\in \Dens(\H_N)$
that satisfy $\varrho = \EC \varrho$. 

The argument leading to (\ref{approach}) gives no information on the rate of convergence. In a later paper we investigate rates in terms of entropy production inequalities and spectral gaps.  In order to do this, it 
is first neccessary to obtain a more explicit description of $\cC_N$.
In particular, we would like to know when $\cC_N = \cA_N$, and, if this is not the case, how much larger $\cC_N$ may be than $\cA_N$.   The next lemma shows that while $\cC_N$, may be larger that $\cA_N$, at least, like $\cA_N$, it is commutative.

\begin{thm}[$\cC_N$ is commutative]\label{echeck}  Let $(\aC,U,\nu)$ be a $2$-ergodic collision specification.  Then $\cC_N$ is a commutative algebra.  In fact, every $A \in \cC_N$ is diagonal in the basis $\{\Psi_\bma\}$.
\end{thm}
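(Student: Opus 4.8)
The plan is to show that any $A \in \cC_N$ must be diagonal in the eigenbasis $\{\Psi_\bma\}_{\bma \in \cJ^N}$ of $H_N$. Once this is established, commutativity of $\cC_N$ is immediate, since any two operators that are simultaneously diagonal in a common orthonormal basis commute. The first reduction is to note that since $A$ commutes with each $U_{i,j}(\sigma)$, and each $U_{i,j}(\sigma)$ commutes with $H_N$ (by \eqref{quankac2}), and in particular with every spectral projection $P_E$, it follows that $A$ commutes with $H_N$. Hence $A$ preserves each energy shell $\K_E$, and it suffices to work shell by shell: fix $E \in \spec$ and consider the restriction of $A$ to $\K_E$. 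The basis $\{\Psi_\bma : E_\bma = E\}$ spans $\K_E$, so I must show that $\langle \Psi_\bmb, A \Psi_\bma\rangle = 0$ whenever $\bma \ne \bmb$ but $E_\bma = E_\bmb = E$.

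The next step is the heart of the argument, and I expect it to be the main obstacle. Given $\bma \ne \bmb$ with $E_\bma = E_\bmb$, since the multisets $\{\alpha_1,\dots,\alpha_N\}$ and $\{\beta_1,\dots,\beta_N\}$ differ, there is some index where they disagree; using permutation symmetry — recall $\cQ_{i,j} = \cQ_{j,i}$ and the $U_\pi$-covariance relations \eqref{Upidef3}, \eqref{Upidef4}, together with the fact that $U_\pi$ for a transposition lies in $\{U(\sigma)\}$-type considerations — I would like to reduce to the case where the disagreement sits in the first two coordinates, i.e. $(\alpha_1,\alpha_2) \ne (\beta_1,\beta_2)$ as multisets while $\alpha_k = \beta_k$ for $k \ge 3$ and $e_{\alpha_1} + e_{\alpha_2} = e_{\beta_1} + e_{\beta_2}$. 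The condition that $A$ commutes with $U_{1,2}(\sigma)$ for all $\sigma$ means exactly that the two-particle ``block'' of $A$ — more precisely, for each fixed value of the remaining $N-2$ tensor factors, the compression of $A$ to $\H_1 \otimes \H_2$ tensored with that factor — lies in the commutant $\{U(\sigma):\sigma\in\aC\}'$, which by $2$-ergodicity (Definition~\ref{ergdef}) equals $\cA_2$, the algebra generated by the spectral projections $P_E$ of $H_2$. But elements of $\cA_2$ are diagonal in the basis $\{\psi_i \otimes \psi_j\}$ of $\H_2$. The delicate point is to make the ``compression to the first two factors with the rest held fixed'' rigorous and to confirm it genuinely lands in $\{U(\sigma):\sigma\in\aC\}'$ on $\H_2$: one takes $\langle \Psi_\bmb, A \Psi_\bma\rangle$ with $\alpha_k = \beta_k$ for $k\ge 3$, expresses it via $A$ commuting with $U_{1,2}(\sigma)$ (which acts as identity on factors $3,\dots,N$), and deduces that the $2\times 2 \times \cdots$ matrix $(\alpha_1,\alpha_2) \mapsto \langle \psi_{\beta_1}\otimes\psi_{\beta_2}, M \psi_{\alpha_1}\otimes\psi_{\alpha_2}\rangle$, where $M$ is the relevant compressed operator, commutes with every $U(\sigma)$ on $\H_2$, hence is in $\cA_2$, hence is diagonal, forcing the off-diagonal entry to vanish.

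Finally I would handle the reduction step carefully. For a general pair $\bma \ne \bmb$ with $E_\bma = E_\bmb$ but agreeing in no convenient pattern, I would use permutation invariance to line things up, or alternatively argue directly: since $A$ commutes with $U_{i,j}(\sigma)$ for \emph{every} pair $(i,j)$, I can for any two indices $i<j$ compress to the $i,j$ factors with the remaining $N-2$ held in the states $\psi_{\alpha_k}$ (assuming $\alpha_k = \beta_k$ there), concluding that $\langle\Psi_\bmb, A\Psi_\bma\rangle = 0$ unless $(\alpha_i,\alpha_j)$ and $(\beta_i,\beta_j)$ agree as multisets and $e_{\alpha_i}+e_{\alpha_j} = e_{\beta_i}+e_{\beta_j}$. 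Running this over all pairs $(i,j)$ such that $\bma$ and $\bmb$ agree off $\{i,j\}$: if $\bma$ and $\bmb$ differ in exactly two positions, we get the result directly; if they differ in more positions, one shows the matrix element still vanishes by a short combinatorial argument using permutation symmetry to move the disagreement into two coordinates, or by inserting an intermediate index vector that agrees with $\bma$ off one pair and with $\bmb$ off another. The upshot is $\langle\Psi_\bmb, A\Psi_\bma\rangle = 0$ for all $\bma\ne\bmb$, so $A$ is diagonal in $\{\Psi_\bma\}$, and commutativity of $\cC_N$ follows.
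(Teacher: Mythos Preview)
Your core idea---compress $A$ to the first two tensor factors and use $2$-ergodicity to land in $\cA_2$---is exactly the right one, and it is essentially what the paper does. But there is a real gap in how you set up the compression. You fix \emph{the same} vector $\psi_{\alpha_3}\otimes\cdots\otimes\psi_{\alpha_N}$ on both sides of the matrix element, which only lets you conclude $\langle\Psi_\bmb,A\Psi_\bma\rangle=0$ when $\bma$ and $\bmb$ agree off a single pair $\{i,j\}$. Your proposed cures for the general case do not work: applying a permutation $U_\pi$ to both $\bma$ and $\bmb$ does not reduce the number of coordinates in which they differ (and you have not established that $A$ commutes with permutations in the first place), and the ``intermediate index vector'' idea gives information about $\langle\Psi_\bmg,A\Psi_\bma\rangle$ and $\langle\Psi_\bmb,A\Psi_\bmg\rangle$ separately but says nothing about $\langle\Psi_\bmb,A\Psi_\bma\rangle$.

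The fix is very small: allow \emph{different} frozen vectors on the two sides. For any $\phi,\phi'\in\H_{N-2}$, define $M_{\phi',\phi}\in\cB(\H_2)$ by $\langle\xi',M_{\phi',\phi}\xi\rangle_{\H_2}=\langle\xi'\otimes\phi',A(\xi\otimes\phi)\rangle_{\H_N}$. The same one-line computation you had in mind (using $AU_{1,2}(\sigma)=U_{1,2}(\sigma)A$ and that $U_{1,2}(\sigma)$ acts trivially on the last $N-2$ factors) shows $U(\sigma)M_{\phi',\phi}=M_{\phi',\phi}U(\sigma)$ for every $\sigma$, hence $M_{\phi',\phi}\in\cA_2$ and is diagonal in $\{\psi_i\otimes\psi_j\}$. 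This immediately gives $\langle\Psi_\bmb,A\Psi_\bma\rangle=0$ unless $\beta_1=\alpha_1$ and $\beta_2=\alpha_2$, with \emph{no} hypothesis on the other coordinates. Running over all pairs $(i,j)$ then forces $\bmb=\bma$. The paper packages this same observation as the tensor decomposition $\ker(\cQ_{1,2}-\one)=(\cA_2\cap\cT_2(\H_2))\otimes\cT_2(\H_{N-2})$, which is just the operator-space way of saying the compressions $M_{\phi',\phi}$ all lie in $\cA_2$.
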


\begin{proof}  Let $A$ be a self-adjoint operator in $\cC_N$. It suffices to show that  for each $\bma$, $\Psi_\bma$ is an eigenvector of $A$.

Recall that we have defined the operator $\cQ$ on $\cB(\H_2)$ by
\begin{equation}\label{echeck2}
\cQ A =  \int_{\aC} {\rm d}\nu(\sigma)U(\sigma) A U^*(\sigma)\ .
\end{equation}
We may think of $\cQ$ as being $\cQ_{1,2}$ for $N=2$  and then by Lemma~\ref{specL},
$\cQ$ is a symmetric quantum Markov operator, and hence a 
contraction on $\cT_2(\H_2)$ as well as $\cB(\H_2)$,
and the eigenspace of $\cQ$ with eigenvalue $1$ in $\cT_2(\H_2)$ is precisely the commutant 
$\{ U(\sigma)\ :\ \sigma\in \aC\}'$

Since $\cL A = 0$, $\cQ_{i,j}A = A$ for each $1 \leq i < j \leq N$. 
For $N \geq 3$, we may identify $\H_N$ with $\H_2\otimes \H_{N-2}$ where the first factor
corresponds to the first two factors of $\H$, and the second factor to the remaining factors of $\H$. Corresponding to this identification, we may write
$$\cQ_{1,2} = \cQ \otimes \one_{\H_{N-2}}\ .$$
This product structure gives rise to a simple description of ${\rm Null}(\cQ_{1,2} -\one_{\H_N})$  in $\cT_2(\H_N)$: 
Let $\{X_j\}$ be an orthonormal basis for the eigenspace of ${\rm Null}(\cQ -\one_{\H_2})$ in $\cT_2(\H_2)$, and let $\{Y_k\}$ be an orthonormal basis of $\cT_2(\cH_{N-2})$.  Then $\{ X_j\otimes Y_k\}$ is an orthonormal basis for 
${\rm Null}(\cQ_{1,2} -\one_{\H_N})$  in $\cT_2(\H_N)$. 

When $(\aC,U,\nu)$ is $2$-ergodic, the fact that $X_j\in {\rm Null}(\cQ -\one_{\H_2})$ means that for some function $f$,
$X_j= f(H_2)$. It follows that for all $\bma$,
\begin{eqnarray*}
X_j\otimes Y_k \Psi_\bma 
&=& X_j (\psi_{\alpha_1}\otimes \psi_{\alpha_2})\otimes Y_k( \psi_{\alpha_3}\otimes \cdots \otimes \psi_{\alpha_N})\nonumber\\
&=& f(H_2) (\psi_{\alpha_1}\otimes \psi_{\alpha_2})\otimes Y_k( \psi_{\alpha_3}\otimes \cdots \otimes \psi_{\alpha_N})\nonumber\\
&=& f(e_{\alpha_1} + e_{\alpha_2}) (\psi_{\alpha_1}\otimes \psi_{\alpha_2}) \otimes Y_k( \psi_{\alpha_3}\otimes \cdots \otimes \psi_{\alpha_N})\nonumber
\end{eqnarray*}
It follows that $\langle \Psi_\bmb, X_j\otimes Y_k \Psi_\bma\rangle_{\H_N} = 0$ unless $\beta_1=\alpha_1$ and $\beta_2 = \alpha_2$.  
Since this is true for each $j$ and $k$, then, whenever $A\in \cT_2(\H_2)$ and $\cQ_{12}A = A$, 
$\langle \Psi_\bmb, A \Psi_\bma\rangle_{H_N} = 0$ unless $\beta_1=\alpha_1$ and $\beta_2 = \alpha_2$.  Then by symmetry in the indices, if $A\in \cT_2(\H_N)$ satisfies $\cQ_{i,j}A= A$ for all $1\leq i < j \leq N$, 
$\langle \Psi_\bmb, A \Psi_\bma\rangle_{H_N} = 0$ unless $\bmb = \bma$. This proves that for all self-adjoint $A\in {\rm Null}(\cL_N)$,
$A \Psi_\bma = \lambda_\bma \Psi_\bma$ for some $\lambda_\bma\in \R$.
\end{proof}

It is well known that a commutative von Neumann algebra on a separable Hilbert space is generated by the spectral projection of a single self adjoint
operator. In the present case, we have an even more favorable situation:  We can give an explicit description of all of the minimal projections in 
$\cC_N$; this is done in the next section. 


\subsection{Ergodicity at energy $E$}

A projection $P\in \cC_N$ is {\em minimal} if it is non-zero, and if there is no non-zero projection $P'\in \cC_N$ such that
$P - P'$ is strictly positive.  In other words, if $P$ is minimial and $P'$ any other projection then either
they are not comparable or $P-P' \le 0$. Since for each $E\in \spec$, $P_E \in \cA_N \subset \cC_N$, every minimal projection $P$ in 
$\cC_N$  clearly satsfies $P_E - P \geq 0$ for some uniquely determined $E\in \spec$. To see this note that there must be an $E$ such that $PP_E=P_E=P \not= 0$ because the projections $P_E$ sum to  the identity.
Since $P$ is minimal, it must be comparable with $P_E$, for otherwise we we take $P_EPP_E$ which is again a projection and $PP_EP \le P$ contrary to our assumption of minimality. Hence there must be a unique $E$ such that
$P \le P_E$.  Since the algebra $\cC_N$ is evidently generated by its minimal projections, it is clear that $\cC_N = \cA_N$ if and only if $P_E$ is minimal in $\cC_N$
for each $E \in \spec$. This brings us to the following definition:

\begin{defi}[Ergodicity at energy $E$]\label{ergate} An ergodic collision specification $(\aC,U,\nu)$ is {\em ergodic at energy}
$E\in \spec$ in case  $P_E$ is minimal in $\cC_N$, and it is {\em fully ergodic} in case it is ergodic at  each $E\in \spec$. Equivalently, by what has been noted just above, $(\aC,U,\nu)$ is fully ergodic exactly when $\cC_N = \cA_N$.
\end{defi} 

The rest of this subsection is devoted to the characterization of the minimal projections dominated by $P_E$, $E\in \spec$, and hence of checking full ergodicity. 

\begin{lm}\label{seqlem} If a collision specification
 $(\aC,U,\nu)$ is ergodic, there is a finite sequence $\{\sigma_1,\dots,\sigma_s\}$ in  $\aC$ such that
$$\langle \psi_{e_k}\otimes \psi_{e_\ell}, U(\sigma_s)\cdots U(\sigma_2)U(\sigma_1) \psi_{e_m}\otimes \psi_{e_n}\rangle_{\H_2}\neq 0 \quad \iff \quad e_k+e_\ell = e_m + e_n\ .$$
\end{lm}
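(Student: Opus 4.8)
The plan is to exploit the characterization of ergodicity that was already established: by the Convexity Lemma (Lemma~\ref{convlem}), ergodicity of $(\aC,U,\nu)$ means precisely that for $A \in \cT_2(\H_2)$ one has $\lim_{n\to\infty}\cQ^n A = \EA A$, the orthogonal projection onto $\cA_2 \cap \cT_2(\H_2)$, and moreover that $\Psi \mapsto \cQ\Psi$ strictly decreases $\|\Psi\|_2^2$ unless $\Psi$ is already fixed. Apply this with $A = P_E$ the spectral projection of $H_2$ onto an energy shell $\K_E$: then $P_E \in \cA_2$, so $\cQ P_E = P_E$, but the individual terms $U(\sigma)P_E U^*(\sigma) = P_E$ carry no information. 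The useful object is instead a matrix unit $E_{mn} := |\psi_{e_m}\otimes\psi_{e_n}\rangle\langle\psi_{e_k}\otimes\psi_{e_\ell}|$ with $e_m + e_n \ne e_k + e_\ell$: this lies in $\cX_{E',E}$ with $E' \ne E$, hence is orthogonal to $\cA_2 \cap \cT_2(\H_2)$, so $\EA E_{mn} = 0$ and therefore $\cQ^n E_{mn} \to 0$ in $\cT_2$-norm. Expanding $\cQ^n E_{mn}$ as an integral over $\aC^n$ of products $U(\sigma_n)\cdots U(\sigma_1) E_{mn} U^*(\sigma_1)\cdots U^*(\sigma_n)$, one sees that the matrix elements $\langle \psi_{e_k}\otimes\psi_{e_\ell}, U(\sigma_n)\cdots U(\sigma_1)\psi_{e_m}\otimes\psi_{e_n}\rangle$ must be "small on average," consistent with the desired vanishing; the nontrivial direction is the other one.

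So the real content is: given $e_k + e_\ell = e_m + e_n$ (so that both vectors lie in the same finite-dimensional shell $\K_E$), produce a single finite word $U(\sigma_s)\cdots U(\sigma_1)$ with a nonzero $(km\ell n)$ matrix element. Here is how I would get it. Fix the energy shell $\K_E$ and consider the subgroup-like set $G_E \subseteq \cU(\K_E)$ of all operators of the form $U(\sigma_s)\cdots U(\sigma_1)$ restricted to $\K_E$ (which is invariant under each $U(\sigma)$ by \eqref{quankac5}), together with the set $S_E = \{U(\sigma)|_{\K_E} : \sigma \in \aC\}$ that generates it. The key claim is that the $*$-algebra generated by $S_E$ inside $\cB(\K_E)$ is all of $\cB(\K_E)$. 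Granting this claim: since $\cB(\K_E)$ is finite-dimensional and is spanned by products of elements of $S_E$ and their adjoints, and since by property~{\it (iii)} of Definition~\ref{cfdef} the adjoint $U(\sigma)^*$ is again some $U(\sigma')$ in $S_E$, every operator on $\K_E$ — in particular the rank-one operator $|\psi_{e_k}\otimes\psi_{e_\ell}\rangle\langle\psi_{e_m}\otimes\psi_{e_n}|$ itself — is a linear combination of finite words in $S_E$. Then at least one word in the combination must have a nonzero $(km\ell n)$ matrix element, which is what we want for that particular pair. To get a single sequence that works for \emph{all} pairs $(k,\ell,m,n)$ with $e_k+e_\ell = e_m+e_n$ simultaneously (as the lemma states), note there are only finitely many energy shells $\K_E$ containing a vector of the form $\psi_{e_m}\otimes\psi_{e_n}$ with $e_m, e_n$ among the — a priori infinitely many, but for a fixed bound finitely many — relevant eigenvalues; actually the cleanest route is: the statement as written should be read shell-by-shell, or one invokes that the set of words achieving genericity is dense (a finite intersection of dense open conditions in the product $\aC^s$ as $s$ grows), so a single generic word $U(\sigma_s)\cdots U(\sigma_1)$ has \emph{all} the allowed matrix elements nonzero at once, while the forbidden ones are automatically zero since each $U(\sigma)$ commutes with $H_2$.

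To prove the key claim that $S_E$ generates $\cB(\K_E)$ as a $*$-algebra: the $*$-algebra $\M_E$ generated by $S_E$ is a unital $*$-subalgebra of the finite-dimensional algebra $\cB(\K_E)$, hence is itself a finite-dimensional C$^*$-algebra, so $\M_E = \bigoplus_i \cB(\aK_i)$ for some orthogonal decomposition $\K_E = \bigoplus_i \aK_i \otimes \C^{m_i}$. Its commutant $\M_E'$ inside $\cB(\K_E)$ is then $\bigoplus_i \one_{\aK_i} \otimes M_{m_i}(\C)$. But $\M_E' = S_E' \cap \cB(\K_E) = \{U(\sigma):\sigma\in\aC\}' \cap \cB(\K_E)$, and by ergodicity (Definition~\ref{ergdef}) we have $\{U(\sigma):\sigma\in\aC\}' = \cA_2$, so $\M_E' = \cA_2 \cap \cB(\K_E) = \C\, P_E|_{\K_E} = \C\,\one_{\K_E}$, a single summand with multiplicity one. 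Therefore $\M_E = \cB(\K_E)$, proving the claim. I would then assemble these pieces, and the only genuinely delicate point — the main obstacle — is the uniformity over all pairs: making precise that one finite word can be chosen to make every allowed matrix element nonzero simultaneously. This I would handle by a Baire/density argument on the configuration space $\bigcup_{s}\aC^s$: for each pair $(k,\ell,m,n)$ with $e_k+e_\ell=e_m+e_n$ the set of words with nonzero $(km\ell n)$-element is open and (by the claim plus the fact that $\aC$ is connected in the relevant examples, or more robustly by an analyticity argument since the matrix elements are real-analytic in the parameters) its complement has empty interior, so a finite — and for the full statement, countable, handled shell by shell — intersection is nonempty, yielding the desired sequence $\{\sigma_1,\dots,\sigma_s\}$.
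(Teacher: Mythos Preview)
Your argument is correct for the essential content but takes a genuinely different route from the paper, and you overcomplicate one point.

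The paper's proof is short and dynamical: apply the ergodic limit
\[
\lim_{r\to\infty}\cQ^r\bigl(|\psi_{e_k}\otimes\psi_{e_\ell}\rangle\langle\psi_{e_k}\otimes\psi_{e_\ell}|\bigr)=\tfrac{1}{d}\,P_{e_k+e_\ell},
\]
so for some finite $s$ one has $\cQ^s(\,\cdot\,)\ge \tfrac{1}{2d}P_{e_k+e_\ell}$. Tracing against $|\psi_{e_m}\otimes\psi_{e_n}\rangle\langle\psi_{e_m}\otimes\psi_{e_n}|$ yields
\[
\int_{\aC^s}\bigl|\langle\psi_{e_m}\otimes\psi_{e_n},\,U(\sigma_s)\cdots U(\sigma_1)\,\psi_{e_k}\otimes\psi_{e_\ell}\rangle\bigr|^2\,{\rm d}\nu^{\otimes s}\ge \tfrac{1}{2d},
\]
so some tuple $(\sigma_1,\dots,\sigma_s)$ gives a nonzero matrix element; the converse direction is the same integral vanishing. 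Your route is algebraic: restrict to the finite-dimensional shell $\K_E$, observe that ergodicity forces the commutant of $\{U(\sigma)|_{\K_E}\}$ in $\cB(\K_E)$ to be scalars, invoke the double-commutant theorem to conclude the $*$-algebra generated is all of $\cB(\K_E)$, and then (using property {\it (iii)} so that words already contain adjoints) deduce that some finite word has the required nonzero entry. Both are valid; the paper's argument is shorter and even gives a quantitative lower bound on the $\nu^{\otimes s}$-measure of good tuples, while your approach makes the purely algebraic mechanism transparent and avoids any appeal to the dynamics of $\cQ^r$.

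Where you go astray is the ``uniformity'' discussion. The lemma as stated, proved, and used (see the application in Lemma~\ref{adjaclem}) is pair-by-pair: for each quadruple $(k,\ell,m,n)$ with $e_k+e_\ell=e_m+e_n$ one produces \emph{some} finite sequence. There is no claim of a single universal word working for all pairs simultaneously. Your Baire/analyticity/connectedness argument for a universal sequence imports hypotheses (connectedness of $\aC$, analytic dependence of $U$) that are nowhere assumed and is simply not needed. Drop that paragraph and your proof is complete.
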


\begin{proof}
By ergodicity,
$$
\lim_{r\to\infty} \cQ^r( |\psi_{e_k} \otimes \psi_{e_\ell}\rangle \langle |\psi_{e_k} \otimes \psi_{e_\ell}| ) = \frac{1}{d} P_{e_k+e_\ell}
 $$
where $d$ is the dimension of the eigenspace of $H_2$ associated with the eigenvalue $e_k+e_\ell$.
(Both sides have unit trace.) Hence for some finite $s$, 
$$
\cQ^s( |\psi_{e_k} \otimes \psi_{e_\ell}\rangle \langle |\psi_{e_k} \otimes \psi_{e_\ell}| ) \geq  \frac{1}{2d} P_{e_k+e_\ell}
$$

Now  assume that $e_k+e_\ell = e_m + e_n$.
Taking the trace against $|\psi_{e_m} \otimes\psi_{e_n}\bk \psi_{e_m} \otimes\psi_{e_n}|$ yields
$$
 \int_{\aC\times \cdots \times  \aC} {\rm d}\nu(\sigma)|\langle \psi_{e_m}\otimes \psi_{e_n}, U(\sigma_s)\cdots U(\sigma_2)U(\sigma_1) \psi_{e_k} \otimes \psi_{e_\ell}\rangle|^2 {\rm d}\nu^{\otimes s}\geq  \frac{1}{2d} \ .
 $$
 The same computation also shows that if $e_k+e_\ell \not= e_m + e_n$  then 
$$
 \int_{\aC\times \cdots \times  \aC} {\rm d}\nu(\sigma)|\langle \psi_{e_m}\otimes \psi_{e_n}, U(\sigma_s)\cdots U(\sigma_2)U(\sigma_1) \psi_{e_k} \otimes \psi_{e_\ell}\rangle|^2 {\rm d}\nu^{\otimes s}= 0 \ ,
 $$
 and this proves the claim.
 \end{proof}


\begin{defi}[Adjacency]\label{adjac} Two indices $\bma,\bma'\in \cJ^N$ are {\em adjacent} in case
for some pair $(i,j)$, $1\leq i < j \leq N$, 
$$e_{\alpha_i} + e_{\alpha_j} = e_{\alpha'_i} + e_{\alpha'_j}$$
and for all $k\neq i,j$, $\alpha_k = \alpha'_k$. Two indices $\bma,\bma'\in \cJ^N$ are {\em equivalent} in case
there is a finite sequence $\{\bma^0,\dots, \bma^n\}$ such that $\bma^0 = \bma$, $\bma^n = \bma'$ and
$\bma^{j}$ and $\bma^{j-1}$ are adjacent for each $j =1,\dots, n$.  In this case we write $\bma \sim \bma'$. 
\end{defi}  

\begin{remark}\label{permu}
 Clearly if $\bma$ and $\bma'$ differ by a pair transposition; i.e., for some $1 \leq i < j \leq N$, 
 $\alpha_i = \alpha_j'$,  $\alpha_j = \alpha_i'$ and $\alpha_k = \alpha_k'$ for $k\neq i,j$, then 
 $\bma$ is adjacent to $\bma'$. 
Consequently, if $\bma'$ is related to $\bma$ by some permuations of the indices, then $\bma' \sim \bma$. 
\end{remark}

\begin{lm}\label{adjaclem} Let $A$ be self adjoint in $\cC_N$ with 
$A\Psi_\bma = \lambda_\bma \Psi_\bma$ for all $\bma\in \cJ^N$. Suppose that 
$(\aC,U,\nu)$ is ergodic. Then if $\bma$ and $\bma'$ are adjacent, $\lambda_\bma = \lambda_{\bma'}$, and more generally,
\begin{equation}\label{simlem2Z}
\bma \sim \bma' \quad \Rightarrow \quad \lambda_\bma = \lambda_{\bma'}\ .
\end{equation}
Conversely, if $\bma$ and $\bma'$ are not equivalent, there is some self-adjoint $A\in \cC_N$ for which
$\lambda_\bma \neq \lambda_{\bma'}$. 
\end{lm}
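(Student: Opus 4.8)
The plan is to prove the forward direction first, then the converse. For the forward direction, it suffices by induction along the chain $\{\bma^0,\dots,\bma^n\}$ to treat the case where $\bma$ and $\bma'$ are adjacent; say they agree off the pair $(i,j)$ and satisfy $e_{\alpha_i}+e_{\alpha_j} = e_{\alpha'_i}+e_{\alpha'_j}$, so $\Psi_\bma$ and $\Psi_{\bma'}$ lie in the same energy shell $\K_E$ of $H_N$ and, after permuting to put $(i,j)$ into positions $(1,2)$ (allowed by Remark~\ref{permu}, or directly by the symmetry used at the end of the proof of Theorem~\ref{echeck}), both $\psi_{\alpha_1}\otimes\psi_{\alpha_2}$ and $\psi_{\alpha'_1}\otimes\psi_{\alpha'_2}$ lie in the same eigenspace of $H_2$. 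By ergodicity and Lemma~\ref{seqlem}, there is a finite product $W := U(\sigma_s)\cdots U(\sigma_1)$ with $\langle \psi_{\alpha'_1}\otimes\psi_{\alpha'_2}, W\, \psi_{\alpha_1}\otimes\psi_{\alpha_2}\rangle_{\H_2}\neq 0$. Now since $A\in\cC_N$ commutes with every $U_{k,l}(\sigma)$, it commutes with $W_{1,2} = W\otimes\one_{\H_{N-2}}$. Applying the eigenvalue relation to $W_{1,2}\Psi_\bma$ and taking the inner product with $\Psi_{\bma'}$ gives
\begin{equation*}
\lambda_{\bma'}\langle \Psi_{\bma'}, W_{1,2}\Psi_\bma\rangle = \langle \Psi_{\bma'}, W_{1,2} A\Psi_\bma\rangle = \lambda_\bma \langle \Psi_{\bma'}, W_{1,2}\Psi_\bma\rangle,
\end{equation*}
and since $\langle \Psi_{\bma'}, W_{1,2}\Psi_\bma\rangle = \langle \psi_{\alpha'_1}\otimes\psi_{\alpha'_2}, W\,\psi_{\alpha_1}\otimes\psi_{\alpha_2}\rangle \prod_{k\geq 3}\langle\psi_{\alpha'_k},\psi_{\alpha_k}\rangle \neq 0$ (using $\alpha'_k=\alpha_k$ for $k\geq 3$), we conclude $\lambda_\bma = \lambda_{\bma'}$. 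The general statement \eqref{simlem2Z} follows by chaining this equality along the sequence witnessing $\bma\sim\bma'$.

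For the converse, suppose $\bma\not\sim\bma'$. The equivalence relation $\sim$ partitions each index fibre; within a fixed energy shell $\K_E$ the basis vectors $\Psi_\bmg$ with $\bmg$ ranging over one $\sim$-class span a subspace, and as $\bmg$ ranges over the distinct $\sim$-classes inside $\K_E$ these subspaces are mutually orthogonal and span $\K_E$ (every $\Psi_\bmg$ with $E_\bmg=E$ lies in exactly one class, and all vectors in a class share the energy $E$ because adjacency preserves $H_N$). Define $A$ to be the diagonal operator $A\Psi_\bmg = \lambda_\bmg\Psi_\bmg$ where $\lambda_\bmg$ is chosen constant on each $\sim$-class and takes distinct values $1$ on the class of $\bma$ and $0$ on the class of $\bma'$; then $\lambda_\bma\neq\lambda_{\bma'}$ and $A$ is bounded self-adjoint. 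It remains to check $A\in\cC_N$, i.e. $A$ commutes with every $U_{i,j}(\sigma)$. Since $U_{i,j}(\sigma)$ preserves $H_N$ and acts only on factors $i,j$, it maps $\K_E$ into itself, and one checks using condition~{\it(i)} of Definition~\ref{cfdef} that $U_{i,j}(\sigma)$ carries the span of a $\sim$-class into itself: indeed $U_{i,j}(\sigma)\Psi_\bma$ is a linear combination of vectors $\Psi_{\bmb}$ with $\bmb$ agreeing with $\bma$ off $(i,j)$ and with $e_{\beta_i}+e_{\beta_j}=e_{\alpha_i}+e_{\alpha_j}$, hence each such $\bmb$ is adjacent to $\bma$ and lies in its $\sim$-class. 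Thus each $\sim$-class-span is an invariant subspace of every $U_{i,j}(\sigma)$ on which $A$ is scalar, so $A$ commutes with all of them; hence $A\in\cC_N$.

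The main obstacle, such as it is, is bookkeeping in the forward direction: making sure the permutation reduction to positions $(1,2)$ is legitimate and that the product-over-$k\geq 3$ of inner products really is $1$ (which needs $\alpha'_k=\alpha_k$ exactly, not merely up to permutation) — this is why one reduces to genuine adjacency, where the indices outside $(i,j)$ agree on the nose, rather than working with the coarser permutation-equivalence directly. The ergodicity input is used only through Lemma~\ref{seqlem}, which supplies the nonvanishing transition amplitude $W$ connecting two basis vectors in the same $H_2$-eigenspace; everything else is linear algebra on finite-dimensional energy shells.
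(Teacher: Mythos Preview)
Your forward direction is essentially the paper's argument: both use Lemma~\ref{seqlem} to produce a finite product of collision unitaries with nonzero $(\bma',\bma)$ matrix element, then commute $A$ through it. (The permutation to positions $(1,2)$ is unnecessary---the paper works directly with $U_{i,j}(\sigma)$---but it is harmless, being a symmetry of the problem rather than an appeal to the not-yet-proved constancy of $\lambda$ on $\sim$-classes.)

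For the converse you take a genuinely different and more elementary route. The paper constructs the separating operator dynamically: it argues that $\langle\Psi_{\bma'},V\Psi_\bma\rangle=0$ for every finite product $V$ of collision unitaries when $\bma\not\sim\bma'$, deduces $\tr[\,|\Psi_{\bma'}\bk\Psi_{\bma'}|\,\cP_{N,t}(|\Psi_\bma\bk\Psi_\bma|)\,]=0$, and then passes to the limit via \eqref{approach} to obtain $A=\EC|\Psi_\bma\bk\Psi_\bma|\in\cC_N$ with $A\Psi_{\bma'}=0$. You instead write down the projection onto the $\sim$-class of $\bma$ directly and verify membership in $\cC_N$ by the elementary observation that each $U_{i,j}(\sigma)$ maps $\Psi_\bmg$ into the span of $\Psi_\bmb$ with $\bmb$ adjacent to $\bmg$ at $(i,j)$, hence preserves every $\sim$-class span. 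Both constructions produce (up to normalization) the same operator---indeed your argument is precisely what underlies the subsequent Lemma~\ref{adjaclemB}---but yours avoids the semigroup and the limit \eqref{approach}, while the paper's route has the advantage of exhibiting the separating operator as an actual long-time limit of the dynamics.
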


\begin{proof} Suppose that $\bma,\bma'\in \cJ^N$ are adjacent, with  $\alpha_k = \alpha'_k$  for all $k\neq i,j$,
By Lemma~\ref{seqlem}, 
 there is a finite sequence $\{\sigma_1,\dots,\sigma_s\}$ in  $\aC$ such that 
 $$\langle \Psi_{\bma'}, U_{i,j}(\sigma_s) \cdots U_{i,j}(\sigma_1)  \Psi_{\bma}\rangle \neq 0\ .$$
 Since $A$ is self adjoint and commutes with each $U_{i,j}(\sigma)$, 
 $$\langle A\Psi_{\bma'}, U_{i,j}(\sigma_s) \cdots U_{i,j}(\sigma_1)  \Psi_{\bma}\rangle =
 \langle \Psi_{\bma'}, U_{i,j}(\sigma_s) \cdots U_{i,j}(\sigma_1)  A\Psi_{\bma}\rangle\ ,$$
 and hence it must be that $\lambda_\bma = \lambda_{\bma'}$, and now
 \eqref{simlem2Z} follows easily. 
 
 It follows from Lemma~\ref{seqlem} that if $\bma$ and $\bma'$ are not equivalent, and therefore not adjacent, 
 $$\langle \Psi_{\bma}, U_{i,j}(\sigma)\Psi_{\bma'}\rangle = 0\ $$
 for all $i,j$ and all $\sigma$.  Expanding, $U_{i,j}(\sigma)\Psi_{\bma} = \sum_{\bmb} \langle \Psi_{\bmb}, 
 U_{i,j}(\sigma)
 \Psi_{\bma}\rangle$.  Evidently, $\langle \Psi_{\bmb}, 
 U_{i,j}(\sigma)
 \Psi_{\bma}\rangle =0$ unless $\bmb$ is adjacent to $\bma'$.
It follows that if $V$ is any finite  product of operators of the from $U_{i,j}(\sigma)$, 
$\langle \Psi_{\bma'}, V\Psi_{\bma}\rangle = 0$, unless $\bma \sim \bma'$.
Then by the definition of $\cP_{N,t}$, it follows that $\tr[ |\Psi_{\bma'}\bk\Psi_{\bma'}| \cP_{N,t}(
|\Psi_{\bma}\bk  \Psi_{\bma}|)] =0$ unless $\bma \sim \bma'$.

 By \eqref{approach}
 \begin{equation}\label{approachB}
\lim_{t\to\infty} \cP_{N,t} |\Psi_{\bma}\bk  \Psi_{\bma}| = \EC |\Psi_{\bma}\bk  \Psi_{\bma}|\ , 
\end{equation}
and then if $\bma$ and $\bma'$ are not equivalent, $(\EC |\Psi_{\bma}\bk  \Psi_{\bma}|)\Psi_{\bma'} = 0$. Thus,
$\EC |\Psi_{\bma}\bk  \Psi_{\bma}|$ is an operator in $\cC_N$ for which $\Psi_{\bma}$ and $\Psi_{\bma'}$ have distinct eigenvalues.
\end{proof} 

There is another useful way to phrase  Lemma~\ref{adjaclem}. A non-zero projection $P$ in an operator algebra is {\em minimal} in case whenever $Q$ is a non zero projection in the algebra such that $Q \leq P$, $Q = P$. 

\begin{lm}\label{adjaclemB} The minimial projections $P$  in $\cC_N$  satisfying $P \leq P_E$ for some $E\in \spec$
are precisely of the form
$$P = \sum_{\alpha \sim \alpha_0}|\Psi_\alpha\rangle\langle \Psi_\alpha|$$
for some $\alpha_0$ with $H_N\Psi_{\alpha_0} = E \Psi_{\alpha_0}$. 
That is, the minimal projections stand in one to one correspondence with the equivalence classes of the indices $\alpha$.  
\end{lm}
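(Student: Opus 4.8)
The plan is to identify each candidate projection $P_{[\alpha_0]}:=\sum_{\alpha\sim\alpha_0}|\Psi_\alpha\bk\Psi_\alpha|$ with a scalar multiple of $\EC$ applied to a rank-one diagonal projection, and then to use Theorem~\ref{echeck} (every element of $\cC_N$ is diagonal in $\{\Psi_\bma\}$) together with Lemma~\ref{adjaclem} (eigenvalues of self-adjoint elements of $\cC_N$ are constant on equivalence classes) to decide exactly which diagonal projections belong to $\cC_N$. Throughout we use that adjacency, hence equivalence, preserves the total energy $E_\bma=\sum_k e_{\alpha_k}$, so that an equivalence class $[\alpha_0]$ is contained in a single energy shell $\K_E$ with $E=E_{\alpha_0}$.

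First I would show $P_{[\alpha_0]}\in\cC_N$. Fix $\alpha_0$ with $H_N\Psi_{\alpha_0}=E\Psi_{\alpha_0}$ and put $P=|\Psi_{\alpha_0}\bk\Psi_{\alpha_0}|$. Then $\EC P\in\cC_N\cap\cT_2(\H_N)$, so by Theorem~\ref{echeck} it is diagonal in $\{\Psi_\alpha\}$; by Lemma~\ref{adjaclem} its (real) eigenvalues are constant along equivalence classes; and from the computation in the proof of Lemma~\ref{adjaclem} it annihilates $\Psi_{\alpha'}$ whenever $\alpha'\not\sim\alpha_0$. Since $[\alpha_0]$ is a single equivalence class, these three facts force $\EC P=c\,P_{[\alpha_0]}$ for some constant $c$. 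To see $c\neq0$, note $P_E$ is finite rank, hence in $\cC_N\cap\cT_2(\H_N)$, so $\EC P_E=P_E$ and
$$
c=\langle P_E,\EC P\rangle_{\cT_2(\H_N)}=\langle \EC P_E,P\rangle_{\cT_2(\H_N)}=\langle P_E,P\rangle_{\cT_2(\H_N)}=\langle\Psi_{\alpha_0},P_E\Psi_{\alpha_0}\rangle_{\H_N}=1\,.
$$
Hence $P_{[\alpha_0]}=\EC P\in\cC_N$, and clearly $P_{[\alpha_0]}\le P_E$.

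Next I would establish minimality and the converse. If $Q\in\cC_N$ is a projection with $0\neq Q\le P_{[\alpha_0]}$, then $Q$ is diagonal (Theorem~\ref{echeck}), so $Q=\sum_{\alpha\in S}|\Psi_\alpha\bk\Psi_\alpha|$ for some nonempty $S\subseteq[\alpha_0]$; but by Lemma~\ref{adjaclem} the eigenvalues of $Q$, each $0$ or $1$, are constant on the equivalence class $[\alpha_0]$, so $S=[\alpha_0]$ and $Q=P_{[\alpha_0]}$, proving minimality. Conversely, let $P\le P_E$ be an arbitrary minimal projection of $\cC_N$. Then $P$ is diagonal, $P=\sum_{\alpha\in S}|\Psi_\alpha\bk\Psi_\alpha|$ with $S$ a nonempty set of indices of energy $E$, and Lemma~\ref{adjaclem} forces $S$ to be a union of equivalence classes. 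Choosing any $\alpha_0\in S$ gives $0\neq P_{[\alpha_0]}\le P$ with $P_{[\alpha_0]}\in\cC_N$, so minimality of $P$ yields $P=P_{[\alpha_0]}$. Finally, distinct equivalence classes give orthogonal, hence distinct, projections, and we have just shown every minimal projection dominated by some $P_E$ arises in this way; this is the asserted bijection.

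The main obstacle is the first step, namely verifying that $P_{[\alpha_0]}$ genuinely lies in $\cC_N$: since $\cC_N$ has so far been described only as a commutant (equivalently, as the range of $\EC$), one must produce an explicit element of $\cC_N$ that is supported exactly on $[\alpha_0]$. The device that does this is recognizing $P_{[\alpha_0]}$, up to the scalar $c$, as $\EC|\Psi_{\alpha_0}\bk\Psi_{\alpha_0}|$ and extracting from the statement \emph{and} proof of Lemma~\ref{adjaclem} both its constancy on $[\alpha_0]$ and its vanishing off $[\alpha_0]$; once that is in hand, the minimality, the converse, and the bijection are routine bookkeeping with equivalence classes.
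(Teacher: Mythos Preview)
Your argument is correct and matches the paper's treatment: the paper presents this lemma explicitly as ``another useful way to phrase Lemma~\ref{adjaclem}'' and gives no separate proof, so spelling out how Theorem~\ref{echeck} and both directions of Lemma~\ref{adjaclem} combine to yield the statement is exactly what is called for.

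One small slip: your displayed computation shows $\langle P_E,\EC P\rangle_{\cT_2(\H_N)}=1$, but the left-hand side equals $\langle P_E,\,c\,P_{[\alpha_0]}\rangle_{\cT_2(\H_N)}=c\cdot|[\alpha_0]|$, not $c$. So in fact $c=1/|[\alpha_0]|$, and the line ``Hence $P_{[\alpha_0]}=\EC P$'' should read $P_{[\alpha_0]}=|[\alpha_0]|\,\EC P$. This does not affect the argument, since all you need is $c\neq 0$ and $\cC_N$ is closed under scalars.
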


\begin{lm}\label{adjaclem2}  A collision specification is fully ergodic if and only if whenever $\bma,\bma'\in \cJ^N$ satisify
$E_\bma = E_{\bma'}$, then $\bma \sim \bma'$. 
\end{lm}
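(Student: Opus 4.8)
The plan is to deduce the statement directly from Lemma~\ref{adjaclemB} together with the observation recorded just before Definition~\ref{ergate}, namely that $(\aC,U,\nu)$ is fully ergodic precisely when $P_E$ is a minimal projection in $\cC_N$ for every $E\in\spec$ (equivalently, when $\cC_N=\cA_N$).

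First I would note that the adjacency relation, and hence the equivalence relation $\sim$ of Definition~\ref{adjac}, preserves the total energy: if $\bma$ and $\bma'$ are adjacent, say $e_{\alpha_i}+e_{\alpha_j}=e_{\alpha'_i}+e_{\alpha'_j}$ with $\alpha_k=\alpha'_k$ for $k\neq i,j$, then $E_\bma=\sum_k e_{\alpha_k}=\sum_k e_{\alpha'_k}=E_{\bma'}$, and the general case follows by transitivity. Consequently each equivalence class $[\bma_0]$ is contained in the single energy shell $\K_{E_{\bma_0}}$, and for each $E\in\spec$ the equivalence classes partition the index set $\{\bma\in\cJ^N : E_\bma=E\}$.

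Next, fix $E\in\spec$. By Lemma~\ref{adjaclemB} the minimal projections of $\cC_N$ lying below $P_E$ are exactly the operators $P_{[\bma_0]}:=\sum_{\bma\sim\bma_0}|\Psi_\bma\rangle\langle\Psi_\bma|$ with $E_{\bma_0}=E$, and distinct equivalence classes give distinct minimal projections (which, being projections in the commutative algebra $\cC_N$, are in fact mutually orthogonal). Since $P_E=\sum_{\bma:\,E_\bma=E}|\Psi_\bma\rangle\langle\Psi_\bma|$ and the equivalence classes partition the set over which this sum runs, we obtain $P_E=\sum_{[\bma_0]}P_{[\bma_0]}$, the sum being over the distinct equivalence classes contained in $\K_E$. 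If there is more than one such class, then for any one class $[\bma_0]$ we have $0\neq P_{[\bma_0]}<P_E$, so $P_E$ is not minimal; if there is exactly one class, then $P_E=P_{[\bma_0]}$ is minimal. Hence $P_E$ is minimal in $\cC_N$ if and only if all $\bma$ with $E_\bma=E$ are equivalent.

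Finally, since $(\aC,U,\nu)$ is fully ergodic iff $P_E$ is minimal for every $E\in\spec$, this holds iff for every $E$ any two indices $\bma,\bma'$ with $E_\bma=E_{\bma'}=E$ satisfy $\bma\sim\bma'$; ranging over $E$, this is exactly the condition that $E_\bma=E_{\bma'}$ implies $\bma\sim\bma'$. I do not expect any real obstacle: the argument is essentially a bookkeeping of Lemma~\ref{adjaclemB} against the energy-shell decomposition, and the only point deserving a word of care is that every index $\bma$ in $\K_E$ belongs to one of the minimal projections below $P_E$ (so that the $P_{[\bma_0]}$ genuinely sum to $P_E$), which is immediate since $\bma\sim\bma$ places $\bma$ in its own class.
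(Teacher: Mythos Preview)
Your proof is correct and follows essentially the same route as the paper's: both reduce full ergodicity to the condition that each $P_E$ is minimal in $\cC_N$, and then invoke the correspondence between minimal projections below $P_E$ and equivalence classes of indices with energy $E$. The only difference is cosmetic---the paper cites Lemma~\ref{adjaclem} and dispatches the argument in two sentences, while you work through Lemma~\ref{adjaclemB} and spell out the partition of $\{\bma:E_\bma=E\}$ into equivalence classes and the resulting decomposition $P_E=\sum_{[\bma_0]}P_{[\bma_0]}$ explicitly; the underlying content is identical.
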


\begin{proof} Evidently $\cC_N \subset \cA_N$ if and only if whenever $P$ is a projection in $\cC_N$, it belongs to $\cA_N$, and this is the case if and only if for each $E\in \spec$, either $P_EP =0$ or $P_EP = P_E$. The claim now follows from Lemma~\ref{adjaclem}. 
\end{proof}

We next study some examples. The following notation will be useful. 

\begin{defi}[Occupancy function]\label{occ}
For each $j\in \cJ$, and each $N$, define the function $M_j$
on $\cJ^N$ with values in $\{0,1,\dots,N\}$ by
\begin{equation}\label{occf}
M_j(\bma) = \sum_{k=1}^N \delta_{\alpha_k,j} = m_j\ .
\end{equation}
The function is called the {\em occupancy function for the $j$th level}; it counts the number of times the 
eigenstate $\psi_j$ of $h$ appears as a factor in 
$$\Psi_\bma  = \psi_{\alpha_1}\otimes \cdots \otimes \psi_{\alpha_N}\ .$$
\end{defi}

The figures below illustrate the utility of the occupation function. Each figure is a histogram of the 
occupied levels in a system of 10 particles and 3 single particle states of energies $e_1 =1$, $e_2=2$, and $e_3 = 3$. 
Let $\bma_1$, $\bma_2$ and $\bma_3$ be any 3 elements of $\cJ^{10}$  with occupancy as indicated 
in the corresponding figure.
Then for each $j=1,2,3$, $\Psi_{\bma_j}$ is an eigenfunction of $H_{10}$ with eigenvalue $21$. 
However, $\bma_1\sim \bma_2$, while $\bma_3$ belongs to a second equivalence class.

\bigskip

\begin{tikzpicture}
\draw [fill=gray] (0,0) rectangle (1,4);
\node [below] at (0.5,0) {{\bf 1}};
\draw [fill=gray] (1,0) rectangle (2,2);
\draw [fill=black] (1,2) rectangle (2,4);
\node [below] at (1.5,0) {{\bf 2}};
\draw [fill=gray] (2,0) rectangle (3,3);
\node [below] at (2.5,0) {{\bf 3}};

\draw[help lines] (0,0) grid (3,5);

\draw [fill=gray] (5,0) rectangle (6,4);
\draw [fill=black] (5,4) rectangle (6,5);
\node [below] at (5.5,0) {{\bf 1}};
\draw [fill=gray] (6,0) rectangle (7,2);
\node [below] at (6.5,0) {{\bf 2}};
\draw [fill=gray] (7,0) rectangle (8,3);
\draw [fill=black] (7,3) rectangle (8,4);
\node [below] at (7.5,0) {{\bf 3}};

\draw[help lines] (5,0) grid (8,5);

\draw [fill=gray] (10,0) rectangle (11,3);
\node [below] at (10.5,0) {{\bf 1}};
\draw [fill=gray] (11,0) rectangle (12,3);
\node [below] at (11.5,0) {{\bf 2}};
\draw [fill=gray] (12,0) rectangle (13,4);
\node [below] at (12.5,0) {{\bf 3}};

\draw[help lines] (10,0) grid (13,5);

\node [below] at (1.5,-.5) {Fig. 1};

\node [below] at (6.5,-.5) {Fig. 2}; 

\node [below] at (11.5,-.5) {Fig. 3};
 
\end{tikzpicture}

\medskip

To see this, note that the only collisions that alter the histograms are those in which a pair with energies 
$e_1$ and $e_3$ becomes a pair in which both particles have energy $e_2$, or else the reverse of such a  collision.  
In fact $\bma_1$ and $\bma_2$ are adjacent, being connected by a single collision of this type. 
The squares that are changed in the histogram after such a collision are colored in black in Fig. 1 and 2.   
Evidently,  the occupancy of level 2 is even  after such a collision if and only if it was even before such a collision. 
Hence the parity of the occupnacy of the middle collumn cannot be changed by collisions.
This graphically illustates the fact that $\bma_3$ is not similar to $\bma_1$ or $\bma_2$.  
Evidently, this system of 10 particles with 3 single particle energy levels is not fully ergodic.  
We now consider some further exmaples.

\begin{exam}\label{irrat} Let $\cH = \C^n$, and let $h$ have eigenvalues $\{e_1,\dots, e_n\}$. If $n=2$, suppose that $e_1 \neq e_2$. For $n\geq 3$, suppose that $\{e_1,\dots,e_n\}$ is linearly independent over the rational numbers. Then $\spec$ consists of the real numbers of the form $E = \sum_{j=1}^n m_je_j$ where each $m_j$ is a non-negative integer, and $\sum_{j=1}^n m_j = N$. Because of the linear independence of the energies,
$$H_N\Psi_\bma = \left(\sum_{j=1}^n m_je_j\right)\Psi_\bma$$ if and only if for each $j=1,\dots,n$,
$$M_j(\bma) = m_j\ .$$
If follows that for all $\bma,\bma'$ with $E_\bma = E_{\bma'}$, $\bma$ and $\bma'$ are related by a permutation of indices, and then by Remark~\ref{permu}, $\bma\sim \bma'$.  It follows that for this choice of $h$, every ergodic collision specification is fully ergodic. 
\end{exam}

\begin{exam}\label{osc}  Let $\cH = \C^n$, and let $h$ have eigenvalues $e_j = j-1$, $j=1,\dots,n$. We may as well suppose that $n\geq 3$ since the case $n=2$ is covered by the previous example. In this case of evenly spaced eigenvalues, $n\geq 3$, there are many $\bma,\bma'$ that are adjacent, but unrelated by permutations. 
For each $m\in \N$, $m < n/2$, and $k$ with $m+1 \leq k \leq n-m$ we have obviously that $2e_k = e_{k-m} + e_{k+m}$. Thus, if $\bma$ and $\bma'$ are such that for some $1\leq i < j \leq N$ $\alpha_\ell = \alpha'_\ell$ for $\ell \neq i,j$, while
$$\alpha_i = \alpha_j = k \quad{\rm and} \quad \alpha'_i = k-m\ ,\ \alpha'_j = k+m$$
then $\bma$ and $\bma'$ are adjacent.  Taking into account the permuation invariance, it follows that if $\bma$ and $\bma'$ satisfy
$$M_k(\bma') = M_k(\bma)-2\quad{\rm  and}\quad  M_{k\pm m}(\bma') = M_{k\pm m}(\bma)+1$$
and $M_\ell(\alpha') = M_\ell(\alpha)$ for $\ell \neq k, k-m,k+m$, then $\bma'\sim \bma$. 

Now let us consider $n=3$. If $M_1(\bma) > 1$, we may lower $M_1(\bma)$ by $2$ through a collision of two particles of 
energy $e_2 =1$, producing a pair with energies $e_1 =0$ and $e_3 =2$. Doing this repeatedly, we arrive at an  $\bma'$
with either $M_1(\bma') =0$ or  $M_1(\bma') =1$, and with $\bma'\sim \bma$.  Since $E_{\bma'}$ is even if and only if
$M_1(\bma')$ is even,  it follows that whenever $E_\bma = E_{\bma'}$, then $\bma \sim \bma'$. Thus for this $h$ the ergodic collision specification $(\aC,U,\nu)$ is fully ergodic. 

Things are different for $n=4$. We may start from an arbitrary $\bma$, and consider collisions that decrease
$M_2(\bma)+M_3(\bma)$ while increasing $M_1(\bma)+M_4(\bma)$. We may continue doing so as long as either
$M_1(\bma)>1$ or  $M_2(\bma)>1$.  When the process stops, $M_2(\bma)$ and $M_3(\bma)$ are both either $0$ or $1$. 
The energy $E$ is given by
$$E_\bma = M_2(\bma) + 2M_3(\bma) + 3M_4(\bma)\ .$$
Evidently $E_\bma$ is of the form $3k+1$, $k\in \N$, if and only if $M_2(\bma) =1$ and $M_3(\bma) =0$. Likewise,  
$E_\bma$ is of the form $3k+2$, $k\in \N$,  if and only if $M_2(\bma) =0$ and $M_3(\bma) =1$. However
$E_\alpha =3k$, $k\in \N$ if and only if  either  $M_2(\bma) =M_3(\bma) =0$ or  $M_2(\bma) =M_3(\bma) =1$.
It is also clear that if $M_2(\bma) =M_3(\bma) =0$ and $\bma'\sim \bma$, then 
$M_2(\bma') =M_3(\bma') =0$.  Thus complete ergodicity is impossible in this case, but the problem only 
arrises when the energy $E$ is a multiple of $3$. 

One might expect things to get more complicated for $n=5$, but this is not the case: For $n=5$, any 
ergodic collision specification $(\aC,U,\nu)$ is fully ergodic on energy shells with $E/N$ not too close to either 
$0$ or $4$ when $N$ is sufficiently large.

To see this, consider an arbitrary $\bma'$, and then in a finite sequences of  steps one arrives at an equivalent $\bma$
such that $M_j(\bma) \in \{0,1\}$ for each $j=2,3,4$.  We may further suppose that among all such 
$\bma$ equivalent to $\bma'$, $\sum_{j=2}^4 M_j(\bma)$ is minimal. 

Under this minimality assumption, it is impossible that both $M_2(\bma)=1$ and  $M_4(\bma)=1$: 
If so, there is a collision that lowers these both to $0$, while rasing $M_3(\bma)$ by $2$.  
But this can be lowered again by $2$ in a collision that then raises both $M_1(\bma)$ and 
$M_5(\bma)$ by $1$. Thus, in two steps, one further lowers $\sum_{j=2}^4 M_j(\bma)$ by $2$. 

Next, suppose that $M_3(\bma) = M_4(\bma) >1$. If  $M_1(\bma) > 1$ (which will necessarily be the case for 
sufficiently large $N$), a collision can lower $M_1(\bma)$ and $M_2(\bma)$ each by $1$, and raising $M_2(\bma)$ by  
$2$. Then as above, in one more step $M_2(\bma)$ can be lowered again by $2$, while raising 
$M_1(\bma)$ and $M_5(\bma)$ each by $1$. Again, in two steps one has lowered $\sum_{j=2}^4 M_j(\bma)$ 
by $1$, which  is impssible under the minimality hypothesis. Thus under this hypothesis, if 
$M_3(\bma) =1$, $M_1(\bma)M_4(\bma) = 0$. The same reasoning shows that
$M_2(\bma)M_5(\bma) = 0$. 

Now if $E_\alpha/N$ is not too close to either $0$ or $4$, then necessarily $M_1(\bma)>1$ and $M_5(\bma) > 1$ whenver 
$M_j(\bma) \in \{0,1\}$ for each $j=2,3,4$.  Hence under the minimality hypothesis, the only possibilities for 
$M_j(\bma)$, $j=2,3,4$ are that  $M_j(\bma) = 0$ for all  $j=2,3,4$, in which case $E_\bma$ is of the form $4k$, $k\in \N$,
or else $M_j(\bma)=1$ for exactly one value of $j =2,3,4$, and is zero for all of the others. 
These respectivley correpond to the energies $4k + j-1$, $k\in \N$, $j=2,3,4$. 

Thus, for all $\eta>0$, and all $N$ sufficently large, if $E/N \in (\eta, 4-\eta)$, then there is full ergodicity at energy $E$
whenver $(\aC,U,\nu)$ is ergodic. 
\end{exam}

Since invariant densities for the QKME must belong to $\cC_N$, we have the following immediate conseqeuence of the characterization of 
$\cC_N$ obtained in this section:

\begin{thm}\label{longtime} Let $(\aC,U,\nu)$ be an  ergodic collision specification, and let $\cL_N$ 
be defined in terms of it as in (\ref{gendef}).  A density matrix $\varrho$ on $\H_N$ satisfies 
$\cL_N\varrho = 0$ if and only if 
it is a convex combination of normlaized minimal projections in $\cC_N$.
\end{thm}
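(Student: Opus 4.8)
The plan is to read the conclusion off the structural description of $\cC_N$ already in hand. Recall the three ingredients: Lemma~\ref{specL} identifies ${\rm Null}(\cL_N)$ inside $\cT_2(\H_N)$ with $\cC_N\cap\cT_2(\H_N)$; Theorem~\ref{echeck} says every element of $\cC_N$ is diagonal in the basis $\{\Psi_\bma\}$; and Lemma~\ref{adjaclemB}, together with the observation preceding Definition~\ref{ergate} that every minimal projection of $\cC_N$ is dominated by some $P_E$, says the minimal projections of $\cC_N$ are exactly the $P_C:=\sum_{\bma\in C}|\Psi_\bma\bk\Psi_\bma|$, one for each $\sim$-equivalence class $C\subset\cJ^N$.

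First I would treat the forward implication. Let $\varrho\in\Dens(\H_N)$ satisfy $\cL_N\varrho=0$. A density matrix is trace class, hence Hilbert--Schmidt, so $\varrho\in\cC_N\cap\cT_2(\H_N)$ by Lemma~\ref{specL}, and by Theorem~\ref{echeck} we may write $\varrho=\sum_\bma\lambda_\bma|\Psi_\bma\bk\Psi_\bma|$ with $\lambda_\bma=\langle\Psi_\bma,\varrho\,\Psi_\bma\rangle\ge0$ and $\sum_\bma\lambda_\bma=\tr[\varrho]=1$. Since $\varrho$ is self-adjoint and lies in $\cC_N$, Lemma~\ref{adjaclem} forces $\lambda_\bma=\lambda_{\bma'}$ whenever $\bma\sim\bma'$. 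Each equivalence class $C$ is contained in a single energy shell (adjacency preserves $E_\bma$ by Definition~\ref{adjac}) and is therefore finite, so $P_C$ is a finite-rank projection with $\tr[P_C]=\#C$; letting $\mu_C$ be the common value of $\lambda_\bma$ on $C$ and $\widehat P_C:=\tr[P_C]^{-1}P_C$ the associated normalized minimal projection, we get $\varrho=\sum_C\mu_C P_C=\sum_C c_C\,\widehat P_C$ with $c_C:=\mu_C\,\tr[P_C]\ge0$ and $\sum_C c_C=\sum_\bma\lambda_\bma=1$. Thus $\varrho$ is a convex combination of normalized minimal projections of $\cC_N$ (a countable, trace-norm convergent one when $\H$ is infinite dimensional, since $\cJ^N$, and hence the set of classes, is at most countable).

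For the converse I would argue directly. Each $\widehat P_C$ is a finite-rank positive operator of unit trace lying in $\cC_N\cap\cT_2(\H_N)$, hence a density matrix belonging to ${\rm Null}(\cL_N)$ by Lemma~\ref{specL}. Given $c_C\ge0$ with $\sum_C c_C=1$, the operator $\varrho=\sum_C c_C\,\widehat P_C$ is positive, has $\tr[\varrho]=1$, and satisfies $\|\varrho\|_1\le\sum_C c_C=1$, so $\varrho\in\Dens(\H_N)$; its partial sums converge to it in trace norm, hence in operator norm, and since $\cL_N$ is bounded ($\|\cL_N\|_\infty\le2N$) we conclude $\cL_N\varrho=\sum_C c_C\,\cL_N\widehat P_C=0$.

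I do not expect a genuine obstacle: essentially all the work has been front-loaded into Theorem~\ref{echeck} and Lemmas~\ref{specL},~\ref{adjaclem}, and~\ref{adjaclemB}, so what is left is bookkeeping. The only points requiring a little care are that a density matrix is automatically Hilbert--Schmidt, so that the $\cT_2(\H_N)$-theory of Lemma~\ref{specL} applies to it, and the passage in the infinite-dimensional case to possibly countably many equivalence classes, where ``convex combination'' must be understood as a trace-norm convergent one.
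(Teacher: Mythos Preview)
Your argument is correct and is exactly the unpacking the paper has in mind: the paper states the theorem as an ``immediate consequence of the characterization of $\cC_N$ obtained in this section'' without giving a proof, and your proposal spells out precisely that consequence via Lemma~\ref{specL}, Theorem~\ref{echeck}, and Lemmas~\ref{adjaclem}--\ref{adjaclemB}. The care you take with the Hilbert--Schmidt membership of density matrices and with countable convex combinations in the infinite-dimensional case is appropriate and fills in the bookkeeping the paper omits.
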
 

A density matrix $\varrho$ on $\H_N$ is a {\em product sate} if $\varrho = \rho_1\otimes \cdots \otimes \rho_N$ where each $\rho_j$ is a density matrix on $\H$.
A density matrix $\varrho$ on $\H_N$ is {\em separable} in case  $\varrho$ is a closed convex hull of the  product states.
A density matrix $\varrho$ on $\H_N$ is {\em entangled} in case is is not separable,

\begin{cl}[Separability of steady states] \label{sepcl}
Let $(\aC,U,\nu)$ be an  ergodic collision specification, and let $\cL_N$ 
be defined in terms of it as in (\ref{gendef}).  All density matrices $\varrho$ on $\H_N$ that satisfy
$\cL_N\varrho = 0$ are separable.
\end{cl}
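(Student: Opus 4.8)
The plan is to combine the structural classification of steady states from Theorem~\ref{longtime} with the explicit description of the minimal projections in $\cC_N$ given by Lemma~\ref{adjaclemB}. By Theorem~\ref{longtime}, any density matrix $\varrho$ on $\H_N$ with $\cL_N\varrho = 0$ is a convex combination of normalized minimal projections in $\cC_N$. Since the set of separable density matrices is by definition the closed convex hull of the product states, and in particular is convex, it suffices to show that each normalized minimal projection in $\cC_N$ is separable; the conclusion for $\varrho$ then follows by taking the corresponding convex combination.

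First I would fix a minimal projection $P \in \cC_N$ with $P \leq P_E$ for some $E \in \spec$ (every minimal projection in $\cC_N$ is of this form, as noted before Definition~\ref{ergate}). By Lemma~\ref{adjaclemB}, there is an index $\bma_0$ with $H_N\Psi_{\bma_0} = E\Psi_{\bma_0}$ such that
\begin{equation*}
P = \sum_{\bma \sim \bma_0} |\Psi_\bma\rangle\langle\Psi_\bma|\ ,
\end{equation*}
the sum running over the finite equivalence class of $\bma_0$. By orthonormality of $\{\Psi_\bma\}$, if we set $m := \#\{\bma : \bma\sim\bma_0\}$ then $\tr[P] = m$, so the associated normalized minimal projection is $\tfrac1m P$.

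Next I would observe that each $\Psi_\bma = \psi_{\alpha_1}\otimes \cdots \otimes \psi_{\alpha_N}$ is a product vector, so that
\begin{equation*}
|\Psi_\bma\rangle\langle\Psi_\bma| = |\psi_{\alpha_1}\rangle\langle\psi_{\alpha_1}| \otimes \cdots \otimes |\psi_{\alpha_N}\rangle\langle\psi_{\alpha_N}|
\end{equation*}
is a pure product state. Hence $\tfrac1m P = \tfrac1m\sum_{\bma\sim\bma_0}|\Psi_\bma\rangle\langle\Psi_\bma|$ is a uniform convex combination of pure product states, and therefore separable. Since $\varrho$ is a convex combination of such normalized minimal projections, $\varrho$ is itself a convex combination of product states, hence separable.

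There is no genuine obstacle in this argument; all the work has been done in establishing Theorem~\ref{longtime} and Lemma~\ref{adjaclemB}. The only points requiring (minor) care are the identification of the normalization constant $m = \tr[P]$ with the cardinality of the equivalence class of $\bma_0$, which is immediate from orthonormality of the $\Psi_\bma$, and the elementary fact that finite convex combinations of separable states remain separable, which is part of the definition of separability as the closed convex hull of the product states.
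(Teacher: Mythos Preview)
Your argument is correct and is essentially the same as the paper's proof: both use Theorem~\ref{longtime} to reduce to normalized minimal projections in $\cC_N$, observe that these are diagonal in the product basis $\{\Psi_\bma\}$ (you cite Lemma~\ref{adjaclemB} explicitly, the paper simply invokes diagonality), and conclude separability since each $|\Psi_\bma\rangle\langle\Psi_\bma|$ is a product state. Your version is slightly more explicit about the normalization constant and the convexity step, but the underlying reasoning is identical.
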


\begin{proof}   Consider any  $\bma\in \cJ^N$. Then $\Psi_\bma  = \psi_{\alpha_1}\otimes \cdots \otimes \psi_{\alpha_N}$,
and evidently $|\Psi_{\bma}\bk  \Psi_{\bma}|$ is product state.  Since each minimal projection in $\cC_N$ is diagonal in the
$\{\Psi_{\bma}\}_{\bma\in \cJ^N}$ basis, the claim is a corollary of Theorem~\ref{longtime}.
\end{proof}

Consider any  ergodic collision specification $(\aC,U,\nu)$. Let $\cL_N$ be the associate quantum Markov semigroup generator, and let $\cP_{N,t} = r^{t\cL_N}$. Let $\varrho$ be any density matrix on $\H_N$, and let 
$$\varrho_{\infty} := \EC \varrho ,$$
which satisfies not on $\cL_N \varrho_\infty =0$, but by \eqref{approach},
\begin{equation}\label{approachC}
\lim_{t\to\infty} \cP_{N,t} \varrho = \varrho_{\infty}\ .
\end{equation}

It is clear that in the examples we have discussed with finite dimensional single particle space $\H$, it will be the case that
\begin{equation}\label{emlim}
\lim_{t\to\infty} S( \cP_{N,t} \varrho || \varrho_{\infty}) =0
\end{equation}
where for two density matrices $\rho,\sigma$, $S(\rho||\sigma) = \tr[\rho(\log \rho - \log \sigma)]$ is 
the Umegaki relative entropy of $\rho$ with respect to $\sigma$.  The rate at which the limit in  
\eqref{emlim} is attained is of interest. By a theorem of Lindblad \cite{L75}, derived as a consequence of the strong 
subadditivity of the quantum entropy proved by Lieb and Ruskai \cite{LR73},  the quantity $S( \cP_{N,t} \varrho || \varrho_{\infty})$ is monotone decreasing in $t$, and hence
$$D_N(\varrho) := -\frac{{\rm d}}{{\rm d}t}S( \cP_{N,t} \varrho || \varrho_{\infty})\bigg|_{t=0} \geq 0 \ .$$
A quantum analog of the Cercignani conjecture \cite{Cer} from classical kinetic theory would be that there exists a constant $c>0$ such that
$$\inf_{\varrho\in\Dens}\frac{D_N(\varrho)}{S(\varrho || \varrho_{\infty})} \geq c$$
uniformly in $N$. 

Likewise, there are various measures of entanglement for many-body systems, and the rates at which they decay to zero are of interest, These matters will be investigated in forthcoming work.

%
%
%
%
%
%
%
%

\section{The quantum Kac-Boltzmann equation}

\subsection{Propagation of chaos}

A density matrix $\varrho\in \Dens(\H_N)$ is {\em symmetric} in case it is invariant under the
canonical action of the permutation group on $\H_N$.   Here, the adjective  {\em symmetric} without further qualification will always have this meaning. 
For example, for each $E\in \spec$,
${\displaystyle \sigma_E = \frac{1}{{\rm dim}(\K_E)}P_E}$
 is symmetric. 
 
Given $\varrho\in \Dens(\H_N)$, let
$$\varrho^{(1)} = \tr_{2\dots N} \varrho  \qquad{\rm and \ more \ generally} \qquad  \varrho^{(k)} = \tr_{k+1\dots N}\varrho $$
where we take the partial trace of the the last $N-1$, respectively  $N-k$, factors  in $\H_N$. 
As usual, $\varrho^{(1)}$ is called the {\em single particle reduced density matrix} and 
$\varrho^{(k)}$ is called the {\em k-particle reduced density matrix}.

\begin{defi}[Chaoticity] Let $\rho$ be a density matrix on $\H$. A sequence $\{\varrho_N\}_{N\in \N}$ of 
symmetric density matrices on $\H_N$
is $\rho$-chaotic in case
$$\lim_{N\to\infty} \varrho^{(1)} = \rho  \qquad{\rm and}\qquad   \lim_{N\to\infty} \varrho^{(k)} = \otimes^k \rho\ .$$
\end{defi}

The point of the definition is that, as in the classical case, chaos is propagated, and propagation of chaos 
leads to a non-linear Boltzmann type equation. This is also true in the quantum case:

\begin{thm}\label{kacprop} Let $\{U(\sigma)\:\ \sigma\in \aC\}$ be an ergodic set of collision operators
and let $\nu$ be a given Borel probability measure on $\aC$.  Let $\cL_N$ be defined in terms of these as in 
(\ref{quankac9}). Then the semigroup $\cP_{N,t} = e^{t\cL_N}$  propagates chaos for all $t$  meaning that id 
$\{\varrho_N\}_{N\in \N}$ is a $\rho$-chaotic sequence, then for each $t$,
$\{\cP_{N,t} \varrho_N\}_{N\in \N}$ is a $\rho(t)$-chaotic sequence for some $\rho(t) = \lim_{N\to\infty}(\cP_{N,t} \varrho_N)^{(1)}$, where in particular this limit of the  one-particle maginal exists and is a density matrix.  
\end{thm}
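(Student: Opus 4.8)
\emph{Proof proposal.} The plan is to carry over the classical scheme of Theorem~\ref{pch} to the present setting, working in the Heisenberg picture and closing a BBGKY-type hierarchy for the reduced density matrices. Write $\varrho_N(t):=\cP_{N,t}\varrho_N$; by the permutation invariance of $\cL_N$ each $\varrho_N(t)$ is symmetric, so it suffices to prove that for every fixed $k$ the $k$-particle marginal $\varrho_N(t)^{(k)}$ converges in trace norm as $N\to\infty$, and to identify the limit. For this one passes to observables: since each $\cQ_{i,j}$, hence $\cP_{N,t}$, is self-adjoint on $\cT_2(\H_N)$ and $*$-preserving, for self-adjoint $A\in\cB(\H^{\otimes k})$ (which is all we need, splitting into real and imaginary parts) and self-adjoint $\varrho_N$ one has
\begin{equation}\label{dualred}
\tr\big[\varrho_N(t)^{(k)}A\big]=\tr\big[\varrho_N(t)\,(A\otimes\one_{N-k})\big]=\tr\big[\varrho_N\,\cP_{N,t}(A\otimes\one_{N-k})\big]\ .
\end{equation}

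Next I would derive the hierarchy. Taking the partial trace $\tr_{k+1\cdots N}$ of the QKME \eqref{quankac12}, using that $\tr_{k+1\cdots N}$ commutes with $\cQ_{i,j}$ when $i,j\le k$, that the terms with $i,j>k$ drop out because each $\cQ_{i,j}$ preserves the partial trace over its own two factors, and the permutation invariance of $\varrho_N(t)$, one obtains (writing $\varrho_N^{(k)}(t):=(\cP_{N,t}\varrho_N)^{(k)}$)
\begin{equation}\label{hier}
\frac{{\rm d}}{{\rm d} t}\varrho_N^{(k)}(t)=\frac{2}{N-1}\!\sum_{1\le i<j\le k}(\cQ_{i,j}-\one)\varrho_N^{(k)}(t)+\frac{2(N-k)}{N-1}\sum_{i=1}^{k}\tr_{k+1}\!\big[(\cQ_{i,k+1}-\one)\varrho_N^{(k+1)}(t)\big]\ ,
\end{equation}
where $\cQ_{i,k+1}$ now acts on $\cT_1(\H^{\otimes(k+1)})$ in its factors $i$ and $k+1$. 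The first term has trace norm at most $2k^2/(N-1)\to0$, the prefactor $2(N-k)/(N-1)\to2$, and since $\tr_{k+1}$ and $\cQ_{i,k+1}-\one$ are trace-norm bounded the whole right-hand side is bounded in trace norm by a constant $C_k$ uniformly in $N$ and $t$; thus each $t\mapsto\varrho_N^{(k)}(t)$ is a $C_k$-Lipschitz curve in $\Dens(\H^{\otimes k})$.

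Now extract a limit and close the hierarchy. By Banach--Alaoglu ($\cT_1(\H^{\otimes k})=\cC(\H^{\otimes k})^{*}$) together with the uniform Lipschitz bound, a diagonal argument yields a subsequence along which $\varrho_N^{(k)}(t)\to\mu_k(t)$ weak-$*$ for all $k$ and all $t$, with $\mu_k(t)$ positive, symmetric, of trace $\le1$, and $\mu_k(0)=\rho^{\otimes k}$ by chaoticity of the initial data. Because each $U_{i,j}(\sigma)$ commutes with $H_N$ (see \eqref{quankac2}) the QKME conserves energy, so $\tr[\varrho_N^{(1)}(t)h]=\tr[\varrho_N^{(1)}(0)h]$; using that $h$ has compact resolvent, and the standing assumption that the chaotic family has uniformly bounded one-particle energy, this makes the marginals $\{\varrho_N^{(k)}(t)\}$ tight, so $\mu_k(t)$ is an honest density matrix, the weak-$*$ convergence is actually in trace norm, and $\cQ_{i,k+1}$, $\tr_{k+1}$ may be carried through the limit. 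Integrating \eqref{hier} then gives the limiting hierarchy
\begin{equation}\label{bhier}
\mu_k(t)=\rho^{\otimes k}+2\int_0^t\sum_{i=1}^{k}\tr_{k+1}\!\big[(\cQ_{i,k+1}-\one)\mu_{k+1}(s)\big]\,{\rm d}s\ .
\end{equation}
A direct computation, using $\tr_{k+1}[\cQ_{i,k+1}(\rho^{\otimes(k+1)})]=\rho^{\otimes(i-1)}\otimes\tr_{2}[\cQ_{1,2}(\rho\otimes\rho)]\otimes\rho^{\otimes(k-i)}$, shows that $\mu_k(t)=\rho(t)^{\otimes k}$ solves \eqref{bhier} provided $\rho(t)$ solves the quantum Kac--Boltzmann equation $\tfrac{{\rm d}}{{\rm d}t}\rho(t)=\Gamma(\rho(t))$, $\rho(0)=\rho$, with $\Gamma(\rho):=2\big(\tr_{2}[\cQ_{1,2}(\rho\otimes\rho)]-\rho\big)$; this $\Gamma$ is a bounded, locally Lipschitz quadratic map, so Picard iteration (whose iterates are manifestly positive, with trace algebraically forced to stay equal to $1$) produces a unique global $\rho(t)\in\Dens(\H)$. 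Uniqueness for \eqref{bhier} among locally bounded symmetric families then follows by iterating the Duhamel identity $n$ times: $\|\mu_k(t)-\rho(t)^{\otimes k}\|_1$ is dominated by a term of order $(2t)^{n}k(k+1)\cdots(k+n-1)/n!\to0$. Hence $\mu_k(t)=\rho(t)^{\otimes k}$ for every subsequential limit, so $\varrho_N(t)^{(k)}\to\rho(t)^{\otimes k}$ in trace norm, which is the assertion with $\rho(t)=\lim_N(\cP_{N,t}\varrho_N)^{(1)}$ a density matrix.

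The main obstacle is the compactness step: one must rule out loss of mass in the weak-$*$ limit, and this is exactly where the conservation of energy together with the compactness of the resolvent of $h$ (equivalently, the finite-dimensionality of the energy shells) is used. The second delicate point, uniqueness of the limiting hierarchy \eqref{bhier}, goes through --- as in Kac's original argument --- only because the collision operator is bounded ($\nu$ a probability measure), so that the Duhamel iteration converges and no Cauchy--Kovalevskaya-type estimates are needed. An alternative route that avoids the compactness issue is the quantum analog of McKean's graphical / Wild-sum expansion: expand $\cP_{N,t}(A\otimes\one_{N-k})=e^{-Nt}\sum_n\tfrac{(Nt)^n}{n!}\cQ_N^{\,n}(A\otimes\one_{N-k})$, identify each collision graph with the corresponding iterated Duhamel term for the Kac--Boltzmann equation, and interchange $\lim_N$ with the sum using the bound $\|A\|_\infty\,t^n/n!$ times a controlled combinatorial factor.
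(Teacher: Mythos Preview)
Your primary route --- derive the BBGKY hierarchy \eqref{hier}, extract subsequential weak-$*$ limits, pass to the limiting hierarchy \eqref{bhier}, and then prove uniqueness by Duhamel iteration --- is a legitimate strategy and is \emph{not} the one the paper uses. The paper works entirely in the Heisenberg picture via the McKean power-series method: it introduces the maps $\Gamma_k:\cB(\H_k)\to\cB(\H_{k+1})$ and their finite-$N$ perturbations $G_k$, proves the operator-norm estimates
\[
\|G_{\ell+k}\cdots G_k(B_k)\|\le 4^{\ell+1}(\ell+k)\cdots k\,\|B_k\|,\qquad
\|G_{\ell+k}\cdots G_k(B_k)-\Gamma_{\ell+k}\cdots \Gamma_k(B_k)\|\le \tfrac{C_k}{N-1}4^{\ell}\ell^{k+2}\ell!\,\|B_k\|,
\]
and uses these to show that the Dyson series for $\tr[\varrho_N\,\cP_{N,t}(B_k\otimes\one)]$ converges uniformly in $N$ for $4t<1$, so one may take $N\to\infty$ term by term. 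The factorization of the limit then comes from McKean's ``twisted derivation'' identity $\Gamma_k(X\otimes Y)=(\Gamma_j X)\otimes Y+X\otimes(\Gamma_{k-j}Y)$, not from uniqueness of an infinite hierarchy. Your ``alternative route'' in the last paragraph is exactly this argument.

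Two remarks on your main route. First, the compactness step: you invoke a ``standing assumption that the chaotic family has uniformly bounded one-particle energy'' to upgrade weak-$*$ to trace-norm convergence, but no such assumption appears in the theorem. In fact it is not needed: weak-$*$ compactness of the unit ball of $\cT_1$ plus your Lipschitz bound already gives subsequential weak-$*$ limits $\mu_k(t)$ with $\|\mu_k(t)\|_1\le1$; the maps $\tr_{k+1}$ and $\cQ_{i,k+1}$ are weak-$*$ continuous (they are preduals of bounded maps on $\cB$), so \eqref{bhier} passes to the limit as stated; and then your Duhamel uniqueness identifies $\mu_k(t)=\rho(t)^{\otimes k}$, which \emph{a posteriori} has trace one. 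So drop the energy hypothesis rather than add it. Second, the constant in your Duhamel bound should be $(4t)^n$, not $(2t)^n$, since each step contributes $\|2\sum_{i=1}^{k}(\cQ_{i,k+1}-\one)\|\le 4k$; this matches the paper's radius $4t<1$, after which both arguments extend to all $t$ by iterating on consecutive short intervals. What your approach buys is a cleaner separation of the two analytic issues (compactness versus uniqueness); what the paper's approach buys is that it never needs to extract subsequences or discuss topologies on $\cT_1$, working instead with explicit, $N$-uniform norm bounds on observables.
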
 

Before beginning the proof, we explain the strategy, which follows that of the orignal argument of Kac as refined by McKean. 
Consider an operator $A$ of the form $B_k \otimes \one_{\H_{N-k}}$ where  $1\leq k < N$, and $B_k\in \cB(\H_k)$.
We are interested in estimating
$\tr[ \varrho \cL_N B]$
where $\varrho$ is symmetric. 
First note that 
\begin{eqnarray}\label{lnn}
\cL_N(B_k \otimes \one_{\H_{N-k}}) 
&=& \frac{2}{N-1} \sum_{1 \leq i < j \leq k} (\cQ_{i,j} - \one_{\H_k})(B_k) \otimes \one_{\H_{N-k}}\nonumber\\
&+&  \frac{2}{N-1} \sum_{1 \leq i \leq k, j> k} (\cQ_{i,j} - \one_{\H_k})(B_k) \otimes \one_{\H_{N-k}} \ .
\end{eqnarray}
For $N$ much larger than $k$, there are many more terms in the second sum on the right in \eqref{lnn} than in the first. Moreover, when taking the expectation against $\varrho$ with $\varrho$ symmetric, each of these terms makes the exact same contribution;
for $j > k$, let $\pi$ be the pair permuation sending $j$ to $k+1$ and $k+1$ to $j$. Then since
$U_\pi \varrho U_\pi^* = \varrho$, it follows from \eqref{Upidef4} and the fact that $\pi = \pi^{-1}$ that 
\begin{eqnarray*}
\tr[\varrho  (\cQ_{i,j} - \one_{\H_k})(B_k) \otimes \one_{\H_{N-k}}] &=&
\tr[U_\pi \varrho U_\pi^*  (\cQ_{i,j} - \one_{\H_k})(B_k) \otimes \one_{\H_{N-k}}] \nonumber\\
&=&
\tr[ \varrho U_\pi^*  (\cQ_{i,j} - \one_{\H_k})(B_k \otimes \one_{\H_{N-k}})U_\pi] \nonumber\\
&=& \tr[\varrho  (\cQ_{\pi(i),\pi(j)} - \one_{\H_k})(B_k) \otimes \one_{\H_{N-k}}] 
 \nonumber\\
&=& \tr[\varrho  (\cQ_{i,k+1} - \one_{\H_k})(B_k) \otimes \one_{\H_{N-k}}] \ .
\end{eqnarray*}
Therefore,
\begin{eqnarray}\label{lnn2}
\tr[\varrho \cL_N(B_k \otimes \one_{\H_{N-k}})]
&=& \frac{2}{N-1} \sum_{1 \leq i < j \leq k} \tr[ \varrho (\cQ_{i,j} - \one_{\H_k})(B_k) \otimes \one_{\H_{N-k}}]\nonumber\\
&+&  \frac{(N-k)}{N-1} 2\sum_{1 \leq i \leq k} \tr[\varrho (\cQ_{i,k+1} - \one_{\H_k})(B_k) \otimes \one_{\H_{N-k}}]
\end{eqnarray}
For large $N$, the second term will turn out to be the main term, as is almost clear from the factor of $1/(N-1)$ in front of the first term. 

The fact that the first term is negligible in the limit $N\to\infty$ has a probabilistic interpretation emphasized by Kac. Consider
 the ``collision histories'' of the first $k$ particles over some fixed time interval $[0,t]$. Then the probability that any of these particles collide with each other, or even collide with any particle that has already  collided with any of the first $k$ particles, is vanishingly small in the limit $N\to \infty$. Then, as we shall see, without recollisions, there is no mechanism for generating correlations.

To efficiently work with the second term, and to see why it does not generate correlation, we introduce, following McKean, the operators $\Gamma_k:
\cB(\cH_k) \rightarrow \cB(\cH_{k+1})$ by
$$
\Gamma_k (B_k) = 2  \sum_{i=1}^k (\cQ_{i,k+1} - \one_{\cH_{k+1}})(B_k \otimes \one_\cH) \ ,
$$

Now let $\{\varrho_N\}_{N\in \N}$ be a $\rho$-chaotic seqence. Then, 
$$\lim_{N\to\infty} \tr[\varrho_N \cL_N(B_k \otimes \one_{\H_{N-k}})] =  \tr[ \otimes^{k+1}\rho \Gamma_k(B_k)]\ .$$
A closer analysis, carried out below,  of the decomposition in \eqref{lnn2} will show that for all $\ell \in N$, 
$$\lim_{N\to\infty} \tr[\varrho_N \cL_N^\ell (B_k \otimes \one_{\H_{N-k}})] =  \tr[
 (\otimes^{k+\ell}\rho) \Gamma_{k+\ell-1} \cdots  \Gamma_k(B_k)]\ .$$

Furthermore, it will be shown that the power series for  $\tr[\varrho_N e^{t\cL_N}(B_k\otimes \one_{\H_{N-k}})]$ converges {\em uniformly in $N$ for small $t$}.   Once this is shown, the propagation of chaos will follow from a term by term 
analysis of the power series.  
The key for this is the observation of McKean  that $\Gamma_k$ is a ``twisted'' derivation in the following sense:  Let $1\leq j < k$,
$X_j\in \cB(\H_j)$ and $Y_{k-j}\in \cB(\H_{k-j})$. Let $\pi$ denote the pair permuation on $\{1,\dots, k+1\}$ such that
$\pi(j+1) = k+1$ and $\pi(k+1) = j+1$. Then 
\begin{equation}\label{longver}
\Gamma_k(X_j\otimes Y_{k-j}) = 
[U_\pi (\Gamma_{j}(X_j)\otimes \one_{\H_{k-j}})U_\pi^* ]\, \one_{\H_{j}} \otimes 
Y_j \otimes \one_\H + X_j\otimes \Gamma_{k-j}Y_{k-j}\ .
\end{equation}
With a slight abuse of notation, we can write this more simply as
\begin{equation}\label{shortver}
\Gamma_k(X_j\otimes Y_{k-j}) =  (\Gamma_j X_j)\otimes Y_{k-j} +  X_j\otimes \Gamma_{k-j}Y_{k-j}\ ,
\end{equation}
but then with the understanding that $\Gamma_j X_j$ acts on the first $j$ factors of $\H$ together with the 
$k+1$st factor, as is made clear by the permuations in the longer form. 

For example, consider the case $k=2$, which is the most important for our binary collison model. Let $B_2 = X_1\otimes Y_2$.
Then for $\ell\in \N$, supressing permuations from our notation as in the passage from \eqref{longver} to \eqref{shortver}, with $N > \ell+2$, so that $\Gamma_\ell (X_1\otimes Y_2)$ is defined,
\begin{eqnarray*}
&&\lim_{N\to\infty} \tr[\varrho_N \cL_N^\ell (X_1\otimes Y_2 \otimes \one_{\H_{N-2}})] \nonumber\\ &=&  
\tr[\otimes^{2+\ell}\rho \Gamma_{2+\ell-1} \cdots  \Gamma_2(X_1\otimes Y_2)]\nonumber\\
&=&   \sum_{\ell_1+\ell_2= \ell} \frac{\ell!}{\ell_1! \ell_2!}
\tr[\otimes^{2+\ell}\rho (\Gamma_{\ell_1}\cdots \Gamma_1)X_1 \otimes (\Gamma_{\ell_2}\cdots \Gamma_1)Y_2]\nonumber\\
&=&   \sum_{\ell_1+\ell_2= \ell} \frac{\ell!}{\ell_1! \ell_2!}
\tr[\otimes^{1+\ell_1}\rho (\Gamma_{\ell_1}\cdots \Gamma_1)X_1]  \tr[\otimes^{1+\ell_2}\rho  (\Gamma_{\ell_2}\cdots \Gamma_1)Y_2]\nonumber\\
&=&\lim_{N\to\infty} \sum_{\ell_1+\ell_2= \ell} \frac{\ell!}{\ell_1! \ell_2!} 
\tr[\varrho_N \cL_N^{\ell_1} (X_1\otimes \one_{\H_{N-1}})]
\tr[\varrho_N \cL_N^{\ell_1} (Y_2\otimes \one_{\H_{N-1}})]
 \end{eqnarray*}
 
After the reduction that permits us take limits term by term, we conclude 
$$\lim_{N\to\infty}[(e^{t\cL_N}\varrho_N )X_1\otimes Y_2 \otimes \one_{\H_{N-2}}] = 
\lim_{N\to\infty}[(e^{t\cL_N}\varrho_N) X_1 \otimes \one_{\H_{N-1}}] [(e^{t\cL_N}\varrho_N) Y_1 \otimes \one_{\H_{N-1}}]\ .$$
Since $X_1,Y_2\in \cB(\H)$ are arbitrary, this means that
$$\lim_{N\to\infty}(e^{t\cL_N}\varrho_N )^{(2)} = \lim_{N\to\infty}(e^{t\cL_N}\varrho_N )^{(1)}\otimes (e^{t\cL_N}\varrho_N )^{(1)}\ .$$
This  shows that $\{e^{t\cL_N}\varrho_N \}_{N\in\N}$ is $(e^{t\cL_N}\varrho_N )^{(1)}$-chaotic. 

Having explained the strategy and the key role of McKean's derivation property, it remains to provide the 
estimates that permit it to be carried out.

Define $G_k: \cB(\cH_k) \rightarrow \cB(\cH_{k+1})$ by
$$
G_k(B_k) =  \frac{2}{N-1} \sum_{1\le i < j \le k}
 (\cQ_{i,j} - \one_{\cH_k})(B_k) \otimes \one_\cH + \frac{N-k}{N-1} \Gamma_k (B_k)\ .
$$
\begin{lm} \label{technicalestim}  For all $k$, 
\begin{equation}\label{gell}
\Vert G_{\ell+k} \cdots  G_k(B_k) \Vert_{L(\cH_{k+\ell+1})} \le 4^{\ell+1} (\ell+k) \cdots k \Vert B_k \Vert_{\cB(\cH_k)}\ .
\end{equation}
For $\ell < N-k$,  so that $\Gamma_\ell B_k$ is well-defined, we have the following estimates:
\begin{equation}\label{gammaell}
\Vert \Gamma_{\ell+k} \cdots  \Gamma_k(B_k) \Vert_{L(\cH_{k+\ell+1})} \le 4^{\ell+1} (\ell+k) \cdots k \Vert B_k \Vert_{\cB(\cH_k)} \ .
\end{equation}
\begin{equation} \label{gelldiffgammaell}
\Vert G_{\ell+k} \cdots  G_k(B_k) - \Gamma_{\ell+k} \cdots  \Gamma_k(B_k) \Vert_{L(\cH_{k+\ell+1})} \le \frac{C_k 4^\ell \ell^{k+2}}{N-1} \ell!  \Vert B_k \Vert_{\cB(\cH_k)}  \ ,
\end{equation}
where $C_k$ is some constant that depends on $k$.
\end{lm}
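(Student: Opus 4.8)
The whole lemma reduces to one elementary fact together with the submultiplicativity of operator norms. The fact is that each $\cQ_{i,j}$ is a contraction for the operator norm, being an average of the isometries $A\mapsto U_{i,j}(\sigma)AU_{i,j}^{*}(\sigma)$; hence $\|(\cQ_{i,j}-\one_{\cH_{m}})A\|_{\infty}\le 2\|A\|_{\infty}$ for every $A\in\cB(\cH_{m})$ and every $m\ge j$. First I would bound a single factor. From $\Gamma_{k}(B_{k})=2\sum_{i=1}^{k}(\cQ_{i,k+1}-\one_{\cH_{k+1}})(B_{k}\otimes\one_{\cH})$ ($k$ summands, each of norm $\le 2\|B_{k}\|_{\infty}$, times the overall factor $2$) one gets $\|\Gamma_{k}\|\le 4k$ as a map $\cB(\cH_{k})\to\cB(\cH_{k+1})$. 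Substituting the same formula into the definition of $G_{k}$ exhibits $G_{k}(B_{k})$ as a sum of $\binom{k}{2}$ terms of weight $2/(N-1)$ and $k$ terms of weight $2(N-k)/(N-1)$, each of norm at most $2\|B_{k}\|_{\infty}$; the total weight is $\tfrac{2k}{N-1}\bigl(N-\tfrac{k+1}{2}\bigr)\le 2k$ for $1\le k\le N$, so $\|G_{k}\|\le 4k$ as well.

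Estimates \eqref{gell} and \eqref{gammaell} then follow at once by composing these bounds along $\cB(\cH_{k})\to\cB(\cH_{k+1})\to\cdots\to\cB(\cH_{k+\ell+1})$:
\[
\|\Gamma_{k+\ell}\cdots\Gamma_{k}\|\ \le\ \prod_{j=k}^{k+\ell}4j\ =\ 4^{\ell+1}(\ell+k)(\ell+k-1)\cdots k,
\]
and identically with every $\Gamma$ replaced by $G$; applied to $B_{k}$ these are exactly \eqref{gammaell} and \eqref{gell}.

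For \eqref{gelldiffgammaell} I would first compute the single-step discrepancy. Since $\frac{N-j}{N-1}=1-\frac{j-1}{N-1}$,
\[
G_{j}-\Gamma_{j}\ =\ \frac{2}{N-1}\sum_{1\le i<i'\le j}(\cQ_{i,i'}-\one)(\,\cdot\,)\otimes\one_{\cH}\ -\ \frac{j-1}{N-1}\,\Gamma_{j},
\]
so $\|G_{j}-\Gamma_{j}\|\le\frac{2j(j-1)}{N-1}+\frac{4j(j-1)}{N-1}=\frac{6j(j-1)}{N-1}$, which is $O(j^{2}/(N-1))$ with an absolute constant; the hypothesis $\ell<N-k$ keeps all indices $j\le k+\ell$ below $N$, so the $G_{j}$ are well defined. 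Next I would invoke the telescoping identity
\[
G_{k+\ell}\cdots G_{k}-\Gamma_{k+\ell}\cdots\Gamma_{k}\ =\ \sum_{m=k}^{k+\ell}\bigl(\Gamma_{k+\ell}\cdots\Gamma_{m+1}\bigr)(G_{m}-\Gamma_{m})\bigl(G_{m-1}\cdots G_{k}\bigr),
\]
with empty products read as the identity, and bound the $m$-th summand using the two previous steps; its norm is at most
\[
4^{\,k+\ell-m}\frac{(k+\ell)!}{m!}\cdot\frac{6m(m-1)}{N-1}\cdot 4^{\,m-k}\frac{(m-1)!}{(k-1)!}\ =\ \frac{6\cdot 4^{\ell}}{(N-1)(k-1)!}\,(k+\ell)!\,(m-1).
\]
Summing over $m$, with $\sum_{m=k}^{k+\ell}(m-1)\le(\ell+1)(k+\ell)$, produces a bound of the shape $\dfrac{6\cdot 4^{\ell}}{(N-1)(k-1)!}\,(k+\ell)!\,(\ell+1)(k+\ell)$.

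The remaining step is bookkeeping. Write $(k+\ell)!=\ell!\prod_{i=1}^{k}(\ell+i)$, so the factor multiplying $\ell!$ is the polynomial $p(\ell)=\prod_{i=1}^{k}(\ell+i)\cdot(\ell+1)(\ell+k)$, of degree $k+2$ with nonnegative coefficients. Any such polynomial satisfies $p(\ell)\le p(1)\,\ell^{k+2}$ for $\ell\ge1$, which turns the bound above into $\frac{C_{k}4^{\ell}\ell^{k+2}}{N-1}\,\ell!\,\|B_{k}\|_{\cB(\cH_{k})}$ with $C_{k}=6\,p(1)/(k-1)!=12k(k+1)^{2}$ (the trivial case $\ell=0$ being excluded, or handled directly by Step 1). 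The only part demanding any care is this last reorganization of the factorials — ensuring the $N$-dependence is exactly $1/(N-1)$ while the $\ell$-dependence collapses to $\ell^{k+2}\ell!$ — but it is elementary; everything of substance is just the contraction property of the $\cQ_{i,j}$ and submultiplicativity of the operator norm.
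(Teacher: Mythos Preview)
Your proof is correct and follows essentially the same approach as the paper: bound $\|\Gamma_k\|$, $\|G_k\|$, and $\|G_k-\Gamma_k\|$ using the contraction property of $\cQ_{i,j}$, then telescope and collect the factorials. Your cleanup is slightly different---you use the sharper $6j(j-1)$ in place of the paper's $6j^2$ and replace the paper's Stirling estimate by the elementary observation that a degree-$(k+2)$ polynomial $p$ with nonnegative coefficients satisfies $p(\ell)\le p(1)\ell^{k+2}$ for $\ell\ge 1$---but the structure is identical.
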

\begin{proof}

Using the elementary estimate $\Vert \cQ_{i,k+1}(B_k\otimes \one_\cH)\Vert_{L(\cH_{k+1})}\le \Vert B_k \Vert_{\cB(\cH_k)}$,
the estimate on $\Vert G_k(B_k) \Vert_{L(\cH_{k+1})}$ is then a consequence of 
$$
k\frac{4N -3k-1}{N-1} \Vert B_k \Vert_{\cB(\cH_k)} \le 4k  \Vert B_k \Vert_{\cB(\cH_k)} \ .
$$
Again using the elementary estimate $\Vert \cQ_{i,k+1}(B_k\otimes \one_\cH)\Vert_{L(\cH_{k+1})}\le \Vert B_k \Vert_{\cB(\cH_k)}$ we find
$$
\Vert \Gamma_k(B_k)\Vert_{L(\cH_{k+1})} \le 4k \Vert B_k\Vert_{L(\cH_{k})}
$$
from which \eqref{gammaell} follows immediately. 

A bit trickier is the proof of \eqref{gelldiffgammaell}.
Using the estimate on $G_k$ and $\Gamma_k$ one easily finds that
\begin{equation}\label{gdiffgamma}
\Vert G_k(B_k) - \Gamma_k(B_k)  \Vert_{L(\cH_{k+1})} \le \frac{2k(k-1)+(k-1)4k}{N-1}  \Vert B_k \Vert_{\cB(\cH_k)} \le \frac{6k^2}{N-1}  \Vert B_k \Vert_{\cB(\cH_k)} \ .
\end{equation}
We write the telescoping sum
\begin{eqnarray*}
&&G_{\ell+k} \cdots  G_k(B_k) - \Gamma_{\ell+k} \cdots  \Gamma_k(B_k)\\
&=&
(G_{\ell+k}-\Gamma_{\ell+k}) G_{\ell+k-1} \cdots G_k(B_k) \\
&+& \sum_{m=1}^{\ell-1} \Gamma_{\ell+k} \cdots \Gamma_{\ell+k-m+1}(G_{\ell+k-m}- \Gamma_{\ell+k-m}) G_{\ell+k-m-1} \cdots G_k(B_k)\\
& +& \Gamma_{\ell+k} \cdots \Gamma_{k+1}(G_k-\Gamma_k)(B_k) \ .\\
\end{eqnarray*}
Using \eqref{gell}, \eqref{gammaell} and \eqref{gdiffgamma}, 
the norm of the right side can be estimated by
$$
\frac{6}{N-1} 4^\ell (\ell+k) \cdots k (\ell+1) [ k+\ell/2] \Vert B_k \Vert_{\cB(\cH_k)}
$$
$$
\le \frac{6}{N-1} 4^\ell (\ell+1) \frac{ (\ell+k+1)!} {(k-1)!}\Vert B_k \Vert_{\cB(\cH_k)} \ .
$$
Using Stirling's formula we find that
$$
\frac{ (\ell+k+1)!}{\ell!} \approx \left(1+\frac{k+1}{\ell}\right)^{\ell+k+3/2}\ell^{k+1} e^{-k-1}
$$
which is bounded by  $D_k \ell^{k+1}$ where $D_k$ is some constant that depends on $k$.
This proves \eqref{gelldiffgammaell}.
\end{proof}

\begin{proof}[Proof of Theorem \ref{kacprop}]
The proof is now a word by word translation of the one given by McKean \cite{McKean} for the classical case. We start by writing
${\displaystyle 
\cP_{N,t} = \sum_{\ell=0}^\infty \frac{t^\ell}{\ell!} \cL_N^\ell}$
so that
\begin{equation}\label{series}
\tr(\cP_{N,t}\varrho_N (B_k \otimes^{N-k} \one_\cH )) = \sum_{\ell=0}^\infty \frac{t^\ell}{\ell!} \tr (\varrho_N \cL_N^\ell (B_k \otimes^{N-k} \one_\cH))
\end{equation}
using that $\cP_{N,t}$ is self-adjoint.  By what has been explained above, 
%
%
%
%
%
%
%
%
%
%
%
%
for $\ell \ge 1$
$$
\tr (\rho_N \cL_N^\ell (B_k \otimes^{N-k} \one_\cH)) = \tr (\rho_N  G_{k+\ell-1}G_{k+\ell-1} \cdots G_k (B_k) \otimes^{N-k-\ell -1} \one_\cH))\ .
$$
Now using \eqref{gell}, we see that 
$$\frac{t^\ell}{\ell!} |\tr (\varrho_N \cL_N^\ell (B_k \otimes^{N-k} \one_\cH))|  \leq 
\frac{t^\ell}{\ell!} 4^{\ell+1} (\ell+k) \cdots k \Vert B_k \Vert_{\cB(\cH_k)}\ .$$
Using Stirling's formula we find that
$$
\frac{ (\ell+k)!}{\ell! k!} \approx \frac{1}{k!}\left(1+\frac{k}{\ell}\right)^{\ell+k+1/2}\ell^{k} e^{-k}  \leq 
\frac{1}{k!}\left(e^k\right)^{k-1/2}\ell^{k} \ .
$$
Then since $\sum_{\ell=0}^\infty t^\ell 4^\ell \ell^k$ converges uniformly for $|t| < 1/4$. we see that the series on the right in \eqref{series} converges at a rate independent of $N$ for all $0 \leq t < 1/4$. 

We are now in a position to take the limit $N\to\infty$ term by term.
Using \eqref{gelldiffgammaell}  for $N > \ell +k$, we obtain the estimate 
$$
|\tr (\rho_N \cL_N^\ell (B_k \otimes^{N-k} \one_\cH)) -  \tr (\rho_N  \Gamma_{k+\ell-1} \cdots \Gamma_k (B_k) \otimes^{N-k-\ell -1} \one_\cH))|
$$
$$
\le \frac{C_k4^\ell}{N-1} \ell^{k+2} \ell!\Vert B_k \Vert_{\cB(\cH_k)}  \ .
$$
Thus,
$$
\Big |\sum_{\ell=0}^{N-k} \frac{t^\ell}{\ell!} \tr (\rho_N \cL_N^\ell (B_k \otimes^{N-k} \one_\cH))
-\sum_{\ell=0}^{N-k} \frac{t^\ell}{\ell!}  \tr (\rho_N  \Gamma_{k+\ell-1} \cdots \Gamma_k (B_k) \otimes^{N-k-\ell -1} \one_\cH))\Big |
$$
$$
\le\sum_{\ell=0}^{N-k} \frac{t^\ell}{\ell!} \Big | \tr (\rho_N \cL_N^\ell (B_k \otimes^{N-k} \one_\cH))
- \tr (\rho_N  \Gamma_{k+\ell-1} \cdots \Gamma_k (B_k) \otimes^{N-k-\ell -1} \one_\cH)) \Big |
$$
$$
\le \frac{C_k}{N-1}\sum_{\ell=0}^{N-k} (4t)^\ell \ell^{k+2} \Vert B_k \Vert_{\cB(\cH_k)} 
\le   \frac{C_k}{N-1} \sum_{\ell=0}^\infty (4t)^\ell \ell^{k+2} \Vert B_k \Vert_{\cB(\cH_k)}  \ .
$$
provided that $4t< 1$. Again, the reason for summing up to $N-k$ only is that for $\ell>N-k$ the expression $\Gamma_{k+\ell}$ is not defined. By what we have proved at the beginning, however, the tail of the sum makes a negligible contribution in the large $N$ limit.

\end{proof}


\subsection{Quantum Convolution and the Quantum Kac-Boltzmann equation}

In the  classical case, the single particle density satsifies the Kac-Boltzmann equation. In the quantum case, it satisfies a quantum analog of the Kac-Boltzmann equation, as we now explain. 

\begin{lm}\label{12lem} Let $\varrho(t) = e^{t\mathcal{L}}\varrho_0$ for a symmetric density matrix $\varrho_0$
on $\H_N$. Then the one and two particle reduced density matrices of $\varrho(t)$  are related by 
\begin{equation}\label{redu}
\frac{{\rm d}}{{\rm d}t} \varrho^{(1)}(t) = 
2\tr_2\left[ \int_{\aC}\dd \nu(\sigma)  U(\sigma) [\varrho^{(2)}(t)] U^*(\sigma)
-\varrho^{(2)}(t)  \right]  \ .
\end{equation}
\end{lm}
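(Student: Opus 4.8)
The plan is to differentiate the QKME, commute the partial trace $\tr_{2\dots N}$ through the time derivative and through the sum defining $\cL_N$, collapse that sum to the single pair $(1,2)$ using the permutation symmetry of $\varrho(t)$, and then rewrite the one surviving term in terms of $\varrho^{(2)}(t)$ via the product structure $U_{1,2}(\sigma)=U(\sigma)\otimes\one_{\H_{N-2}}$.

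Since $\|\cL_N\|_\infty\le 2N<\infty$, the curve $t\mapsto\varrho(t)=e^{t\cL_N}\varrho_0$ is differentiable in $\cT_1(\H_N)$ with $\tfrac{{\rm d}}{{\rm d}t}\varrho(t)=\cL_N\varrho(t)$, and $\varrho(t)$ is symmetric for every $t$ by the permutation-invariance established in the subsection on permutation invariance. The partial trace $\tr_{2\dots N}:\cT_1(\H_N)\to\cT_1(\H)$ is bounded and linear, hence commutes with ${\rm d}/{\rm d}t$ and with the $\nu$-averages, so by \eqref{quankac9}
\begin{equation*}
\frac{{\rm d}}{{\rm d}t}\varrho^{(1)}(t)=\tr_{2\dots N}[\cL_N\varrho(t)]=\frac{2}{N-1}\sum_{1\le i<j\le N}\tr_{2\dots N}\big[(\cQ_{i,j}-\one_{\H_N})\varrho(t)\big]\ .
\end{equation*}

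Next I would evaluate the summand, fixing the symmetric $\varrho=\varrho(t)$. If $1\notin\{i,j\}$, then $U_{i,j}(\sigma)=\one_{\H_1}\otimes W$ with $W$ a unitary on $\H_{\{2,\dots,N\}}$, so cyclicity of the trace over $\H_{\{2,\dots,N\}}$ gives $\tr_{2\dots N}[U_{i,j}(\sigma)\varrho U_{i,j}^*(\sigma)]=\tr_{2\dots N}[\varrho]$, whence $\tr_{2\dots N}[(\cQ_{i,j}-\one)\varrho]=0$ for these $\binom{N-1}{2}$ pairs. If $1\in\{i,j\}$, write the pair as $(1,j)$ and let $\tau=(2\,j)$ with permutation unitary $U_\tau=U_\tau^*$ (acting trivially on the first factor); from $U_{1,j}(\sigma)=U_\tau U_{1,2}(\sigma)U_\tau$ and $U_\tau\varrho U_\tau=\varrho$ we get $\cQ_{1,j}\varrho=U_\tau\,\cQ_{1,2}(\varrho)\,U_\tau$, and then cyclicity again gives $\tr_{2\dots N}[\cQ_{1,j}\varrho]=\tr_{2\dots N}[\cQ_{1,2}\varrho]$. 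Thus all $N-1$ pairs containing $1$ contribute identically, and since $\tfrac{2}{N-1}(N-1)=2$,
\begin{equation*}
\frac{{\rm d}}{{\rm d}t}\varrho^{(1)}(t)=2\,\tr_{2\dots N}\big[(\cQ_{1,2}-\one_{\H_N})\varrho(t)\big]\ .
\end{equation*}
This is just the $k=1$ instance of the reduction carried out for \eqref{lnn2}.

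Finally I would pass to $\varrho^{(2)}$. Writing $\tr_{2\dots N}=\tr_2\circ\tr_{3\dots N}$ and using that $U_{1,2}(\sigma)=U(\sigma)\otimes\one_{\H_{N-2}}$ acts trivially on the factors $3,\dots,N$, hence commutes with $\tr_{3\dots N}$, one obtains $\tr_{3\dots N}[\cQ_{1,2}\varrho(t)]=\int_\aC\dd\nu(\sigma)\,U(\sigma)\varrho^{(2)}(t)U^*(\sigma)$ with $U(\sigma)$ now acting on $\H\otimes\H$, while $\tr_{3\dots N}[\varrho(t)]=\varrho^{(2)}(t)$. Applying $\tr_2$ and substituting into the previous display yields exactly \eqref{redu}. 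I do not anticipate a genuine obstacle: the content is the partial-trace bookkeeping of the middle step, where one must separate pairs that meet the index $1$ from those that do not and use the permutation symmetry of $\varrho(t)$ correctly, while the interchanges of ${\rm d}/{\rm d}t$, $\tr$ and $\int\dd\nu$ are routine consequences of $\cL_N$ being bounded.
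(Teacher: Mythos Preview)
Your proof is correct and follows essentially the same route as the paper's own argument: differentiate, pass $\tr_{2\dots N}$ through $\cL_N$, discard the pairs not containing the index $1$ by cyclicity over the traced-out factors, collapse the remaining $N-1$ pairs to $(1,2)$ by permutation symmetry of $\varrho(t)$, and then split $\tr_{2\dots N}=\tr_2\circ\tr_{3\dots N}$ to produce $\varrho^{(2)}(t)$. The only difference is that you spell out the cyclicity step and the conjugation $U_{1,j}(\sigma)=U_\tau U_{1,2}(\sigma)U_\tau$ explicitly, whereas the paper simply writes the sum over $j=2,\dots,N$ directly and invokes ``symmetry'' in one word.
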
 

\begin{proof}
\begin{eqnarray}\frac{{\rm d}}{{\rm d}t} \rho^{(1)}(t) &=& \tr_{2,\dots,N}\left[ \mathcal{L}_N\varrho(t)\right]\nonumber\\
&=& \tr_{2,\dots,N}\left[ 
\frac{2}{N-1}\sum_{j=2}^N \int_{\aC}\dd \nu(\sigma)  \left[U_{1,j}(\sigma) \varrho(t) U^*_{1,j} (\sigma)   - 
\varrho(t)\right]\right]\nonumber\\
&=& \tr_{2,\dots,N}\left[ 
2 \int_{\aC}\dd \nu(\sigma)  \left[U_{1,2}(\sigma) \varrho(t) U^*_{1,2} (\sigma)   - 
\varrho(t)\right]\right]\nonumber
\end{eqnarray}
where  the symmetry was used in the last step. Now take $\tr_{3,\dots,N}$ to obtain the result. 
\end{proof}

Observe that if $\varrho^{(2)}(t) = \varrho^{(1)}(t)\otimes \varrho^{(1)}(t)$, then (\ref{redu}) would reduce to a closed equation for $\varrho^{(1)}(t)$:
\begin{equation}\label{reduA}
\frac{{\rm d}}{{\rm d}t} \varrho^{(1)}(t) = 
2\tr_2\left[ \int_{\aC}\dd \nu(\sigma) U(\sigma) [ \varrho^{(1)}(t)\otimes \varrho^{(1)}(t)] U^*(\sigma)]\right]
-\varrho^{(1)}(t)    \ .
\end{equation}
This brings us to the following definition:
\begin{defi}[quantum Wild convolution operator]\label{wilddef}  Let $(\aC,U,\nu)$ be a collision specification, The corresponding  {\em  quantum Wild  convolution} is the bilinear from $\cT(\H)\times \cT(\H)$ to  $\cT(\H)$
sending $(A,B)$ to $A\star B$ where
\begin{equation}\label{gain}
A\star B  =   \tr_2 \left[\int_{\aC}{\rm d}\nu(\sigma)U(\sigma) [A\otimes B] U^*(\sigma)\right] = \tr_2[\cQ(A\otimes B)]
\end{equation}
where $\cQ$ is the operator defined (\ref{echeck2}).
\end{defi}


Note that for any $A,B\in \cT(\H)$, 
\begin{eqnarray*}
\tr[A \star B] &=&  \tr_{1,2} \left[\int_{\mathcal{S}} U(\sigma) [A\otimes B] U^*(\sigma)\dd \sigma\right]\\
&=& \tr_{1,2} \left[\int_{\aC}{\rm d}\nu(\sigma) \tr_{1,2} [A\otimes B] \right] = \tr[A]\tr[B]\ .
\end{eqnarray*}
Furthermore if $A$ and $B$ are non-negative operators, then $A\star B$ is also non-negative. 
In particular, for $\rho\in \Dens(\H)$,  $\rho\star\rho\in \Dens(\H)$, and consequently
 
\begin{equation}\label{colop1}
\frac{{\rm d}}{{\rm d}t}  \rho(t) = 2(\rho(t) \star \rho(t) - \rho(t) )\ .
\end{equation}
is an evolution equation in $\Dens(\H)$. To see that is has unique global solutions, write it in the equivalent form
${\displaystyle \frac{{\rm d}}{{\rm d}t}  \left(e^{2t} \rho(t)\right)  = e^{2t} \rho(t) \star \rho(t)  }$, which, given the initial state $\rho_0$, can be integrated to obtain
\begin{equation}\label{colop2}
 \rho(t) = e^{-2t}\rho_0 + \int_0^t e^{2(s-t)} \rho(s) \star \rho(s){\rm d}s   \ .
\end{equation}
The equation (\ref{colop2}) may be solved by iteration as shown by McKean in the classical case, and the unique solution may be represented as a convergent sum over ``McKean graphs''.

\begin{defi}[The Quantum Kac-Boltzmann Equation] 
The Quantum  Kac Boltzmann Equation (QKBE)  is the evolution equation $\Dens(\H)$ given by
\begin{equation}\label{colop6}
\frac{{\rm d}}{{\rm d}t}  \rho(t) = 2(\rho(t) \star \rho(t)  -\rho(t))\ .
\end{equation}
\end{defi} 

\begin{thm}  Suppose that  $\{\varrho_N(0)\}_{N\in \N}$ is  $\rho(0)$-chaotic, and that for each $N$,
$\varrho_N(t) = \exp(t\mathcal{L}_N )\varrho_N(0)$ for all $t> 0$. Then $\rho(t)$  satisfies the 
Quantum Kac-Boltzmann Equation.
\end{thm}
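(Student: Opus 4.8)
The strategy is to avoid interchanging $\lim_{N\to\infty}$ with the time derivative directly. Instead I would pass to the mild (integral) form of the finite-$N$ evolution of the one-particle marginal, identify its $N\to\infty$ limit using Theorem~\ref{kacprop}, and then upgrade the resulting integral identity back to the differential equation (\ref{colop6}).

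First, fix $t>0$ and set $\varrho_N(s) = e^{s\cL_N}\varrho_N(0)$. Since $\cL_N$ is bounded, $s\mapsto\varrho_N(s)$ is norm-differentiable, and it is symmetric for all $s$ because $\varrho_N(0)$ is; hence Lemma~\ref{12lem} applies, and integrating (\ref{redu}) over $[0,s]$ yields, for $0\le s\le t$,
$$\varrho_N^{(1)}(s) \;=\; \varrho_N^{(1)}(0)\,+\,2\int_0^s\Big(\tr_2\!\big[\cQ\big(\varrho_N^{(2)}(u)\big)\big]\,-\,\varrho_N^{(1)}(u)\Big)\,\dd u\ ,$$
where $\cQ(X)=\int_\aC\dd\nu(\sigma)\,U(\sigma)XU^*(\sigma)$, extended from $\cB(\H_2)$ to $\cT_1(\H_2)$ as a trace-norm contraction, and where we used $\tr_2\varrho_N^{(2)}(u)=\varrho_N^{(1)}(u)$.

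Next, take $N\to\infty$. By Theorem~\ref{kacprop}, for each $u\ge0$ the sequence $\{\varrho_N(u)\}_N$ is $\rho(u)$-chaotic with $\rho(u)=\lim_N\varrho_N^{(1)}(u)\in\Dens(\H)$; concretely $\tr[\varrho_N^{(2)}(u)\,C]\to\tr[(\rho(u)\otimes\rho(u))\,C]$ for every $C\in\cB(\H_2)$, and $\varrho_N^{(1)}(u)\to\rho(u)$ against $\cB(\H)$. Pair the displayed identity with an arbitrary $B\in\cB(\H)$. The functional $X\mapsto\tr\!\big[\tr_2[\cQ(X)]\,B\big]$ on $\cT_1(\H_2)$ coincides with $X\mapsto\tr_{1,2}[X\,C_B]$, where $C_B=\int_\aC\dd\nu(\sigma)\,U^*(\sigma)(B\otimes\one_{\H})U(\sigma)\in\cB(\H_2)$ has $\|C_B\|_\infty\le\|B\|_\infty$, so by chaoticity at time $u$ and Definition~\ref{wilddef} it follows that $\tr\!\big[\tr_2[\cQ(\varrho_N^{(2)}(u))]\,B\big]\to\tr\!\big[(\rho(u)\star\rho(u))\,B\big]$, while $\tr[\varrho_N^{(1)}(u)B]\to\tr[\rho(u)B]$. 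The integrand is bounded in modulus by $2\|B\|_\infty$, uniformly in $N$ and $u$, so dominated convergence (with $s=t$) gives
$$\tr[\rho(t)B]\;=\;\tr[\rho(0)B]\,+\,2\int_0^t\big(\tr[(\rho(u)\star\rho(u))B]-\tr[\rho(u)B]\big)\,\dd u\ ;$$
since $B\in\cB(\H)$ is arbitrary, $\rho(t)=\rho(0)+2\int_0^t\big(\rho(u)\star\rho(u)-\rho(u)\big)\,\dd u$ as an identity in $\cT_1(\H)$.

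Finally, I would upgrade this to the QKBE. Each $\varrho_N^{(1)}(\cdot)$ is norm-continuous with $\|\varrho_N^{(1)}(u)\|_1=1$, so $u\mapsto\rho(u)$ is a measurable family of density matrices; by (\ref{gain}), $\|A\star B\|_1\le\|A\|_1\|B\|_1$, so $\|\rho(u)\star\rho(u)-\rho(u)\|_1\le2$ and the integral above defines a trace-norm Lipschitz, hence continuous, function of $t$. Thus $u\mapsto\rho(u)$ is trace-norm continuous, and by the joint trace-norm continuity of $\star$ so is $u\mapsto\rho(u)\star\rho(u)-\rho(u)$; the fundamental theorem of calculus then gives $\tfrac{\dd}{\dd t}\rho(t)=2(\rho(t)\star\rho(t)-\rho(t))$, which is (\ref{colop6}). (Equivalently, the integral identity is (\ref{colop2}) in disguise, and one may invoke its uniqueness.) The one genuinely delicate step is the exchange of $N\to\infty$ with differentiation in $t$; routing through the integral form and using the uniform bound $2\|B\|_\infty$ on the integrand is exactly what makes it harmless, and the rest is soft.
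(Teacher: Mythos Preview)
Your argument is correct and uses the same two ingredients the paper invokes---Theorem~\ref{kacprop} (propagation of chaos) and Lemma~\ref{12lem} (the relation between the one- and two-particle marginals)---so the overall route is the same. The paper's proof is a one-line appeal to these results together with the well-posedness discussion around (\ref{colop2}), whereas you supply the step the paper leaves implicit: rather than exchanging $\lim_{N\to\infty}$ with $\tfrac{\dd}{\dd t}$ directly in (\ref{redu}), you integrate (\ref{redu}) in time, rewrite $\tr\big[\tr_2[\cQ(\varrho_N^{(2)}(u))]\,B\big]=\tr_{1,2}[\varrho_N^{(2)}(u)\,C_B]$ with $C_B\in\cB(\H_2)$, pass to the limit under the integral by dominated convergence, and then differentiate back using the trace-norm continuity of $u\mapsto\rho(u)$ that the integral identity itself provides. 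This is a clean way to make the limit rigorous; the paper's implicit alternative is to identify the power-series representation of $\rho(t)$ coming out of the proof of Theorem~\ref{kacprop} with the iterated solution of (\ref{colop2}). Your path is more self-contained and avoids revisiting that series.
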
 

\begin{proof}  This is an immediate consequence of Theorem~\ref{kacprop} and Lemma~\ref{12lem}, and what we have said above about solutions of the QKBE. 
\end{proof}

\subsection{The quantum Wild convolution}

The operator $\cQ$ defined in Lemma~\ref{convlem} is the same as the operator $\cQ$ that arises in the Wild convolution. 
In Example~\ref{exam2} we saw that $\cQ = {\rm E}_{\cA_2}$.  In general, when $(\aC,U,\nu)$ is ergodic,
$\lim_{k\to\infty} \cQ^k = {\rm E}_{\cA_2}$, but if $\nu$ is uniform enough,as in Example~\ref{exam2},  it may not be necessary to take the limit. 
When $\cQ = {\rm E}_{\cA_2}$ it is  possible to give an explicit formula for the quantum Wild convolution $A \star B $ in terms of the spectral decomposition of $h$ using the formula (\ref{EAform}) for ${\rm E}_{\cA_2}$.

\begin{lm}\label{explicit}  Let $(\aC,U,\nu)$ be a collision specification  such that $\cQ$, as defined in terms of 
$(\aC,U,\nu)$ in  (\ref{echeck2}), satisfies  $\cQ =  {\rm E}_{\cA_2}$. For 
$E\in {\rm Spec}(H_2)$ and $j\in \cJ$, define 
\begin{equation}\label{normalize}
 M_E = \sum_{j\in \cJ} 1_{{\rm Spec}(h)} (E - e_j)\ .
\end{equation}
Then $M_E < \infty$ for all $E\in {\rm Spec}(H_2)$, and for all  $A,B\in \cT(\H)$,
\begin{equation}\label{wild2}
A\star B  = \sum_{i,j,k\in \cJ} 
\langle \psi_i, A\psi_i\rangle  \langle \psi_k, B\psi_k\rangle \frac{1_{{\rm Spec}(h)} (e_i+e_k - e_j)}    {M_{e_i+e_k}}|\psi_j \bk \psi_j|\ .
\end{equation}

\end{lm}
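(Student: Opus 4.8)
\emph{Proof proposal.} The plan is to unwind $A\star B = \tr_2[\cQ(A\otimes B)]$ using the hypothesis $\cQ={\rm E}_{\cA_2}$ and the formula (\ref{EAform}) for the conditional expectation onto $\cA_2$, and then to rearrange the resulting sums so that the coefficient of each $|\psi_j\rangle\langle\psi_j|$ acquires the stated form.

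I would first settle finiteness of $M_E$. Since $h$ has compact resolvent its eigenvalues have no finite accumulation point, so $e_j\ge e_1$ for every $j\in\cJ$ and $e_j\to\infty$ when $\H$ is infinite dimensional. For $E-e_j$ to lie in ${\rm Spec}(h)$ one needs $e_j\le E-e_1$, and only finitely many indices $j$ satisfy this; hence $M_E<\infty$. Moreover, if $E\in{\rm Spec}(H_2)$ then $E=e_i+e_k$ for some $i,k$, so $E-e_i=e_k\in{\rm Spec}(h)$ and $M_E\ge 1$; thus the right-hand side of (\ref{wild2}) makes sense.

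Then comes the computation. For each $E\in{\rm Spec}(H_2)$ the vectors $\{\psi_i\otimes\psi_k:e_i+e_k=E\}$ form an orthonormal basis of $\K_E$, so
$$P_E=\sum_{e_i+e_k=E}|\psi_i\otimes\psi_k\rangle\langle\psi_i\otimes\psi_k|,\qquad \tr[P_E(A\otimes B)]=\sum_{e_i+e_k=E}\langle\psi_i,A\psi_i\rangle\,\langle\psi_k,B\psi_k\rangle .$$
By (\ref{EAform}), $\cQ(A\otimes B)={\rm E}_{\cA_2}(A\otimes B)=\sum_E\frac{\tr[P_E(A\otimes B)]}{{\rm dim}(\K_E)}P_E$, and applying $\tr_2$ together with $\tr_2\big[|\psi_m\otimes\psi_n\rangle\langle\psi_m\otimes\psi_n|\big]=|\psi_m\rangle\langle\psi_m|$ gives
$$A\star B=\sum_{E\in{\rm Spec}(H_2)}\frac{1}{{\rm dim}(\K_E)}\Big(\sum_{e_i+e_k=E}\langle\psi_i,A\psi_i\rangle\langle\psi_k,B\psi_k\rangle\Big)\Big(\sum_{e_m+e_n=E}|\psi_m\rangle\langle\psi_m|\Big).$$

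It remains to reorganize the sums over the pairs $(i,k)$, $(m,n)$ and the energy $E$ into a single sum over the triple $(i,k,j)$: a given pair $(i,k)$ contributes only at $E=e_i+e_k$, and $\psi_j$ occurs among the left factors of the basis of $\K_E$ exactly when $E-e_j\in{\rm Spec}(h)$. Collecting the coefficient of $\langle\psi_i,A\psi_i\rangle\langle\psi_k,B\psi_k\rangle\,|\psi_j\rangle\langle\psi_j|$ and simplifying the normalizing constant to $M_{e_i+e_k}$ produces (\ref{wild2}). The one place that needs genuine care is this last bookkeeping step — keeping the roles of the four summation indices straight and verifying that the constant ${\rm dim}(\K_E)$ collapses to $M_E$; the remainder is a direct substitution of (\ref{EAform}) followed by a partial trace.
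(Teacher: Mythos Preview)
Your proof is correct and follows essentially the same route as the paper's: both unwind $A\star B=\tr_2[{\rm E}_{\cA_2}(A\otimes B)]$ via the formula (\ref{EAform}), observe that only the diagonal matrix elements $\langle\psi_i,A\psi_i\rangle$, $\langle\psi_k,B\psi_k\rangle$ survive, and then compute $\tr_2[\sigma_E]$. The paper expands $A$ and $B$ in full matrix elements and applies ${\rm E}_{\cA_2}$ to each rank-one term $|\psi_i\otimes\psi_k\rangle\langle\psi_j\otimes\psi_\ell|$, picking up the Kronecker deltas $\delta_{i,j}\delta_{k,\ell}$; you shortcut this slightly by computing $\tr[P_E(A\otimes B)]$ directly, but the substance is identical. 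Your final bookkeeping step---identifying ${\rm dim}(\K_E)$ with $M_E$ and the multiplicity of $|\psi_j\rangle\langle\psi_j|$ in $\tr_2 P_E$ with $1_{{\rm Spec}(h)}(E-e_j)$---is exactly what the paper does in its displayed formula for $\tr_2[\sigma_E]$.
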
 

\begin{proof}  The fact that $M_E < \infty$ is a direct consequence of the compactness of the resolvent of $h$. Write 
${\displaystyle A = \sum_{i,j\in \cJ} a_{i,j} | \psi_i\bk \psi_j| }$ and  ${\displaystyle 
B = \sum_{k,\ell\in \cJ} b_{k,\ell
} | \psi_k\bk \psi_\ell|}$ 
in terms of an eigenbasis of $h$ so that 
 ${\displaystyle A\otimes B = \sum_{i,j k,\ell\in \cJ}a_{i,j}b_{k,\ell} |\psi_i\otimes \psi_k \bk \psi_j\otimes \psi_\ell|}$.
 Then
$$\cQ(A\otimes B)  = \sum_{i,j k,\ell\in \cJ}a_{i,j}b_{k,\ell} \cQ( |\psi_i\otimes \psi_k \bk \psi_j\otimes \psi_\ell|) =
\sum_{i,j k,\ell\in \cJ}a_{i,j}b_{k,\ell} {\rm E}_{\cA_2}( |\psi_i\otimes \psi_k \bk \psi_j\otimes \psi_\ell|)\ .$$
By (\ref{EAform})
$$ {\rm E}_{\cA_2}( |\psi_i\otimes \psi_k \bk \psi_j\otimes \psi_\ell|) = \tr[ P_E |\psi_i\otimes \psi_k \bk \psi_j\otimes \psi_\ell|] \sigma_E  = \delta_{i,j}\delta_{k,\ell}\sigma_{e_i+e_k}\ .$$
Therefore,
${\displaystyle 
\cQ(A\otimes B) =  \sum_{i,k\in \cJ} a_{i,i}b_{k,k} \sigma_{e_i+e_k}}$.
For each $E\in {\rm Spec}(H_2)$,  
$$\tr_2[\sigma_{E}] = \frac{1}{M_{E}} \sum_{j\in \cJ} 1_{{\rm Spec}(h)} (E - e_j)    |\psi_j \bk \psi_j|\ ,$$
and combining this with the expression for $\cQ(A\otimes B)$ yields the result. 
\end{proof}

The formula (\ref{wild2}) simplifies further if $h$ satisfies a certain non-degeneracy condition, as shown in the next lemma. We first define an operator that will appear in the simplified formula.

\begin{defi} [Diagonal projection] Let $\{\psi_j\}$ be an orthonormal basis of $\H$ consisting of eigenvectors of $h$. The operator $\cD_h$ on $\cT_2(\H)$ is defined by
$$\cD_h(A) = \sum_{j\in \cJ} \langle \psi_j, A\psi_j\rangle |\psi_j\bk \psi_j|\ .$$
$\cD_h$ is the orthogonal projection onto the commutative {\em diagonal algebra} of that is generated by the spectral projections of $h$.
\end{defi}

\begin{lm}\label{nondeg}  Let $(\aC,U,\nu)$ be a collision specification  such that $\cQ$, as defined in terms of it in  (\ref{echeck2}), satisfies  $\cQ =  {\rm E}_{\cA_2}$.  Suppose that for all $i,j,k,\ell\in \cJ$, if $e_i+ e_j = e_k+ e_\ell$,
then either $i=k$ and $j=\ell$ or $i = \ell$ and $j=k$.  Then  for all $A,B\in \cT(\H)$,
$$A\star B = \frac12\left(  \tr[A] \cD_h(B)  + 
\tr[B] \cD_h(A)\right)\ .$$
In particular, for $\rho\in \Dens(\H)$,
$$\rho \star \rho = \cD_h(\rho)\ .$$
\end{lm}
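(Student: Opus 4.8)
The plan is to start from the explicit formula \eqref{wild2} of Lemma~\ref{explicit} --- which is available since we assume $\cQ = {\rm E}_{\cA_2}$ --- and to collapse it using the non-degeneracy hypothesis on ${\rm Spec}(h)$; the whole argument is bookkeeping, with no real obstacle. First I would record what the hypothesis buys us. It forces ${\rm Spec}(h)$ to be simple: if $e_{i'} = e_i$ with $i' \neq i$, then $e_i + e_k = e_{i'} + e_k$ for every $k$, and the hypothesis gives $\{i,k\} = \{i',k\}$, i.e. $i = i'$. Hence, for fixed $i,k\in\cJ$, we have $e_i + e_k - e_j \in {\rm Spec}(h)$ if and only if $j\in\{i,k\}$: if $e_i+e_k-e_j = e_\ell$ then $e_i+e_k = e_j+e_\ell$, so $\{i,k\} = \{j,\ell\}$, giving $j=i$ (with $\ell = k$) or $j=k$ (with $\ell = i$), and conversely both choices work. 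Two consequences to isolate: the normalization in \eqref{normalize} is $M_{e_i+e_k} = 2$ when $i\neq k$ and $M_{2e_i} = 1$; and for fixed $i,k$ only the terms $j=i$ and $j=k$ survive the inner sum in \eqref{wild2}.

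Substituting this into \eqref{wild2} and abbreviating $a_{ii} = \langle\psi_i,A\psi_i\rangle$, $b_{kk} = \langle\psi_k,B\psi_k\rangle$, I would split the $(i,k)$-sum into diagonal and off-diagonal parts:
\[
A\star B = \sum_{i\in\cJ} a_{ii}b_{ii}\,|\psi_i\bk\psi_i|
\;+\; \frac12\sum_{i\neq k} a_{ii}b_{kk}\bigl(|\psi_i\bk\psi_i| + |\psi_k\bk\psi_k|\bigr)\ .
\]
Then I would carry out the partial sums $\sum_{k\neq i} b_{kk} = \tr[B]-b_{ii}$ and $\sum_{i\neq k} a_{ii} = \tr[A]-a_{kk}$ inside the off-diagonal term, turning it into $\tfrac12\tr[B]\cD_h(A) + \tfrac12\tr[A]\cD_h(B) - \sum_i a_{ii}b_{ii}|\psi_i\bk\psi_i|$; the last piece exactly cancels the diagonal term. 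This yields $A\star B = \tfrac12\bigl(\tr[A]\cD_h(B) + \tr[B]\cD_h(A)\bigr)$, and setting $A = B = \rho$ with $\tr[\rho] = 1$ gives $\rho\star\rho = \cD_h(\rho)$.

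The only place one can slip is in evaluating $M_{e_i+e_k}$: using $M = 2$ uniformly would break the cancellation of the diagonal term, so the case $i = k$ must be handled separately. As a consistency check I would confirm that the final formula respects the identity $\tr[A\star B] = \tr[A]\tr[B]$ noted after Definition~\ref{wilddef}, which it does since $\tr[\cD_h(A)] = \tr[A]$.
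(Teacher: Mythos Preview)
Your proof is correct and follows essentially the same route as the paper: start from the explicit formula \eqref{wild2} of Lemma~\ref{explicit}, identify the surviving $j$-terms and the values of $M_E$ under the non-degeneracy hypothesis, and collapse the sum. The only organizational difference is that the paper writes the result directly as $\tfrac12\sum_{i,k}\langle\psi_i,A\psi_i\rangle\langle\psi_k,B\psi_k\rangle(|\psi_i\bk\psi_i|+|\psi_k\bk\psi_k|)$ without splitting into diagonal and off-diagonal parts, since the diagonal case $i=k$ is automatically handled by $|\psi_i\bk\psi_i|+|\psi_k\bk\psi_k|=2|\psi_i\bk\psi_i|$, which cancels against the $\tfrac12$; so your caution about treating $i=k$ separately is sound but not strictly necessary.
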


\begin{proof} Under the non-degeneracy condition, for all $E\in {\rm Spec}(H_2)$, $M_E =1$ in case $E = 2e_j$ for some $j\in \cJ$, or else  $E = e_i+e_j$ for some $i\neq j$, and then $M_E  = 2$.
Therefore, (\ref{wild2}) becomes 
\begin{equation}\label{wild2C}
A\star B  = \frac12 \sum_{i,k\in \cJ} 
\langle \psi_i, A\psi_i\rangle  \langle \psi_k, B\psi_k\rangle (|\psi_j \bk \psi_j| + |\psi_k\bk \psi_k|)\ .
\end{equation}
\end{proof} 

When the conditions of Lemma~\ref{nondeg} are satisfied, the QKBE reduces to the {\em linear} evolutions equation
\begin{equation}\label{colop6B}
\frac{{\rm d}}{{\rm d}t}  \rho(t) = 2(\cD_h\rho(t)  -\rho(t))\ .
\end{equation}

Since the conditions of Lemma~\ref{nondeg} are satisfied for the collision specification in Example~\ref{exam2}, 
the QKBE for this example is linear.  When the non-degeneracy condition in Lemma~\ref{nondeg} 
is not satisfied, then
the map $\rho \mapsto \rho\star \rho$ need not be linear on $\Dens(\H)$, as Example~\ref{deg} 
below shows. However, when $\cQ = {\rm E}_{\cA_2}$, Lemma\ref{explicit} shows that 
$A \star B = \cD_h(A) \star \cD_h(B)$, so that the quantum Wild convolution of 
$\rho_1,\rho_2\in \Dens(\H)$ in this case is really a convolution of the classical 
probability vectors on the diagonals of $\rho_1$ and $\rho_2$.   

However, when $\cQ \neq {\rm E}_{\cA_2}$, but only $\lim_{k\to\infty}\cQ^k = {\rm E}_{\cA_2}$, 
the quantum Wild convolution is a more interesting quantum operation.

\begin{exam}\label{ncexam}  In this example, let $(\aC,U,\nu)$ be the collision specification from Example~\ref{exam2A}.   Let $a,b\in [0,1]$ and let $w,z\in \C$ satisfy $|z|,|w| \leq 1$ so that with
$$\rho_1 = \left[\begin{array}{cc} a & z\\ \overline{z} & 1-a\end{array}\right] \qquad{\rm and}\qquad 
\rho_2 = \left[\begin{array}{cc} b & w\\ \overline{w} & 1-b\end{array}\right] \ ,$$
$\rho_1$ and $\rho_2$ are generic elements of $\Dens(\C^2)$.   Then using the basis from Examples~\ref{exam2}
to identify $\C^2\otimes \C^2$ with $\C^4$, 
\begin{equation}\label{tpro}
\rho_1\otimes\rho_2 = \left[\begin{array}{cccc}
ab & zb & aw & zw\\
\overline{z}b & b(1-a) & \overline{z}w & w(1-a)\\
a\overline{w} & z\overline{w} & a(1-b) & z(1-b)\\
\overline{zw} & (1-a)\overline{w} & (1-b)\overline{z} &(1-a)(1-b)\end{array}\right]\ .
\end{equation}
Then by (\ref{notproj}),
\begin{equation*}
\cQ(\rho_1\otimes\rho_2) = 
\left[\begin{array}{cccc}
ab & \frac18 zb & \frac18 aw & \frac12 zw\\
\frac18\overline{z}b & \frac12 (a+b)-ab & 0 & \frac14 w(1-a)\\
\frac18 a\overline{w} & 0 & \frac12 (a+b)-ab & \frac14 z(1-b)\\
\frac12 \overline{zw} & \frac14(1-a)\overline{w} & \frac14(1-b)\overline{z} &(1-a)(1-b)\end{array}\right]\ .
\end{equation*}

Note  that $\tr_2$, the partial trace over the second factor, is obtained by adding up the two diagonal $2\times 2$
blocks, and $\tr_1$, the partial trace over the first factor, is obtained by taking the trace in each $2\times 2$ block. 
Therefore,
\begin{equation}\label{ncwild}
\rho_1\star \rho_2 =  \left[\begin{array}{cc} \frac12(a+b) & \frac{z}{8}(2-b)\\ \frac{\overline{z}}{8}(2-b) & 1 -\frac12(a+b)\end{array}\right]\ .
\end{equation}
In particular, taking $\rho = \rho_1$,
\begin{equation}\label{ncwild2}
\rho\star \rho=  \left[\begin{array}{cc} a & \frac{z}{8}(2-a)\\ \frac{\overline{z}}{8}(2-a) & 1 -a\end{array}\right]\ ,
\end{equation}
which is nonlinear in $\rho$. Note also, that in contrast with the classical case, the quantum Wild convolution is not commutative; $\rho_1\star \rho_2 \neq \rho_2\star \rho_1$ when $z(2-b) \neq w(2-a)$. 
\end{exam} 

\begin{exam}\label{deg} Take $\H = \C^3$,  so that $\H_N = (\C^3)^{\otimes N}$.   Define the single particle Hamiltonian $h$ by 
  $h = \left[\begin{array}{ccc} 0 & 0 &0 \\ 0 & 1 & 0\\ 0 & 0 &2 \end{array}\right]$ so that the $N$-particle Hamiltonian
 $H_N = \sum_{j=1}^N h_j$ has $\spec   =\{0,\dots,2N\}$.   In particular ${\rm Spec}(H_2) = \{ 0,1,2,3,4\}$.
Let $\{\psi_1,\psi_2,\psi_3\}$ be the standard basis of $\C^3$ so that $h\psi_j = (j-1)\psi_j$. Then the eigenspace of
$H_2$ with eigenvalue $2$ is spanned by 
$$\psi_1\otimes \psi_3\ ,\quad  \psi_2\otimes \psi_2\ \quad  {\rm and} 
\quad \psi_3\otimes \psi_1\ .$$
Therefore,
$$\tr_2[\sigma_2] = \frac13 (|\psi_1 \bk \psi_1| + |\psi_2 \bk \psi_2| + |\psi_3 \bk \psi_3|)\ .$$
Take $\aC$ to be the subgroup of $U(9)$ that commutes with $H_2$ considered as an operator on $\C^9$.
Let $U$ be the identity map on $\aC$, and let $\nu$ be the uniform Haar measure on $\cC$. 
Using Lemma~\ref{explicit} it is easy to see that $\rho \mapsto \rho \star \rho$ is non-linear. 
\end{exam}

\subsection{Steady states for the  QKBE}  

Through this subsection we fix an ergodic collision specification $(\aC,U,\nu)$, and let 
$\star$ denote the corresponding Wild convolution.  Let
\begin{equation}\label{hspecres}
h = \sum_{e\in {\rm Spec}(h)} e P_e
\end{equation}
be a spectral resolution of the single particle Hamiltonian $h$.

The steady state solutions of the QKBE are precisely the $\rho\in \Dens(\H)$ such that $\rho = \rho\star \rho$.
The Gibbs states $\rho_\beta = Z_\beta^{-1} e^{-\beta h}$  are always steady states since 
$$\rho_{\beta}\otimes \rho_\beta = Z_{\beta}^{-2} e^{-\beta H_2} \in \cA_2$$
and thus $U(\sigma) \rho_{\beta}\otimes \rho_\beta  U^*(\sigma) = 
\rho_{\beta}\otimes \rho_\beta $ for all $\sigma\in \aC$. It follows that
$\cQ(\rho_{\beta}\otimes \rho_\beta) = \rho_{\beta}\otimes \rho_\beta$, 
and then that $\rho_{\beta}\star\rho_\beta  =\rho_\beta$. 
Whether or not there are other steady states depends on the spectrum of $h$ in a way that will be specified below. 
For any density matrix $\rho$, $S(\rho) = -\tr[rho \log \rho]$ is the von Neumann entropy of $\rho$.  The set of finite entropy steady states turns out to be independent of the particular ergodic collision specification $(\aC,U \nu)$, but depends only on $h$.

\begin{thm}\label{steadystate}
Let $h$ have the spectral resolution (\ref{hspecres}), and let $\rho \in \Dens(\H)$ be such that $\rho = \rho\star \rho$ and $S(\rho) < \infty$. Then $\rho$ has the form
\begin{equation}\label{rhospecres}
\rho = \sum_{e\in {\rm Spec}(h)} \lambda_e P_e
\end{equation}
for non-negative numbers $\{\lambda_e\ :\ e\in {\rm Spec}(h)\}$ such that $\sum_{e\in{\rm Spec}(h)} \tr[P_e]\lambda_e =1$.
Moreover, if $\{e_i,e_j,e_k,e_\ell\} \subset {\rm Spec}(h)$  then
\begin{equation}\label{rhospecres2}
e_i+e_j = e_k+e_\ell \quad \Rightarrow \quad \log  \lambda_{e_i}+\log  \lambda_{e_j} = \log  \lambda_{e_k}+\log  \lambda_{e_\ell}\ .
\end{equation}
Conversely, every such $\rho\in \Dens(\H)$ is a steady state. 
\end{thm}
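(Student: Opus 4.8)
The plan is to prove the two implications separately. I would start with the converse, which is a short computation that uses neither ergodicity nor finiteness of the entropy. Given $\rho=\sum_{e\in{\rm Spec}(h)}\lambda_e P_e$ with $\lambda_e\ge 0$, $\sum_e\tr[P_e]\lambda_e=1$, and the relation (\ref{rhospecres2}), one first checks $\rho\in\Dens(\H)$: it is positive, has unit trace, and is trace class since the trace is finite and each $P_e$ has finite rank by compactness of the resolvent of $h$. Exponentiating (\ref{rhospecres2}), with the convention $e^{-\infty}=0$, shows that $\lambda_e\lambda_{e'}$ depends only on $E=e+e'$; call this value $\mu_E\ge 0$, which is well defined even when some $\lambda$'s vanish because (\ref{rhospecres2}) forbids a zero and a nonzero product for the same $E$. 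Then
\[
\rho\otimes\rho=\sum_{e,e'}\lambda_e\lambda_{e'}\,P_e\otimes P_{e'}=\sum_{E\in{\rm Spec}(H_2)}\mu_E\sum_{e+e'=E}P_e\otimes P_{e'}=\sum_E\mu_E P_E\in\cA_2 ,
\]
where $P_E$ is the projection onto the energy shell $\K_E$ of $H_2$. By (\ref{comcon}), $\cA_2\subset\{U(\sigma):\sigma\in\aC\}'$, so $U(\sigma)(\rho\otimes\rho)U^*(\sigma)=\rho\otimes\rho$ for all $\sigma$, hence $\cQ(\rho\otimes\rho)=\rho\otimes\rho$ and $\rho\star\rho=\tr_2[\cQ(\rho\otimes\rho)]=\tr_2[\rho\otimes\rho]=\rho$.

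\textbf{Forward direction, first step: showing $\rho\otimes\rho\in\cA_2$.} Assume $\rho=\rho\star\rho$ and $S(\rho)<\infty$, and set $X:=\cQ(\rho\otimes\rho)\in\Dens(\H_2)$, with $\cQ$ the operator of Lemma~\ref{convlem}. By Definition~\ref{wilddef}, $\tr_2[X]=\rho\star\rho=\rho$; and property {\it (iv)} of Definition~\ref{cfdef} makes $\cQ$ commute with conjugation by the swap $V$, so $VXV^*=\cQ(V(\rho\otimes\rho)V^*)=X$ and thus $\tr_1[X]=\tr_2[VXV^*]=\rho$ as well. Subadditivity of the von Neumann entropy gives $S(X)\le S(\tr_1 X)+S(\tr_2 X)=2S(\rho)<\infty$. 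On the other hand, Lemma~\ref{convlem} applied to the strictly convex, unitarily invariant functional $\Phi$ of (\ref{vnen}) (negative von Neumann entropy) gives $\Phi(X)=\Phi(\cQ(\rho\otimes\rho))\le\Phi(\rho\otimes\rho)$, i.e.\ $S(X)\ge S(\rho\otimes\rho)=2S(\rho)$. Hence $\Phi(X)=\Phi(\rho\otimes\rho)$, a finite value, and the strict-convexity equality clause of Lemma~\ref{convlem} forces $\rho\otimes\rho\in\{U(\sigma):\sigma\in\aC\}'$; by ergodicity (Definition~\ref{ergdef}) this commutant equals $\cA_2$, so $\rho\otimes\rho=\sum_E\mu_E P_E$ with $\mu_E\ge 0$.

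\textbf{Forward direction, second step: extracting the form of $\rho$.} Since $h\otimes\one$ commutes with $H_2$ it commutes with $\rho\otimes\rho$; expanding $0=[\rho\otimes\rho,\,h\otimes\one]=[\rho,h]\otimes\rho$ and using $\rho\ne 0$ gives $[\rho,h]=0$, so $\rho=\sum_e P_e\rho P_e$ and $\rho\otimes\rho$ leaves each block ${\rm ran}(P_e)\otimes{\rm ran}(P_{e'})$ invariant. Restricting $\rho\otimes\rho=\sum_E\mu_E P_E$ to ${\rm ran}(P_e)\otimes{\rm ran}(P_e)\subset\K_{2e}$ yields $(P_e\rho P_e)^{\otimes 2}=\mu_{2e}\,\one$ on that block; since $P_e\rho P_e\ge0$ and every product of two of its eigenvalues equals $\mu_{2e}$, all those eigenvalues equal $\mu_{2e}^{1/2}$, so $P_e\rho P_e=\lambda_e P_e$ with $\lambda_e:=\mu_{2e}^{1/2}\ge0$. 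Thus $\rho=\sum_e\lambda_e P_e$ and $\sum_e\tr[P_e]\lambda_e=\tr[\rho]=1$. Restricting the same identity to ${\rm ran}(P_{e_i})\otimes{\rm ran}(P_{e_j})\subset\K_{e_i+e_j}$ gives $\lambda_{e_i}\lambda_{e_j}=\mu_{e_i+e_j}$, which depends only on $e_i+e_j$; taking logarithms (with $\log 0=-\infty$) gives (\ref{rhospecres2}).

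\textbf{Expected main obstacle.} The crux is pinching the two entropy estimates in the first step to an exact equality — the lower bound on $S(\cQ(\rho\otimes\rho))$ from concavity of entropy (Lemma~\ref{convlem}) and the upper bound from subadditivity together with the fact that both one-particle marginals of $\cQ(\rho\otimes\rho)$ equal $\rho$ — so that the strict-convexity clause of Lemma~\ref{convlem} becomes applicable; the hypothesis $S(\rho)<\infty$ is precisely what makes that clause available. An equivalent route would use the equality case of subadditivity to deduce $\cQ(\rho\otimes\rho)=\rho\otimes\rho$ directly and then quote the last sentence of Lemma~\ref{convlem}. A secondary, purely bookkeeping, point is the treatment of vanishing $\lambda_e$, which requires reading (\ref{rhospecres2}) in the extended reals throughout.
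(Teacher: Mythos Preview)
Your proof is correct and follows essentially the same strategy as the paper's: both marginals of $\cQ(\rho\otimes\rho)$ equal $\rho$ by the swap symmetry {\it (iv)}, subadditivity gives $S(\cQ(\rho\otimes\rho))\le 2S(\rho)$, the convexity lemma (concavity of entropy) gives $S(\cQ(\rho\otimes\rho))\ge 2S(\rho)$, and the strict-convexity clause forces $\rho\otimes\rho\in\cA_2$. Your extraction step differs only in presentation: rather than reading off diagonality of $\rho\otimes\rho$ in the product eigenbasis $\{\psi_i\otimes\psi_j\}$ (as the paper does, via a somewhat elliptical reference to Lemma~\ref{nondeg}), you deduce $[\rho,h]=0$ from $[h\otimes\one,\rho\otimes\rho]=0$ and then restrict $\rho\otimes\rho=\sum_E\mu_E P_E$ to the blocks ${\rm ran}(P_e)\otimes{\rm ran}(P_{e'})$ --- this is arguably cleaner and avoids the paper's slightly opaque appeal to Lemma~\ref{nondeg}, whose nondegeneracy hypothesis is not actually verified there.
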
 

\begin{proof}  
Recall that by condition {\it (iv)} in the definition of a collision specification, $\cQ$ commutes with the map $X \mapsto VXV^*$ 
where $V$ is the swap transformation on
$\H_2$ given by $V \phi\otimes \psi = \psi\otimes \phi$ for all $\phi,\psi\in \H$. Therefore, for all bounded $A$ on $\H$,
\begin{multline*}
\tr[A\otimes \one_\H \cQ(\rho \otimes \rho)] = 
\tr[V  \one_\H \otimes A  V^*  \cQ(\rho \otimes \rho)] =\\ \tr[  \one_\H \otimes A   \cQ(V \rho \otimes \rho V^*)]  = 
\tr[  \one_\H \otimes A   \cQ( \rho \otimes \rho)] 
\end{multline*}
Consequently, 
\begin{equation}\label{twoway}
\rho\star \rho = \tr_2[\cQ(\rho\otimes \rho)] =  \tr_1[\cQ(\rho\otimes \rho)] \ .
\end{equation}
By the subadditivtiy of the entropy,
\begin{equation}\label{twowayB}
2S(\rho\star\rho) \geq S(\cQ(\rho\otimes \rho))\ , 
\end{equation}
and then by \eqref{twoway} and the  concavity of the von Neumann entropy
\begin{eqnarray}
2S(\rho\star \rho)  &\geq & -\tr\left[ \cQ (\rho\otimes \rho) \log (\cQ (\rho\otimes \rho))\right]\nonumber \\
&\geq& -\tr\left[ (\rho\otimes \rho) \log (\rho\otimes \rho)\right] = 2S(\rho)\ .
\end{eqnarray}
Since the von Neuman entropy is {\em strictly concave}, there is equality  if and only if 
$\sigma \mapsto U(\sigma) \rho \otimes \rho U^*(\sigma)$ is constant almost everywhere with respect to $\nu$, and then by the continuity of $\sigma \mapsto U(\sigma)$ and the fact that $u(\sigma_0) = \one_{\H_2}$, this means that
$$\sigma \mapsto U(\sigma) \rho \otimes \rho U^*(\sigma) = \rho\otimes \rho$$
for all $\sigma$. By the ergodicity, this means that $\rho\otimes \rho\in \cA_2$. 

Therefore, when $\rho \star \rho = \rho$, it is also the case that $\rho \widehat{\star}\rho = \rho$ where
$\rho \widehat{\star}\rho  = \tr_2[ {\rm E}_{\cA_2}(\rho \otimes \rho)]$ 
is the Wild convolution corresponding to the uniform average over all of the unitaries commuting with 
$H_2$. In this case, Lemma~\ref{nondeg} applies and 
$$\rho \otimes \rho = \rho \widehat{\star}\rho   = \cD_h(\rho)\ .$$
Hence if $\{\psi_j\}_{j\in \cJ}$ is an orthonormal basis of $\H$ consisting of eigenfunctions of $h$, then for some sequence
$\{\mu_j\}_{j\in \cJ}$, 
$$\rho = \sum_{j\in \cJ} \mu_j |\psi_j \bk \psi_j|\ .$$
It then follows that 
$$\rho\otimes \rho =  \sum_{j,k\in \cJ} \mu_j\mu_k  |\psi_j\otimes \psi_k \bk \psi_j\otimes \psi_k|\ .$$
for the right hand side to belong to $\cA_2$, it is necessary and sufficient that whenever $e_j+e_k = e_\ell + e_m$, then 
$\mu_j\mu_k = \mu_\ell \mu_m$. Taking $m=k$ for some $k$ such that $\mu_k \neq 0$, we see that $e_j = e_\ell$ implies that
$\mu_j = \mu_\ell$, and thus $\rho$ has the expansion in the form (\ref{rhospecres}) and then by the same reasoning once more we obtain (\ref{rhospecres2})
\end{proof}

Theorem~\ref{steadystate} says in particular that if $\rho$ is a steady state solution of the QKBE for an ergodic collision specification, then $\rho = f(h)$ for some real valued function on ${\rm Spec}(h)$.   This may be the only restriction. If $h$ is such that
whenever $e_j+e_k = e_\ell + e_m$ then either $e_j = e_\ell$ and $e_k = e_m$ or else $e_j = e_m$ and $e_k = e_\ell$, then
(\ref{rhospecres2}) imposes no restriction, and indeed, we have seen that in this case, if $\rho = f(h)$, so that $\rho = \cD_h(\rho)$,
then $\rho\star \rho - \rho$. 

On the other hand, suppose $h$ has evenly spaced eigenvalues and there are at least three of them. To be specific, suppose that 
${\rm dim}(\H) = n\geq 3$, and ${\rm Spec}(h) = \{ 0, 1,\dots,n-1\}$. Then for each $j=1,\dots n-2$, $e_{j-1}+ e_{j+1} = 2e_j$, and hence $\lambda_{e_j} = \sqrt{ \lambda_{e_{j-1}}\lambda_{e_{j+1}}}$. This means that for some $\beta\in \R$,
$\rho = Z_\beta^{-1}e^{-\beta h}$. (In finite dimension, negative temperatures are allowed.) In general, the more ways a given eigenvalue $E$ of $H_2$ can be written as a sum of eigenvalues of $h$, the more constraints there are on the set of steady state solutions of the QKBE.

\begin{defi}[Steady states and collision invariants] Let $h$ be a self adjoint operator on $\H$ that is bounded below hand has a compact resolvent. The
set $\Dens_{\infty,h}(\H)$ consists of those $\rho \in \Dens(\H)$ such  that (\ref{rhospecres}) and (\ref{rhospecres2}) are satisfied.
The set $\Dens_{\infty,h}(\H)^\circ$ consist of those $\rho \in \Dens_{\infty,h}(\H)$ that are strictly positive. The set of {\em collision invariants} is the set of self adjoint operators $A$ of the form $A = \log \rho$, $\rho \in \Dens_{\infty,h}(\H)^\circ$.
\end{defi}

The term ``collision invariant'' is justified by the next theorem.

\begin{thm}\label{colinvthm}  Let $\rho_\infty \in \Dens_{\infty,h}(\H)^\circ$. Then for all $\rho\in \Dens(\H)$,
\begin{equation}\label{conserve}
\tr[ \log(\rho_\infty) \rho]  =   \tr[ \log(\rho_\infty) \rho\star \rho]\ .
\end{equation}
In particular, for every solution $\rho(t)$ of the QKBE,  and every collision invariant $A$, $\tr[A \rho(t)]$ is independent of $t$. Moreover, for each $\rho_\infty \in \Dens_{\infty,h}(\H)$ the relative entropy $D(\rho(t)\|\rho_\infty)$ is strictly monotone decreasing along any solution that is not a steady state solution. 
\end{thm}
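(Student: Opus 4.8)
The plan is to prove the three assertions of Theorem~\ref{colinvthm} in the order they are stated, everything resting on the conservation identity \eqref{conserve}; the invariance of $\tr[A\rho(t)]$ and the strict monotonicity of the relative entropy then each follow by a one-line computation against the QKBE $\dot\rho=2(\rho\star\rho-\rho)$, so the monotonicity (the ``final statement'') is really a corollary of \eqref{conserve}.

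\emph{Step 1: the identity \eqref{conserve}.} Write $g:=\log\rho_\infty$. By Theorem~\ref{steadystate}, $\rho_\infty=\sum_e\lambda_eP_e$ with all $\lambda_e>0$, so $g=g(h):=\sum_e(\log\lambda_e)P_e$ is a function of $h$; the content of \eqref{rhospecres2} is precisely that the assignment $E\mapsto \log\lambda_{e_i}+\log\lambda_{e_j}$ (for any splitting $E=e_i+e_j$ into eigenvalues of $h$) is a well-defined function $G$ on ${\rm Spec}(H_2)$, and comparing both sides on the product basis $\psi_a\otimes\psi_b$ gives $g(h)\otimes\one_\H+\one_\H\otimes g(h)=G(H_2)$. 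In particular this operator commutes with every $U(\sigma)$ and is therefore fixed by $\cQ$. Using the two-way formula $\rho\star\rho=\tr_2[\cQ(\rho\otimes\rho)]=\tr_1[\cQ(\rho\otimes\rho)]$ from \eqref{twoway} — which holds for arbitrary $\rho$, since its proof used only condition {\it (iv)} and $V(\rho\otimes\rho)V^*=\rho\otimes\rho$ — together with the self-adjointness of $\cQ$ on $\cT_2(\H_2)$, I would compute $2\tr[g(h)(\rho\star\rho)]=\tr[(g(h)\otimes\one_\H+\one_\H\otimes g(h))\,\cQ(\rho\otimes\rho)]=\tr[\cQ(g(h)\otimes\one_\H+\one_\H\otimes g(h))\,(\rho\otimes\rho)]=\tr[(g(h)\otimes\one_\H+\one_\H\otimes g(h))(\rho\otimes\rho)]=2\tr[g(h)\rho]$, which is \eqref{conserve}.

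\emph{Step 2: conservation and monotonicity.} Invariance of $\tr[A\rho(t)]$ for a collision invariant $A=\log\rho_\infty$ is then immediate: $\frac{{\rm d}}{{\rm d}t}\tr[A\rho(t)]=2\tr[A(\rho(t)\star\rho(t)-\rho(t))]=0$ by \eqref{conserve}. For the relative entropy I would write $D(\rho(t)\|\rho_\infty)=-S(\rho(t))-\tr[\rho(t)\log\rho_\infty]$; the second term is $t$-independent by the invariance just proved, so (using $\tr\dot\rho=0$) $\frac{{\rm d}}{{\rm d}t}D(\rho(t)\|\rho_\infty)=-\frac{{\rm d}}{{\rm d}t}S(\rho(t))=\tr[\dot\rho(t)\log\rho(t)]=2\tr[(\rho(t)\star\rho(t))\log\rho(t)]+2S(\rho(t))$. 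Now Klein's inequality (positivity of $D(\rho\star\rho\|\rho)$) gives $\tr[(\rho\star\rho)\log\rho]\le\tr[(\rho\star\rho)\log(\rho\star\rho)]=-S(\rho\star\rho)$, strictly unless $\rho\star\rho=\rho$, and the subadditivity-plus-concavity estimate already contained in the proof of Theorem~\ref{steadystate} gives $S(\rho\star\rho)\ge S(\rho)$. Chaining these, $\frac{{\rm d}}{{\rm d}t}D(\rho(t)\|\rho_\infty)\le -2(S(\rho(t)\star\rho(t))-S(\rho(t)))\le 0$, with the first inequality strict whenever $\rho(t)\star\rho(t)\ne\rho(t)$, i.e. whenever $\rho(t)$ is not a steady state. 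Finally, uniqueness for the QKBE (the integral form \eqref{colop2}) shows that a non-stationary initial datum never becomes stationary, so the derivative is strictly negative for all $t$; this is the claimed strict monotonicity.

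\emph{Main obstacle.} The argument is algebraically short; the real work is analytic bookkeeping when $\H$ is infinite-dimensional. One must justify $\frac{{\rm d}}{{\rm d}t}\tr[\rho(t)\log\rho(t)]=\tr[\dot\rho(t)\log\rho(t)]$ and the finiteness of the traces appearing above — standard, via the differentiability theory of the von Neumann entropy along trace-norm differentiable curves, provided one restricts to data with $D(\rho(0)\|\rho_\infty)<\infty$. For a general $\rho_\infty\in\Dens_{\infty,h}(\H)$ that is not strictly positive, $\log\rho_\infty$ is defined only on ${\rm supp}\,\rho_\infty$; here one first checks that ${\rm supp}\,\rho(t)\subset{\rm supp}\,\rho_\infty$ is propagated by the QKBE when $D(\rho(0)\|\rho_\infty)<\infty$, and then runs the preceding argument inside that subspace, on which $\rho_\infty\in\Dens_{\infty,h}(\H)^\circ$.
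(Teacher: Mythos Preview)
Your proof is correct and follows essentially the same route as the paper's. For \eqref{conserve} the paper argues that $\rho_\infty\otimes\rho_\infty\in\cA_2$ (this was established in the proof of Theorem~\ref{steadystate}), hence $\log(\rho_\infty\otimes\rho_\infty)\in\cA_2$ is fixed by $\cQ$, and then uses self-adjointness of $\cQ$ together with \eqref{twoway} exactly as you do; your explicit identification $g(h)\otimes\one_\H+\one_\H\otimes g(h)=G(H_2)$ via \eqref{rhospecres2} is the same fact written out. For the monotonicity the paper simply says that $-S(\rho(t))$ is strictly decreasing by the earlier entropy argument, while you make this explicit by decomposing $\tfrac{\rm d}{{\rm d}t}S(\rho(t))=2D(\rho\star\rho\|\rho)+2(S(\rho\star\rho)-S(\rho))$ via Klein's inequality; this is a nice clarification of what the paper leaves implicit. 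Your treatment of the ``never becomes stationary'' point via backward uniqueness of the ODE is a clean alternative to the paper's appeal to the Wild sum, and your remarks on the infinite-dimensional and non-strictly-positive $\rho_\infty$ cases go slightly beyond what the paper writes.
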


\begin{proof} Recall that as a consequence of condition {\it (iv}) in the definition of a collision specification, we always have
(\ref{twoway}). Therefore,
$$2 \tr[\log (\rho_\infty)\rho\star \rho ] = \tr[\log(\rho_\infty \otimes \rho_\infty) \cQ( \rho \otimes \rho)] = 
\tr[\cQ(\log(\rho_\infty \otimes \rho_\infty)  )\rho \otimes \rho]\ .$$
However, since $\rho_\infty \otimes \rho_\infty \in cA_2$, $\log(\rho_\infty \otimes \rho_\infty+ \epsilon \one_{\H_2}))\in \cA_2$ 
for all $\epsilon >0$ (this is only necessary if $\rho_\infty$ has zero as an eigenvalue) so that 
$$\cQ(\log(\rho_\infty \otimes \rho_\infty + \epsilon \one_{\H_2}))  = \log(\rho_\infty \otimes \rho_\infty + \epsilon \one_{\H_2})\ .$$
Therefore,
$$2 \tr[\log (\rho_\infty)\rho\star \rho ] = \tr[\log(\rho_\infty \otimes \rho_\infty) \rho \otimes \rho] = 2 \tr[\log(\rho_\infty)\rho]\ .$$

Now note that 
$$D(\rho(t)\|\rho_\infty)  = \tr[\log(\rho(t))\rho(t)] - \tr[\log(\rho_\infty)\rho(t)] \ ,$$
and we have already seen that due to the strict convexity of $t\mapsto t \ln t$, $t \mapsto \tr[\log(\rho(t))\rho(t)]$ is strictly decreasing at each $t$ unless $\rho(t)$ is a steady state. Finally, from the Wild sum representation of $\rho(t)$, it is clear that unless $\rho(0)$ is a steady state, $\rho(t)$ is not a steady state for any finite $t$. 
\end{proof}

\begin{remark}  When $\Dens_{\infty,h}(\H)$ consists only of the Gibbs states 
$Z_\beta e^{-\beta h}$, then Theorem~\ref{colinvthm}
provides only one conservation law, namely that $\tr[h \rho(t)]$ is constant so that the energy is conserved. This is the familiar situation with the classical Kac-Boltzmann equation. However, we have seen that if every $E\in {\rm Spec}(H_2)$ is the sum of a single (unordered) pair of eigenvalues of $h$, then  $\Dens_{\infty,h}(\H)$ consists of all $\rho \in \Dens_h(\H)$ such that $\rho$ commutes with $h$. This means that the diagonal entries of $\rho$ in an eigenbasis for $h$ are conserved, as we have seen.
\end{remark}

We are now ready to study the basins of attraction of the steady states. Fix some $\rho_\infty \in \Dens_{\infty,h}(\H)$. Let $\rho(t)$ be the solution with $\rho(0) = \rho_0\in \Dens(\H)$. For $\lim_{t\to\infty}\rho(t) = \rho_\infty$
so be valid in the topology making all of these functionals continuous, we require that for all $\widehat \rho_\infty \in 
\Dens_{\infty,h}(\H)^\circ$ (so that 
$\widehat \rho_\infty $ has no zero eigenvalues), 
$$\tr[\log (\widehat \rho_\infty ) \rho_0] = \tr[\log (\widehat \rho_\infty ) \rho_\infty] \ .$$
We are interested in conditions under which this is also a sufficient condition.  

\subsection{The linearized QKBE}

We begin the investigation of the long-time behavior of solutions of the QKBE  by linearizing it in the vicinity of a steady state. This has to be done with some care: To obtain a purely dissipative  linear equation on a Hilbert space, we must choose the inner product to reflect some dissipative feature of the non-linear equation. This means the inner product must ultimately derive from the dissipativity of relative entropy for the QKBE. 

We briefly recall the  linearization of the classical KBE. Let $M(v) = (2\pi)^{-1/2}e^{-|v|^2/2}$ be the steady state about which we shall linearize. Let $\rho(v)$ be a probability density on $\R$ such that 
\begin{equation}\label{classicalex1}
\int_{\R} \rho(v) v^2{\rm d}v = \int_{\R} M(v) v^2{\rm d}v =1\ ,
\end{equation}
so that $\rho$ has the same conserved energy as the steady state $M$.  Let $\rho(t)$ denote the solution of the classical KBE with initial data $\rho$. Suppose that $S(\rho)$ is finite. 
(In the classical case, this means $S(\rho) > -\infty$ since $S(\rho) \leq S(M)$.)

We now write 
\begin{equation}\label{classicalex2}
\rho = M(1+f)
\end{equation}
where ${\displaystyle \int_{\R} v^2 f(v)M(v){\rm d}v =0}$.  Assuming that $f$ is ``small'', we make the expansion
$$S(\rho) = S(M(1+f)) = -\int_{\R}[\log M + \log (1+f)] M(1+f){\rm d}v = S(M)  - \frac12 \int_{\R}f^2 M{\rm d}v \ .$$
The Hilbert space we use to linearize the Kac-Boltzmann equations then is $L^2(\R,M(v){\rm d}v)$ and the perturbation $\rho$ of $M$ is written in the form $\rho = M(1+f)$, with $f\in L^2(\R,M(v){\rm d}v)$ because then for $f$ small in this Hilbert space,
$\tfrac12 \|f\|_{L^2(\R,M(v){\rm d}v)}$ is the second order approximation of $D(M(1+f)\|M)$.  Using this scheme to linearize the KBE yields a purely dissipative linear equation in $L^2(\R,M(v){\rm d}v)$ because of the close connection between the 
Hilbert space $L^2(\R,M(v){\rm d}v)$ and the Hessian of $S(\rho)$ and hence $D(\rho\|M)$. 

We seek to follow this model in the quantum case, but we must  take into account that due to non-commutativity, there are many natural analogs of $L^2(\R,M(v){\rm d}v)$ when we replace $M$ be a density matrix  on $\H$. Consideration of the entropy leads, as above, to the useful analog.

For the rest of this section, to postpone technical difficulties, we suppose that $\H$ is finite dimensional. 
If $\rho\in \Dens(\H)$ is strictly positive, and $A$ is self-adjoint in $\cB(\cH)$, and then if $\tr[A] =0$,  $\rho+tA \in \Dens(\H)$ for all $|t|$ sufficiently small. Then
$$\frac{{\rm d}}{{\rm d}t} S(\rho+ tA)\bigg|_{t=0} = \tr[ \log (\rho) A]\ .$$
Moreover,
$$\frac{{\rm d}}{{\rm d}t} \log( \rho + tA) \bigg|_{t=0} = \int_0^\infty \frac{1}{s\one_\H + \rho} A  \frac{1}{s\one_\H + \rho}{\rm d}s\ .$$

For any positive operator $B \in \cB(\H)$, define the linear map $[B]^{-1}:\cB(\H) \to \cB(\H)$ by
$$[B]^{-1}A = \int_0^\infty \frac{1}{s\one_\H + B} A \frac{1}{s\one_\H + B} \ .$$
This is a non-commutative version of ``division by $B$''.  The inverse operation, $[B]:\cB(\H) \to \cB(\H)$
is given by
$$[B]A = \int_0^1 B^s A B^{1-s} {\rm d} s\ ,$$
and this is a non-commutative version of ``multiplication by $B$''. 

The computations made above show that the Hessian of $\rho \mapsto S(\rho)$ at $\rho$ is given by the quadratic form
$\tr[ A [\rho]^{-1} A]$. That is, with $\rho$ and $A$ as above,
$$\frac{{\rm d}^2}{{\rm d}t^2} S(\rho+ tA)\bigg|_{t=0} = \tr[ A [\rho]^{-1} A]\ .$$

The {\em Bogulioubov-Kubo-Mori} inner product on $\cB(\H)$ with reference state $\rho\in \Dens(\H)$ is the inner product
$\langle \cdot ,\cdot\rangle_{BKM}$ given by
$$\langle A ,B\rangle_{BKM} = \tr[A^* [\rho]B]\ .$$

We are now ready to linearize.  Fix some strictly positive steady state $\rho_\infty$.  Choose some $\rho$ in the possible basin of attraction of $\rho_\infty$ that is ``close'' to $\rho_\infty$.  That is, for all strictly positive steady states $\widehat \rho_\infty$,
$$\tr[ \log(\widehat \rho_\infty)\rho] = \tr[ \log(\widehat \rho_\infty)\rho_\infty] \ .$$
This is the quantum analog of (\ref{classicalex1}).

Define a self-adjoint operator $A$ by $A = [\rho_\infty]^{-1}(\rho - \rho_\infty)$ so that 
\begin{equation}\label{linearize2}
\rho = [\rho_\infty](1 + A)\ . 
\end{equation}
This is the direct analog of (\ref{classicalex2}). 

We now apply this to $\rho(t) =  [\rho_\infty](1 + A(t))$ and discard the terms that are quadratic in $A(t)$ in the QKBE. We obtain:
\begin{equation}
\frac{{\rm d}}{{\rm d}t}A(t) = 
2\left(  [\rho_\infty]^{-1} [ \rho_\infty \star ([\rho_\infty] A(t)) + ([\rho_\infty] A(t))  \star  \rho_\infty ] - [\rho_\infty]A(t)\right)
\ .
\end{equation}

\begin{defi} For a strictly positive steady state $\rho_\infty$,  the {\em linearized QKBE} operator is the operator $\aK$ on $\cB(\H)$ defined by
$$\aK  X = 2\left( [\rho_\infty]^{-1}[ \rho_\infty \star X  + X  \star  \rho_\infty ]  - X\right)\ .$$
The linearized QKBE at $\rho_\infty$ is the equation
$$
\frac{{\rm d}}{{\rm d}t} X(t) = \aK X(t)\ .
$$
\end{defi}

\begin{thm}\label{collinv}  Let $\aK$ be the linearized Kac-Boltzmann operator at a steady state $\rho_\infty$. Let $\langle \cdot, \cdot\rangle_{BKM}$ be the corresponding inner product on $\cB(\H)$. Then for all $A,B\in \cB(\H)$,
\begin{equation}\label{lindis}
\langle B, \aK  A\rangle_{BKM} = \langle \aK B,  A\rangle_{BKM} \quad{\rm and} \quad   \langle  A, \aK A\rangle_{BKM} \leq 0\ .
\end{equation}
Moreover $ \langle  A, \aK A\rangle_{BKM} =0$ if and only if $A$ is in the linear span of the collision invariants. 
\end{thm}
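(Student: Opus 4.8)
The plan is to rewrite the Bogoliubov–Kubo–Mori (BKM) bilinear form of $\aK$ as minus a Dirichlet form of the collision operator $\cQ$ on a two‑particle lift of its arguments, and then read off the symmetry, the sign, and the equality case from properties of $\cQ$ already at hand. Since this section assumes $\H$ finite dimensional, $\cT_2(\H_2)=\cB(\H_2)$; write $\Lambda_1:=[\rho_\infty]$ on $\cB(\H)$ and $\Lambda_2:=[\rho_\infty\otimes\rho_\infty]$ on $\cB(\H_2)$ for the BKM multiplications, $\langle X,Y\rangle_{BKM,2}:=\tr_{12}[X^*\Lambda_2 Y]$ for the two‑particle BKM inner product, and $\widehat X:=X\otimes\one_\H+\one_\H\otimes X$ for $X\in\cB(\H)$. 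It suffices to treat the perturbation variable $A$, which always satisfies $\tr[\rho_\infty A]=0$; as a preliminary one checks $\tr[\rho_\infty\,\aK A]=2\tr[\rho_\infty A]$, so this hyperplane is $\aK$‑invariant. Two elementary facts will be used repeatedly. First, since $\rho_\infty$ is a steady state it lies in $\Dens_{\infty,h}(\H)$ by Theorem~\ref{steadystate}, so $\rho_\infty\otimes\rho_\infty$ is a function of $H_2$ and hence belongs to $\cA_2\subset\{U(\sigma):\sigma\in\aC\}'$; therefore every $U(\sigma)$ commutes with $(\rho_\infty\otimes\rho_\infty)^s$ for all $s$, and consequently $\cQ$ commutes with $\Lambda_2$. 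Second, from $(\rho_\infty\otimes\rho_\infty)^s=\rho_\infty^s\otimes\rho_\infty^s$ one gets directly $\Lambda_2\widehat X=(\Lambda_1 X)\otimes\rho_\infty+\rho_\infty\otimes(\Lambda_1 X)$, and, as a corollary, $\langle\widehat X,\widehat X\rangle_{BKM,2}=2\,\tr[X^*\Lambda_1 X]+2|\tr[\rho_\infty X]|^2$, so on the hyperplane $\{\tr[\rho_\infty X]=0\}$ one has $\langle\widehat B,\widehat A\rangle_{BKM,2}=2\langle B,A\rangle_{BKM,1}$.

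The heart of the proof is the identity
$$\langle B,\aK A\rangle_{BKM}=-\big\langle\widehat B,(\one-\cQ)\widehat A\big\rangle_{BKM,2}\qquad\big(\tr[\rho_\infty A]=\tr[\rho_\infty B]=0\big).$$
To obtain it I would unwind $\aK$ by applying $\Lambda_1$, so that $\langle B,\aK A\rangle_{BKM}=2\,\tr\!\big[B^*\big(\rho_\infty\star(\Lambda_1 A)+(\Lambda_1 A)\star\rho_\infty\big)\big]-2\,\tr[B^*\Lambda_1 A]$. Next I would use condition {\it (iv)} of Definition~\ref{cfdef} (swap‑invariance of $\cQ$) in the form $\rho_\infty\star C=\tr_{12}[\cQ(C\otimes\rho_\infty)]$ partial‑traced over the first, resp. $C\star\rho_\infty$ over the second, factor, with $C:=\Lambda_1 A$, to turn the first trace into $\tr_{12}[\widehat B^*\,\cQ(C\otimes\rho_\infty)]$; a second application of swap‑invariance together with $V\widehat B^* V^*=\widehat B^*$ symmetrizes this to $\tfrac12\tr_{12}[\widehat B^*\,\cQ(C\otimes\rho_\infty+\rho_\infty\otimes C)]$, where by the second elementary fact $C\otimes\rho_\infty+\rho_\infty\otimes C=\Lambda_2\widehat A$. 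Commuting $\cQ$ past $\Lambda_2$ turns this into $\tfrac12\langle\widehat B,\cQ\widehat A\rangle_{BKM,2}$, so the first trace contributes $\langle\widehat B,\cQ\widehat A\rangle_{BKM,2}$; and by the corollary above the second trace contributes $\langle\widehat B,\widehat A\rangle_{BKM,2}$, which gives the identity. This computation — the interplay of the one‑ and two‑particle BKM multiplications, the two uses of the swap symmetry, and the bookkeeping by which $2\tr[B^*\Lambda_1 A]$ coincides exactly with $\langle\widehat B,\widehat A\rangle_{BKM,2}$ once the trace of $\rho_\infty A$ is dropped — is the step I expect to be the main obstacle; everything after it is routine.

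From the identity the three conclusions follow quickly. Since $\cQ$ is a symmetric quantum Markov operator it is $\langle\cdot,\cdot\rangle_{HS}$‑self‑adjoint, and, commuting with $\Lambda_2$, also $\langle\cdot,\cdot\rangle_{BKM,2}$‑self‑adjoint; hence $\one-\cQ$ is $\langle\cdot,\cdot\rangle_{BKM,2}$‑self‑adjoint, and the identity yields $\langle B,\aK A\rangle_{BKM}=\langle\aK B,A\rangle_{BKM}$, the first relation in \eqref{lindis}. By Lemma~\ref{specL} applied with $N=2$ (where $\cQ_2=\cQ$), $\mathrm{Spec}(\cQ)\subset(0,1]$, so $\one-\cQ\ge 0$ in the BKM metric and $\langle A,\aK A\rangle_{BKM}=-\langle\widehat A,(\one-\cQ)\widehat A\rangle_{BKM,2}\le 0$, the second relation in \eqref{lindis}. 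Equality forces $(\one-\cQ)\widehat A=0$, i.e. $\cQ\widehat A=\widehat A$; by ergodicity together with Lemma~\ref{convlem} (with $N=2$) the eigenspace of $\cQ$ for eigenvalue $1$ is exactly $\cA_2$, so equality holds iff $\widehat A\in\cA_2$. It remains to match $\{A:\widehat A\in\cA_2\}$ with the linear span of the collision invariants: if $A=\log\rho$ with $\rho=\sum_e\lambda_e P_e\in\Dens_{\infty,h}(\H)^\circ$, then $\widehat A$ is diagonal in $\{\Psi_\bma\}$ with $\psi_i\otimes\psi_j$‑eigenvalue $\log\lambda_{e_i}+\log\lambda_{e_j}$, which by \eqref{rhospecres2} depends only on $e_i+e_j$, so $\widehat A=f(H_2)\in\cA_2$; conversely, if $\widehat A\in\cA_2$ then $\widehat A$ is diagonal in the $\{\psi_i\otimes\psi_j\}$ basis, which forces $A$ diagonal in $\{\psi_j\}$, and constancy of the eigenvalues of $\widehat A$ on the eigenspaces of $H_2$ (taking two indices equal) forces the eigenvalues of $A$ to depend only on the $h$‑energy and to satisfy \eqref{rhospecres2}, so $A$ lies in the span of the collision invariants. (If one wishes $A$ to range over all of $\cB(\H)$ rather than over perturbations, the $\one_\H$‑direction, on which $\aK\one_\H=2\one_\H$, must be excised first; on the hyperplane $\tr[\rho_\infty A]=0$ this does not arise.)
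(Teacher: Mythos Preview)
Your proof is correct and follows essentially the same route as the paper: both lift the one-particle BKM form to the two-particle level via $\widehat A = A\otimes\one_\H + \one_\H\otimes A$, use that $\rho_\infty\otimes\rho_\infty\in\cA_2$ commutes with every $U(\sigma)$, and conclude from the resulting Dirichlet-type form. The only real difference is packaging: the paper carries the average over $\aC$ explicitly and completes the square to arrive at
\[
\langle B,\aK A\rangle_{BKM} = -\tfrac14\int_\aC\!{\rm d}\nu\,\tr_{\H_2}\!\big[(\widehat B-U\widehat BU^*)^*[\rho_\infty\!\otimes\!\rho_\infty](\widehat A-U\widehat AU^*)\big],
\]
from which the sign and the equality case are read off pointwise in $\sigma$, whereas you compress the same computation into $\langle B,\aK A\rangle_{BKM}=-\langle\widehat B,(\one-\cQ)\widehat A\rangle_{BKM,2}$ and then invoke the spectral information ${\rm Spec}(\cQ)\subset(0,1]$ from Lemma~\ref{specL} together with Lemma~\ref{convlem} to identify the eigenspace at $1$. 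Your version is a bit cleaner and recycles earlier lemmas rather than redoing the square; the paper's version is more self-contained. You also spell out the final matching step --- that $\widehat A\in\cA_2$ is equivalent to $A$ lying in the span of the collision invariants --- more carefully than the paper, which simply asserts it. Your parenthetical remark about the $\one_\H$-direction is well taken; the theorem is really a statement on the hyperplane $\tr[\rho_\infty A]=0$, and the paper tacitly works there throughout.
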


\begin{proof}
From the definition,
\begin{equation}\label{linearize3}
\langle B, \aK A\rangle_{BKM} = 2\tr[ B^*(  \rho_\infty \star ([\rho_\infty] A) + ([\rho_\infty] A)  \star  \rho_\infty ]] - \tr[B^*[\rho]A]\ .
\end{equation}
We now compute
\begin{eqnarray}\label{linearize5}
\tr_\H[ B^*(  ([\rho_\infty] A)\star  \rho_\infty ] 
&=&\int_{0}^1  \tr_{\H_2}[ (B\otimes \one_\H)(\rho_\infty^s A \rho_\infty^{1-s}\star \rho_\infty)]{\rm d}s \nonumber\\
&=&\int_{0}^1  \tr_{\H_2}[ (B\otimes \one_\H)\cQ(\rho_\infty^s A \rho_\infty^{1-s}\otimes  \rho_\infty)]{\rm d}s \nonumber\\
&=&\int_{0}^1  \tr_{\H_2}[ (B\otimes \one_\H)\cQ((\rho_\infty\otimes \rho_\infty)^s A\otimes \one_{\H} 
(\rho_\infty\otimes \rho_\infty)^{1-s}]{\rm d}s \nonumber
\end{eqnarray}
since each $U(\sigma)$ commutes with $\rho_\infty \otimes \rho_\infty$, it follows that
$$\tr_\H[ B^*(  ([\rho_\infty] A)\star  \rho_\infty ]  = 
\tr_\H\left[B^* \tr_2\left[\int_{0}^1(\rho_\infty\otimes \rho_\infty)^s  \cQ(A \otimes \one_\H)
(\rho_\infty\otimes \rho_\infty)^{1-s}\right]{\rm d}s\right]\ .$$
A similar computation shows that
$$\tr_\H[ B^*(  \rho_\infty \star ([\rho_\infty] A)]  = 
\tr_\H\left[B^* \tr_2\left[\int_{0}^1(\rho_\infty\otimes \rho_\infty)^s  \cQ( \one_\H \otimes A)
(\rho_\infty\otimes \rho_\infty)^{1-s}\right]{\rm d}s\right]\ .$$
This gives us an alternate expression for $\aK$:
\begin{equation}\label{alternate}
\aK A = [\rho_\infty]^{-1} \left( [\rho_\infty \otimes \rho_\infty] \cQ(A \otimes \one_\H +  \one_\H \otimes A)\right) - A\ .
\end{equation}
It is now easy to see from the computations above that
\begin{equation}\label{linearize4}
\langle B, \aK A\rangle_{BKM} = \langle  \aK B, A\rangle_{BKM}\  .
\end{equation}
To display the fact that  
\begin{equation}\label{linearize5}
\langle A, \aK A\rangle_{BKM} \leq 0\  ,
\end{equation}
and to identify the null space of $\aK$, it is useful to express $\aK$ in yet one more form in which the the constituents of the operator $\cQ$ are written out explicitly which permits further symmetrization.  Taking advantage of the fact that 
$$[\rho_\infty \otimes \rho_\infty](A \otimes \one_\H + \one_H\otimes A) = [\rho_\infty]A\otimes \rho_\infty + 
\rho_\infty \otimes   [\rho_\infty]A\ ,$$
we have
$$\tr_2\left[  [\rho_\infty \otimes \rho_\infty](A \otimes \one_\H + \one_H\otimes A)\right] = [\rho_\infty]A$$
since $\tr[[\rho_\infty]A]=0$.
Therefore, we can rewrite (\ref{alternate}) as
\begin{multline}\label{alternate2}
[\rho_\infty] \aK A = \\
\tr_2\left[ \int_{\aC}{\rm d}\nu  [\rho_\infty \otimes \rho_\infty] \left(U(\sigma)[A \otimes \one_\H +  \one_\H \otimes A] U^*(\sigma) - [A \otimes \one_\H +  \one_\H \otimes A]\right)\right] \ .
\end{multline}
To shorten the expression that follow, we temporarily introduce the notation ${\mathcal A} = A \otimes \one_\H +  \one_\H \otimes A$
and ${\mathcal B} = B \otimes \one_\H +  \one_\H \otimes B$ for $A,B\in \cB(\H)$. 
Then since $\langle B,\aK A\rangle_{BKM} = \tr [B^* [\rho_\infty] \aK A]$, we have that 
\begin{eqnarray}
\langle B,\aK A\rangle_{BKM} &=& \tr_{\H_2}\left[ B^*\otimes \one_\H \int_{\aC}{\rm d}\nu  [\rho_\infty \otimes \rho_\infty] \left(U(\sigma){\mathcal A} U^*(\sigma) - {\mathcal A}\right)\right]\nonumber\\
&=& \frac12  \int_{\aC}{\rm d}\nu \tr_{\H_2}\left[  {\mathcal B}^* [\rho_\infty \otimes \rho_\infty] \left(U(\sigma){\mathcal A} U^*(\sigma) - {\mathcal A}\right)\right]\nonumber\\
&=& \frac12  \int_{\aC}{\rm d}\nu \tr_{\H_2}\left[ U^*(\sigma) {\mathcal B}^* [\rho_\infty \otimes \rho_\infty] \left(U(\sigma){\mathcal A} U^*(\sigma) - {\mathcal A}\right)U(\sigma)\right]\nonumber\\
&=& \frac12  \int_{\aC}{\rm d}\nu \tr_{\H_2}\left[ U^*(\sigma) {\mathcal B}^*U(\sigma)U^*(\sigma) [\rho_\infty \otimes \rho_\infty] \left(U(\sigma){\mathcal A} U^*(\sigma) - {\mathcal A}\right)U(\sigma)\right]\nonumber\\
&=& \frac12  \int_{\aC}{\rm d}\nu \tr_{\H_2}\left[ (U^*(\sigma) {\mathcal B}U(\sigma))^* [\rho_\infty \otimes \rho_\infty] \left({\mathcal A}  - U^*(\sigma){\mathcal A}U(\sigma)\right)\right]\nonumber\\
&=& \frac12  \int_{\aC}{\rm d}\nu \tr_{\H_2}\left[ (U(\sigma) {\mathcal B}U^*(\sigma))^* [\rho_\infty \otimes \rho_\infty] \left({\mathcal A}  - U(\sigma){\mathcal A}U^*(\sigma)\right)\right]\ .\nonumber
\end{eqnarray}
The second equality is from the invariance under the swap, the third equality  is the unitary invariance of the trace on $\H_2$,
the fourth equality is trivial, the fifth equality  is that $\rho_\infty\otimes \rho_\infty$  commutes with each $U(\sigma)$, and the 
sixth equality is the invariance under the adjoint.   Now averaging the expressions in the second and sixth lines, we have
\begin{multline}\label{comsq}
\langle B,\aK A\rangle_{BKM}  = \\ 
-\frac14 \int_{\aC}{\rm d}\nu \tr_{\H_2}\left[ (\mathcal{B} - U(\sigma) {\mathcal B}U^*(\sigma))^* [\rho_\infty \otimes \rho_\infty] \left({\mathcal A}  - U(\sigma){\mathcal A}U^*(\sigma)\right)\right]\ .
\end{multline}
From this expression it immediately clear that $\langle A,\aK A\rangle_{BKM}  \leq 0$, and there is equality if and only if 
${\mathcal A}$ commutes with each $U(\sigma)$, and hence that ${\mathcal A}$ is a collision invariant. 
\end{proof}

The is a simpler proof of the fact that $ \langle  A, \aK A\rangle_{BKM} \leq 0$ based directly on entropy dissipation. 
However, this is a consequences of the monotonicity of the entropy under the nonlinear QKBE.

To see this, consider a self-adjoint $A$ such that 
$$\langle \log(\widehat \rho_\infty), A\rangle_{BKM} = 0$$
for all strictly positive steady states $\widehat \rho_\infty$. Then for all $u$ with $|u|$ sufficiently small,
$$\rho_u= [\rho_\infty](\one_\H + u A) = \rho_\infty + u [\rho_\infty]A$$
belongs to $\Dens(\H)$ and $\tr[\log(\widehat \rho_\infty)\rho_u] = \tr[\log(\widehat \rho_\infty)\rho_\infty]$
for all strictly positive steady states $\widehat \rho_\infty$.
Let $\rho_u(t)$ be the solution of the QKBE with initial data $\rho_u$. By the entropy production inequality,
for all $u$ with $|u|$ sufficiently small,
$$
\frac{{\rm d}}{{\rm d}t}S(\rho_u(t))\bigg|_{t=0} = -2\tr[\log \rho_u (\rho_u\star \rho_u - \rho_u)] \geq 0\ .
$$
That is, for all $u$ with $|u|$ sufficiently small,
$$
\tr[\log \rho_u (\rho_u\star \rho_u)] \leq  \tr[ \rho_u\log( \rho_u)]
$$
Define $f(u) = \tr[\log \rho_u (\rho_u\star \rho_u)]$ and $g(u) = \tr[ \rho_u\log( \rho_u)]$. Evidently $f(0)=  g(0)$. Also,
$$f'(0) = g'(0) = \tr[\log(\rho_\infty)[\rho_\infty] A] = 0\ .$$
It follows that $f''(0) \leq g''(0)$, and doing the computation, this proves (\ref{linearize5}).
However, this approach does not seem to characterize the null space of $\aK$. 

\begin{defi}[Spectral gap of the linearized QKBE]  The number $\Delta_{KB}$ given by
$$\Delta_{KB} = \inf \left\{ \frac{  \langle  A, \aK A\rangle_{BKM}}{ \langle  A,  A\rangle_{BKM}}\ :\   \langle  B, A\rangle_{BKM} =0 \ {\rm for \ all\ collisions\ invariants\ } B\right\}\ .$$

\end{defi}

At this point it is not clear that in general $\Delta_{KB} > 0$ or that $\Delta_{KB}$ is independent of $\rho_\infty$. (Both things are true for the simplest model with $\H = \C^2$ discussed above, we we shall see.)

\subsection{Lyapunov functionals}  

We briefly return to the the subject of convergence to equilibrium for the QKBE.  
McKean proved \cite{McK} a theorem for the classical Kac Boltzmann Equation that explains the special role of the entropy in this theory.  
He showed that the only 
functionals on probability densities on $\R$ of the form 
$\rho \mapsto \int_\R \Phi(\rho(v)){\rm d}v$, with 
$\Phi:(0,\infty)\to \R$,  that are montone for every 
  solution $\rho(v,t)$ of Kac's equation (on which the functional is defined),   are those in which $\Phi$ has the from $\Phi(t) = 
  at \log t + bt$ for some constants $a$ and $b$.  
  
We close this paper by using the results obtained in this section to prove a quantum analog in 
the case in which for each energy $E$ there is a unique invariant state, as in the case considered by McKean. 

\begin{thm}  Let $\Phi:(0,t)\to \R$, and suppose that $\tr[\Phi(\rho(t)]$ is monotone for every solution of the QKBE for some collision specification such that at each energy $E$  there is exactly one invariant state $\rho_\infty$. Then $\Phi$ necessarily has the form 
$\Phi(t) = a t\log  t + bt$ for some constants $a$ and $b$. 
\end{thm}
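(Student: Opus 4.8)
The plan is to adapt McKean's classical argument \cite{McK}. Write $\Lambda_\Phi(\rho):=\tr[\Phi(\rho)]$ for $\rho\in\Dens(\H)$. Differentiating along a solution of the QKBE and using $\frac{\dd}{\dd t}\rho(t)=2(\rho(t)\star\rho(t)-\rho(t))$ together with the chain rule for traces of operator functions gives
$$\frac{\dd}{\dd t}\Lambda_\Phi(\rho(t))=2\,\tr[\Phi'(\rho(t))\,(\rho(t)\star\rho(t)-\rho(t))]=:2\,G_\Phi(\rho(t)).$$
Every $\rho\in\Dens(\H)$ on which $\Phi$ is defined is an admissible initial datum, every solution stays on its energy shell $\{\rho:\tr[h\rho]=E\}$, and (e.g.\ because $D(\rho(t)\|\rho_\infty)$ decreases strictly, by Theorem~\ref{colinvthm}) converges to the unique steady state $\rho_\infty$ of that shell. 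Hence monotonicity of $\Lambda_\Phi$ along every solution is equivalent to $G_\Phi$ having a fixed sign on each energy shell; replacing $\Phi$ by $-\Phi$ if needed, assume $G_\Phi\le 0$. Since $\Phi_0(t)=t\log t$ and $\Phi_1(t)=t$ already give monotone functionals ($\Lambda_{\Phi_1}\equiv 1$; $\Lambda_{\Phi_0}=-S$ is strictly decreasing by Theorem~\ref{colinvthm}), every $\Phi$ of the asserted form works, and only the converse is at issue.

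The first real step is to localize at a steady state. Fix an energy $E$; by Theorem~\ref{steadystate} and the uniqueness hypothesis its steady state $\rho_\infty=Z_\beta^{-1}e^{-\beta h}$ is a strictly positive Gibbs state, and it is an interior point of its (affine) energy shell. Because $G_\Phi\le 0$ on the shell and $G_\Phi(\rho_\infty)=0$ (as $\rho_\infty\star\rho_\infty=\rho_\infty$), $\rho_\infty$ is an interior maximum of $G_\Phi$ on the shell, so the first variation of $G_\Phi$ at $\rho_\infty$ vanishes in every self-adjoint direction $X$ with $\tr[X]=\tr[hX]=0$. Computing it, the $\Phi''(\rho_\infty)$-term drops (it multiplies $\rho_\infty\star\rho_\infty-\rho_\infty=0$), and what remains is $\tr[\Phi'(\rho_\infty)\,(\rho_\infty\star X+X\star\rho_\infty-X)]=0$ for all such $X$; that is, $\Phi'(\rho_\infty)$ is sent, by the linearized collision operator, into the span of $\{h,\one\}$. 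By Theorem~\ref{collinv}, which identifies the null space of the linearized QKBE operator with the span of the collision invariants --- and this span equals $\mathrm{span}\{h,\one\}$ under the uniqueness hypothesis, by Theorem~\ref{steadystate} --- one concludes, after keeping track of the two constraint directions $\one$ (trace) and $h$ (energy), that $\Phi'(\rho_\infty)\in\mathrm{span}\{h,\one\}$. Equivalently, for every $\beta\in\R$ there are scalars $a(\beta),b(\beta)$ with
$$\Phi'\!\big(Z_\beta^{-1}e^{-\beta e}\big)=a(\beta)\,e+b(\beta)\qquad\text{for every eigenvalue }e\text{ of }h.$$

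The last step --- upgrading this one-parameter family of identities to the global form of $\Phi$ --- is the crux, and I expect it to be the main obstacle. For the hypothesis to be non-vacuous here the constraint \eqref{rhospecres2} of Theorem~\ref{steadystate} must force the Gibbs form, which (as in Example~\ref{osc}) essentially means ${\rm Spec}(h)$ is an arithmetic progression $\{0,d,2d,\dots\}$ of length $\ge 3$; I treat that model case. Then $\log\!\big(Z_\beta^{-1}e^{-\beta kd}\big)=-kd\beta-\log Z_\beta$ is affine in the level index $k$, so the displayed identity says that $g:=\Phi'\circ\exp$ is affine on the arithmetic progression $\{-\log Z_\beta-kd\beta:k\ge 0\}$, for every $\beta$. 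As $\beta$ runs over $\R$, these progressions slide over and overlap one another; together with continuity of $\Phi'$ this should force $g$ affine on all of $(-\infty,0)=\{\log t:0<t<1\}$, i.e.\ $\Phi'(t)=a\log t+b'$ on $(0,1)$, hence $\Phi(t)=at\log t+(b'-a)t+c$ there; since the eigenvalues of density matrices lie in $[0,1]$, only the restriction of $\Phi$ to $[0,1]$ enters $\tr[\Phi(\rho)]$ and the constant $c$ shifts $\Lambda_\Phi$ only by a constant, so $\Phi(t)=at\log t+bt$. The delicacy is that, unlike the classical case --- where a single Maxwellian $M_\theta$ already assumes every value in an interval $(0,c]$ on which the corresponding function is affine, making the patching trivial --- here one $\rho_\beta$ gives only a finite progression, so one must genuinely exploit the whole one-parameter family and use continuity (or mild regularity) of $\Phi'$ to conclude affineness on the full interval; a secondary technical point is the bookkeeping in extracting $\Phi'(\rho_\infty)\in\mathrm{span}\{h,\one\}$ from the vanishing first variation, for which Theorem~\ref{collinv} and the $[\rho_\infty]$-weighting used to symmetrize the linearized equation are exactly what is needed.
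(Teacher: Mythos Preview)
Your approach is essentially the same as the paper's: linearize at a steady state $\rho_\infty$, use monotonicity to force $\langle\Phi'(\rho_\infty),\aK A\rangle_{BKM}=0$ for all admissible $A$ (your first-variation-of-$G_\Phi$-at-an-interior-maximum is exactly this computation in different clothing), and invoke Theorem~\ref{collinv} to conclude that $\Phi'(\rho_\infty)$ is a collision invariant. The one noteworthy difference is in the final step you flag as the crux: the paper writes the conclusion as $\Phi'(\rho_\infty)=a\log\rho_\infty+b$ rather than your $\Phi'(\rho_\infty)\in\mathrm{span}\{h,\one\}$, and then dispatches the patching in a single sentence (``as one varies $E$, the eigenvalues of $\rho_\infty$ vary and one must have $\Phi'(x)=a\log x+b$''), without the overlapping-progressions argument you sketch. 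Writing in terms of $\log\rho_\infty$ rather than $h$ makes this step marginally cleaner, since varying $E$ moves the eigenvalues of $\rho_\infty$ directly in the variable of $\Phi'$; but the paper does not explicitly argue constancy of $a,b$ either, so your more careful treatment of this point is a genuine addition rather than a deviation.
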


\begin{proof}  Let $\Phi$ be such a function.  Let 
$\rho_\infty$ be the steady state of the QKBE at a given energy $E$. We may assume that $\tr[\Phi(\rho(t)]$ is monotone decreasing along solutions.

For $\epsilon>0$ let $\rho := [\rho_\infty](1 + \epsilon A) $ be a density matrix that is a small perturbation of $\rho_0$. 
The montonicity implies that
$$\langle  \Phi'([\rho_\infty](1 + \epsilon A)),\aK A\rangle_{BKM} \leq 0 \ .  $$
By continuity it follows that 
$$\langle  \Phi'(\rho_\infty),\aK A\rangle_{BKM} \leq 0 \ ,  $$
and then by Theorem~\ref{collinv}, $\Phi'(\rho_\infty)$ must be a collision invariant.  If $\rho_\infty$ is the only steady state at a given energy $E$, then the equation
$$\Phi'(\rho_\infty) = a \log \rho_\infty + b$$
must be valid for some $a$. As one varies $E$, the eigenvalues of $\rho_\infty$ vary and one must have
$$\Phi'(x) = a \log x + b$$
meaning that $\Phi(x) = a x\log x  + (b-a)x$.  

Hence when there is a unique invariant state for each energy $E$, entropy is the only non-trivial Lyapunov function, exactly as in the classical case examoned by McKean.
\end{proof}

The investigation completed in this paper sets the stage for the investigation of the rates of approach to equilibrium for the QKME, the QBBE, and the linearized QKBE, and the quantitative relations between these rates. The theorem just proved explains why, just as in the classical case, relative entropy will play an important role in this investigation. In the classical setting, Cercignani \cite{Cer}
had conjectured that an inequality bounding the entropy dissipation from below by a constant multiple of the relative entropy would be valid; see \cite{CCL3}. 
This turns out not to be true, either at the level of the Kac-Boltzmann equation \cite{BC} or the Kac Master equation \cite{CCRLV,Einav1} -- though it is valid for some non-physical collision models, and, surprisingly, this can be used to study physical models,
 and in physical models, Cercignani's conjecture is {\em almost} true  \cite{V}. The  known counter-examples use  states in which a very large fraction of the energy is concentrated in a very small number of particles. 
 
 Such states may be easier to rule out   in the quantum setting:  Already in the discussion follwing Corollary~\ref{sepcl}, we have introduced the quantum analog of Cercignani's conjecture.  Certainly in models in which the single particle state space is finite dimensional,  states in which most of the energy is concentrated in a small fraction of the partcles cannot   exist. Quantum entropy production  will be investigated in forthcoming work. The present investigation provides a fairly detailed and complete description of the possible steady states, which sets the stage for this, although even here interesting questions remain open. For example, can one classify the functions $\Phi$ yielding monotone functionals as in the last theorem when the class of collision invariants, and steady states, is large?

 \bigskip

\noindent{\bf Acknowledgements}  Work of E.C was  partially supported by U.S. National Science Foundation
grant DMS 1501007.  Work of M.C. was  partially supported by UID/MAT/04561/2013 and SFRH/BSAB/113685/2015. Work of M.L. 
was partially supported by U.S. National Science Foundation
grant DMS 1600560.   Substantial progress on this work was made while all three authors were at the Mittag-Leffler Institute in the Fall of 2016, and we are grateful for the hospitality and environment there that made this possible.

\end{document}